%% file: sourcefile_manuscript_adjusted.tex
\documentclass[10pt, twocolumn]{revtex4-2}

\usepackage{array}

\usepackage[utf8]{inputenc}
\usepackage[T1]{fontenc}

\usepackage{amsmath, amssymb, amsthm, mathtools}
\usepackage{bm}
\usepackage{braket}
\usepackage{bbold}
\usepackage{siunitx}
\usepackage{cancel}
\usepackage{tensor}
\usepackage{physics}

\usepackage{graphicx}
\usepackage{float}
\usepackage{subcaption}
\usepackage{tikz, pgfplots}
\pgfplotsset{compat=1.18}
\usetikzlibrary{
  positioning, fit, arrows.meta, calc,
  shapes.geometric, arrows,
  decorations.markings
}

\usepackage{xcolor}

\usepackage{enumitem}
\setlist[enumerate]{label=\arabic*.}
\newlist{subenumerate}{enumerate}{2}
\setlist[subenumerate,1]{label=\arabic{subenumeratei}.\arabic*}
\setlist[subenumerate,2]{label=\arabic{subenumeratei}.\arabic{subenumerateii}}

\usepackage{hyperref}
\hypersetup{colorlinks=true, linkcolor=blue, citecolor=blue, urlcolor=blue}
\usepackage{cleveref}

\usepackage{datetime}
\newdateformat{monthyeardate}{\monthname[\THEMONTH] \THEYEAR}

\usepackage{caption}
\tikzstyle{startstop} = [circle, minimum width=3cm, minimum height=1cm,
  text centered, draw=black, fill=blue!30!white]
\tikzstyle{q} = [circle, minimum width=3cm, minimum height=1cm,
  text centered, draw=black, fill=blue!40!white]
\tikzstyle{arrow} = [thick,->,>=stealth]

\definecolor{errPresent}{RGB}{45,53,97}
\definecolor{errCaveat}{RGB}{122,141,192}
\definecolor{errAbsentBg}{RGB}{244,241,234}
\definecolor{errSectionBg}{RGB}{238,234,224}
\definecolor{errHeaderBg}{RGB}{26,26,46}

\newcommand{\symP}{%
  \tikz[baseline=-0.6ex]
    \filldraw[fill=errPresent, draw=none] (0,0) circle (4pt);%
}
\newcommand{\symA}{%
  \tikz[baseline=-0.6ex]
    \filldraw[fill=errAbsentBg, draw=black!30, line width=0.5pt]
             (0,0) circle (4pt);%
}
\newcommand{\symC}{%
  \tikz[baseline=-0.6ex]
    \filldraw[fill=errCaveat, draw=none] (0,0) circle (4pt);%
}
\newcommand{\pillP}[1]{%
  \tikz[baseline=-0.6ex]
    \node[fill=errPresent, text=white, rounded corners=3pt,
          inner xsep=4pt, inner ysep=2pt, font=\scriptsize\rmfamily]
         {$\times$#1};%
}
\newcommand{\pillC}[1]{%
  \tikz[baseline=-0.6ex]
    \node[fill=errCaveat, text=white, rounded corners=3pt,
          inner xsep=4pt, inner ysep=2pt, font=\scriptsize\rmfamily]
         {$\times$#1};%
}

\newcolumntype{A}{>{\raggedright\arraybackslash}p{3cm}}
\newcolumntype{E}{>{\centering\arraybackslash}m{1.87cm}}
\newcolumntype{F}{>{\centering\arraybackslash}m{2cm}}

\definecolor{emgreen}{RGB}{72, 160, 72}
\definecolor{meblue}{RGB}{48, 122, 185}
\definecolor{rorange}{RGB}{215, 110, 45}
\definecolor{lneutral}{RGB}{90,105,135}   %

\tikzset{
  foldtri neutral/.style={
    regular polygon, regular polygon sides=3,
    fill=lneutral!50, draw=lneutral!80!black,
    inner sep=2.2pt, shape border rotate=-90,
    font=\tiny
  },
}

\def\hind{1.7}

\def\Rx{7.8}

\tikzset{
  mainbox/.style={rounded corners=7pt, draw=#1!85!black, line width=1.2pt,
    fill=#1, text=white, font=\bfseries\large,
    minimum width=3.6cm, minimum height=0.85cm, align=center, inner sep=7pt},
  subbox/.style={rounded corners=4pt, draw=#1!70!black, line width=0.8pt,
    fill=#1!35, text=#1!25!black, font=\small,
    minimum width=3.0cm, minimum height=0.62cm, align=center, inner sep=5pt},
  leafbox/.style={rounded corners=4pt, draw=rorange!70!black, line width=0.8pt,
    fill=rorange!30, text=rorange!25!black, font=\small,
    minimum width=3.0cm, minimum height=0.62cm, align=center, inner sep=5pt},
  arr/.style={-{Stealth[length=5pt, width=4pt]}, line width=1pt},
  garr/.style={arr, color=emgreen!85!black},
  barr/.style={arr, color=meblue!80!black},
  oarr/.style={arr, color=rorange!85!black},
  dashbox/.style={draw=gray!60, dashed, line width=0.9pt,
    rounded corners=5pt, inner sep=9pt},
}

\theoremstyle{definition}

\theoremstyle{plain}
\newtheorem{theorem}{Theorem}[section]
\newtheorem{lemma}[theorem]{Lemma}
\newtheorem{corollary}[theorem]{Corollary}
\newtheorem{proposition}[theorem]{Proposition}
\theoremstyle{remark}

\newcommand{\EE}{\mathsf{E}}
\newcommand{\PP}{\mathsf{P}}
\newcommand{\indicator}{\mathsf{1}}

\newcommand{\NN}{\mathbb{N}}

\newcommand{\kvec}{\boldsymbol{k}}
\newcommand{\Geo}{\mathrm{Geo}}
\newcommand{\TruncGeo}{\mathrm{TruncGeo}}

\newcounter{protocol}

\begin{document}

\onecolumngrid
\begingroup
\centering
\Large \textbf{Verifiable blind quantum computing: Comparative analysis and design considerations for client architectures}\\[1.5em]
\large J van Dam$^{1,2,3,*}$, J Grimbergen$^{1,2,3}$ and SDC Wehner$^{1,2,3}$\\
\endgroup
\vspace{10pt}
\small \noindent \par
$^1$ QuTech, Delft University of Technology, Lorentzweg 1, 2628 CJ, Delft, The Netherlands\\
$^2$ Kavli Institute of Nanoscience, Delft University of Technology, Lorentzweg 1, 2628 CJ, Delft, The Netherlands\\
$^3$ Quantum Computer Science, EEMCS, Delft University of Technology, Lorentzweg 1, 2628 CJ, Delft, The Netherlands\\

$^*$ Corresponding author: j.vandam-3@tudelft.nl\\

\begin{abstract}
Blind quantum computing (BQC) allows a client to delegate quantum computations to a remote server without revealing the input, computation, or output. In addition to being blind, the client can sometimes also verify that the server has performed their instructions correctly, a property known as verifiability. A key part of realizing such verifiable BQC (VBQC) is choosing the design of the client device: many architectures have been proposed, each with different hardware requirements, security properties, and performance characteristics, making it difficult to identify which is most suitable for a given implementation. In this work, we present a comparative analysis of client architectures for VBQC with a matter-qubit server. We restrict our analysis to single-server, single-client protocols with information-theoretic security based on measurement-based quantum computation. We identify three main categories of client: emission-based, measurement-based, and rotation-based, each with multiple variants depending on how the client interacts with the server. We evaluate each across different dimensions: we compare guarantees of existing corresponding security proofs, we derive equations for the rate at which each client can execute a protocol, we provide an overview of each architecture's error behaviour, and discuss hardware cost and design considerations. Client architectures implementing measurement-based remote state preparation and reflection-based teleportation emerge as strong default candidates, but as the right choice remains context-dependent, we provide a framework for navigating considerations to guide the selection of the most suitable architecture for a given setting.
\end{abstract}
\maketitle

\section{Introduction}

In blind quantum computing (BQC), a client can execute quantum algorithms at a remote quantum server without the input, the computation, or its outcome being revealed, apart from an upper bound on the size of the computation~\cite{broadbent2009universal}. In this setup, a quantum server is assumed to come with a large financial cost, whereas the client holds a relatively simple and cheap device. This way, the quantum computer can be used by a much wider audience. In addition, some protocols allow the client to verify the server: allowing a client to detect deviations from a known protocol, such that it can detect errors or malicious behaviour from the server. This is referred to as verifiable blind quantum computing (VBQC) \cite{fitzsimons2017unconditionally}. 

The client device in VBQC can be realized in different ways. Each client architecture comes with different considerations and trade-offs regarding the complexity of the implementation, security considerations and performance. These variations offer a practical challenge: which architecture should actually be implemented? The answer to this question will involve both performance analysis and real world considerations such as hardware costs, security requirements and fabrication challenges.

In this work, we present a comparative analysis of client architectures for VBQC with a matter-qubit server. We consider only single-server single-client protocols, excluding schemes that rely on multiple non-communicating servers~\cite{morimae2013secure, sheng2015deterministic, li2014triple, quan2023verifiable} or multiple clients such as in the Qline architecture~\cite{qline}. While non-communicating server protocols enable fully classical clients, not allowing untrusted servers to communicate is difficult to enforce in practice. Second, we focus on servers built from matter qubits (such as trapped ions \cite{bruzewicz2019trapped}, NV centers \cite{doherty2013nitrogen}, or neutral atoms \cite{henriet2020quantum}) that interface with photonic qubits for communication. Third, we require information-theoretic security, excluding protocols whose blindness relies on computational assumptions such as the hardness of learning with errors or the existence of trapdoor claw-free functions~\cite{mahadev2018classical, broadbent2015quantum}. While computationally secure protocols such as QEnclave~\cite{ma2022qenclave} offer the appealing prospect of a fully classical client, their security guarantees may be compromised by future algorithmic advances or the development of sufficiently powerful quantum computers. Finally, we restrict our analysis to protocols based on measurement-based quantum computation (MBQC)~\cite{raussendorf2001one}, where computations are carried out by single-qubit measurements on a graph state resource~\cite{hein2004multiparty}. This restriction follows naturally from the previous constraints in combination with the desire to have a simple client device: MBQC-based protocols achieve information-theoretic security with minimal client requirements, whereas circuit-based alternatives with information-theoretic security typically require the client to perform multi-qubit operations~\cite{childs2005secure}.

Our scope leaves us with three primary client categories: (1) emission-based clients, (2) measurement-based clients, and (3) rotation-based clients. These architectures then have sub-categories on how they interact with the server. We first lay out the client architecture taxonomy, discussing the implementation options across these categories. Then, to compare these architectures, we combine quantitative performance analysis with practical experimental considerations: for each architecture, we evaluate comparative hardware cost, technical complexities and security implications, drawing on consultations with experimental researchers and security experts, among other sources. We also provide an overview of the errors occurring in each system, and derive equations for their rate of execution. While not exhaustive, this hybrid approach to evaluation allows us to capture important considerations and allows us to showcase strengths and weaknesses for each architecture, to aid coming to practical conclusions for implementation. In this analysis two architectures stand out as promising default candidates: a client architecture implementing measurement-based remote state preparation and an emission-based client using cavity reflection to remotely prepare states. These candidates perform well in terms of rate, have relatively few error modes, and are compatible with strong security protocols. Even so, the best choice remains setting-dependent. It depends on what a given group values most --- whether that is security, execution rate or fidelity, or hardware cost --- and on the practical realities of the lab: existing expertise with a particular platform, or simply whether the necessary components, such as an optical cavity, are available. Our analysis is therefore intended as a decision-making tool, structured to help navigate these trade-offs in a given setting.

\section{Client architecture taxonomy}\label{sec:taxonomy}

Within the scope presented in the introduction, we will introduce the client architectures that are included in this comparative analysis. There are different ways of categorizing the clients, either by the \textit{role} the client fulfils in the protocol or by the \textit{capabilities} the client has. Below, we introduce both categorizations. The role categories are `prepare-and-send' and `receive-and-measure'. The client capabilities can be split into three main categories: client devices that \textit{emit}, \textit{measure} or \textit{rotate} photonic qubits. Within each of those categories, there are multiple variations, which we will cover in the subsections below. An overview of the taxonomy is depicted in Figure~\ref{fig:taxonomy}.

\begin{figure*}
    \centering
    \input{client_taxonomy.tex}
    \caption{Overview of client architecture taxonomy. Three categories based on client capabilities: Emission-based (green), measurement-based (blue) and rotation-based (orange). The clients within these categories can belong to sub categories, such as prepare-and-send and receive-and-measure, indicating the role they play in the VBQC protocol. Additionally, the rotation-based clients are separated by how often a signal needs to travel between the client and the server (single-pass vs multi-pass), and by the hardware that implements them (`fixed-gate' vs non-fixed). Triangles indicate alternative photon-matter interfaces available for the corresponding client, as shown in the legend in the bottom.}
    \label{fig:taxonomy}
\end{figure*}

\subsection{Client roles: prepare-and-send or receive-and-measure}
To understand the distinction between client roles, we first briefly recall the computational model that underlies the protocols in our scope. In MBQC~\cite{raussendorf2001one}, a computation is performed by preparing a large entangled resource state, consisting of qubits on the $X/Y$ plane of the Bloch sphere (i.e., of the form $|\pm_\theta\rangle=(|0\rangle+e^{i\theta}|1\rangle)/\sqrt{2}$). Then the computation is executed entirely through adaptive single-qubit measurements also in the $X/Y$ plane. The computation is fully determined by the choice of measurement bases: each qubit in the entangled resource state is measured at an angle $\phi$ that depends on the desired gate and, in general, on the outcomes of previous measurements. Blindness in BQC therefore reduces to hiding these measurement angles $\phi$ from the server.

There are two fundamentally different strategies for achieving this, corresponding to the two client roles: prepare-and-send and receive-and-measure~\cite{gheorghiu2019verification}.

In \emph{prepare-and-send} protocols \cite{broadbent2009universal, fitzsimons2017unconditionally}, the client ensures that the server's qubits are initialized in states that are unknown to the server, i.e., the client determines $\theta$ without the server knowing. This is achieved by the client having some sort of control over the server qubits, e.g. by remote state preparation. Here, $\theta$ is randomly chosen from $\{k\pi/4\}^{7}_{k=0}$. The server prepares these qubits in a graph state by performing $CZ$ operations and, when instructed by the client, measures each qubit at a masked angle $\delta$ communicated by the client. This $\delta$ will depend on the secretly chosen $\theta$, the desired, and also secret, computation angle $\phi$, a random bit $r$ also chosen by the client, and some correction based on previous measurement outcomes. Since the server knows only $\delta$ but not $\theta$ or $r$, it cannot determine $\phi$ and thus learns nothing about the computation. The security of this approach therefore relies on the client's ability to control the quantum state at the server without revealing it.

In \emph{receive-and-measure} protocols~\cite{morimae2014verification, hayashi2015verifiable}, the roles are reversed: the server prepares the resource state and coherently sends the qubits to the client, who performs the measurements, and therefore the computation, directly. Here, blindness follows from a different principle. Because the client's measurement choices are made locally and after the server has sent the qubits, the no-signalling principle \cite{eberhard1989quantum} guarantees that the server's quantum state cannot depend on the client's future measurement choices. This makes the blindness guarantee device-independent: it holds without any assumptions on the server's honest behaviour or the quality of the states it sends.

While this categorization says something about the role of the client, it does not necessarily reflect the hardware capabilities of the client, as we will cover below. Specifically, the only client architecture that falls in the receive-and-measure category is the direct measurement client. Client architectures that control the state of the server remotely, including measurement-based remote state preparation (RSP) and teleportation-based approaches, fall under prepare-and-send. This characterization is also depicted in Figure~\ref{fig:taxonomy}. 

Below, we cover the client capability categories, which will be the main structure in the comparative analysis of Section \ref{sec:compare}.

\subsection{Emission-based clients}\label{subsec:emission}

Emission-based clients have the ability to create photonic qubits and send them to the quantum server, e.g.\ over optical fibre. The device consists of a photon source and optical elements to manipulate the qubit state, such as waveplates, liquid crystal retarders or electro-optic modulators (EOMs). The photon source can be either a single-photon source, or weak coherent pulse (WCP) source. The photonic qubit coming from this source is then sent to the server, and is either teleported into the servers memory through a Bell state measurement (BSM), or transferred through an approach that uses reflection off the server's cavity, if it has one. 

\textbf{WCP. } This client type uses a WCP source. It has a low mean photon-number, meaning that it functions as an approximate single photon source. Aside from some practical and security considerations that will be discussed in Sections~\ref{subsec:security} and~\ref{subsec:cost}, it functions otherwise the same as a single-photon based client. The interaction with the server has been studied specifically for the BSM-teleportation case \cite{vandam2025single}. While it is possible that a reflection-based scheme also works in combination with a WCP, this has not been studied specifically and is therefore left out of the comparison for now. 

\textbf{Bell state measurement.} The server emits a photon entangled with its quantum memory. This photon travels to a midpoint station where it interacts with a photonic qubit (or WCP) sent from the client in a BSM, performing teleportation and thereby transferring the client's qubit state onto the server memory qubit. Depending on whether the encoding of the photon state is dual-rail (e.g.\ polarization or time-bin)~\cite{barrett2005efficient} or single-rail (Fock basis)~\cite{campbell2008measurement}, teleportation can be performed using either the double-click or single-click method (i.e., it is heralded upon detection of two or one photons at the BSM). There are currently no security proofs that consider the single-rail encoding, we therefore only focus on the dual-rail/double-click approach. 

\textbf{Single photon cavity reflection.} Instead of transferring the client's prepared state through a BSM, it can be transferred by interacting with the server's cavity, if it has one. This can be done in different ways: one way implements a SWAP gate through single-photon Raman interaction, and a second way implements a teleportation through a $CZ$-gate and $X$-basis measurement. 

The first approach to atom-photon state transfer is offered by single-photon Raman interaction (SPRINT), which exploits the chiral coupling between propagation direction and internal atomic transitions in a whispering-gallery-mode (WGM) microresonator \cite{pinotsi2008single, rosenblum2017analysis}. In this scheme, a three-level $\Lambda$ system is coupled to the two counterpropagating modes of the resonator via their opposite circular polarizations, such that an incoming photon in a superposition of the two modes undergoes complete destructive interference in the forward direction and is reflected, but only via the Raman channel that simultaneously flips the atomic ground state. This interaction is heralded upon detection of the reflected photon. The result is a passive, deterministic SWAP gate between the photonic and atomic qubits, transferring the photon's state directly into the atom. 

The second cavity-reflection approach is the cavity-assisted photon scattering (CAPS) gate, which is based on the conditional phase shift acquired by a polarization-encoded photon reflecting from a one-sided Fabry-Perot cavity containing a three-level atom \cite{duan2004scalable, kikura2025passive}. Here the two polarization components of the photon are routed by a polarizing beam splitter: one reflects off the cavity, acquiring a $\pi$ phase shift conditioned on the atomic state, while the other reflects off a plain mirror. The interaction thus realizes a $CZ$ gate between the photonic and atomic qubits. State transfer is completed in a second step by measuring the photon in the $X$-basis and applying the appropriate Pauli correction, making the protocol a heralded teleportation-based loading scheme rather than a direct swap.

\subsection{Measurement-based clients}\label{subsec:measurement}

Measurement-based clients consist of single photon detectors combined with optical elements. The optical elements are used to manipulate the state of photonic qubits which allows the client to measure incoming photonic qubits in specific bases. Measurements on the $X/Y$ plane of the Bloch sphere (i.e. in a basis defined by $|\pm_\theta\rangle$) often suffice, but sometimes $Z$-basis measurements are also needed. We distinguish two ways in which this client can interact with the server: through measurement-based remote state preparation (RSP) or through direct measurement.

\textbf{Measurement-based RSP.} This type of client is of the prepare-and-send type. The server emits a photon entangled with a memory qubit, this photonic qubit travels to the client. The client measures this photon in a chosen basis, thereby projecting the memory qubit onto that same basis. For example, measuring one half of a Bell state $\ket{\Phi^+} = (\ket{00}+\ket{11})/\sqrt{2}$ in the $\ket{\pm_\theta}$ basis projects the other qubit onto $\ket{\pm_\theta}$. This architecture is also discussed and studied in Ref. \cite{vandam2024hardware}.

\textbf{Direct measurement.} This type of client is the only type which has the receive-and-measure role: The entangled resource state is prepared by the server and sent to the client coherently one by one. The client measures the incoming qubits in the $|\pm_\phi\rangle$ basis, and sometimes in the standard basis depending on the verification strategy. 

\subsection{Rotation-based clients}\label{subsec:rotation}

In this configuration, the server is in charge of creating the single photons. The client is in charge of manipulating this qubit (applying a rotation) and sending it back to the server. We distinguish two sub-categories: single-pass manipulation clients, where the client applies all gates in a single round trip, and multi-pass manipulation clients, where the server must apply gates between multiple client-side operations, requiring multiple round trips. For all rotation-based architectures, the photon that is manipulated needs to be transferred back onto the server's matter qubit, sometimes multiple times. This can be done either through a BSM or using a cavity-reflection approach.

\textbf{Single-pass manipulation clients.} This client type functions very similarly to the single-photon emission client, except that now the photon source itself is not located at the client, but at the server. The client still has optical elements to manipulate the photonic qubit state: the server sends a single photonic qubit to the client, the client rotates the photon qubit state and sends it back to the server. This requires a two-way quantum channel between server and client, where for each qubit a photon needs to do a full round-trip. There are two approaches for this. In the first approach, the server emits a photonic qubit in the state $|+\rangle$ that travels to the client. The client performs a phase rotation and random bit flip using simple linear optics and sends the photonic qubit back to the server. This $Z$-rotation client is introduced in Ref. \cite{kashefi2024verification}.

Alternatively, Li et al.~\cite{li2021blind} proposed a variant in which the client has the fixed gates $H$ and $Z(\pi/4)$ instead of a variable $Z$-rotator and bit flips. In this approach, the server sends $|0\rangle$ to the client instead of $|+\rangle$, and the client applies $H$ followed by $Z(\pi/4)^k$ with a random $k\in\{0, 1, ..., 7\}$, by looping through the $Z(\pi/4)$ gate $k$ times. This produces the required state $|+_\theta\rangle$ with $\theta=k\pi/4$ which is sent back to the server. 

\textbf{Multi-pass manipulation clients.} Wu et al.~\cite{wu2023blind} introduce two approaches that require multiple round-trips. The first option further simplifies the client's capabilities by restricting operations to fixed-angle rotations~\cite{wu2023blind}. In the first variant of this protocol, the server prepares qubits in the $\ket{+}$ state and sends them to the client, who applies a fixed $T = e^{-i\pi/8}Z(\pi/4)$ gate a variable number of times before returning the qubit. The server then applies a Hadamard gate and sends the qubit back for another round of $T$ gate applications, followed by a final server-side Hadamard. The resulting state $HT^jHT^i\ket{+}$ spans the full set $\{\ket{0}, \ket{1}, \ket{k\pi/4}\}_{k=0}^{7}$ required for their protocol, depending on the number of $T$ applications $i$ and $j$ chosen by the client. Since the rotation angle is fixed, the client's quantum hardware reduces to a passive optical element (such as a $\lambda/8$ waveplate) combined with an optical switch that controls the number of passes through this element. 

A second variant in~\cite{wu2023blind} allows the client to additionally perform $X$ gates. In this case, the server prepares qubits in the $\ket{0}$ state, and the protocol proceeds through three communication rounds with the server applying $T$ and $S=T^2$ gates between rounds rather than Hadamards. The qubits are divided into subsets that undergo different sequences of client operations: some are transformed into odd-multiple angles $\ket{k\pi/4}$ for $k \in \{1,3,5,7\}$, others into even-multiple angles $\ket{\pi/2}$ or $\ket{3\pi/2}$, and the remainder into $\ket{+}$, $\ket{-}$, $\ket{0}$, or $\ket{1}$ states. This variant requires an additional round-trip between the client and the server.

\section{Comparative analysis and design considerations}\label{sec:compare}
Now that we have introduced the possible client architectures within our scope, we start our comparison. Our analysis covers four categories:
\begin{enumerate}
    \item Security considerations: We discuss the security protocols that can be used for the different architectures, and what the implications of this are in terms of security guarantees and possible protocol overhead;
    \item Performance, rate: We derive the expected runtime for the protocols by analysing the loss terms, time per attempt and expected number of attempts needed to complete one round of the protocol;
    \item Performance, errors: We discuss and provide an overview of the error terms present for each architecture;
    \item Hardware cost and complexity: We discuss the complexity and cost of implementing each architecture and discuss design considerations.
\end{enumerate}
For the security and performance categories, we provide an overview in Tables~\ref{tab:security}, \ref{tab:rate} and \ref{tab:errors}, and the hardware cost and complexity is summarized at the end of Section~\ref{subsec:cost}.

\subsection{Security and protocol considerations}\label{subsec:security}

The security properties of the client architectures vary significantly in terms of proven guarantees, robustness to noise, and practical overhead requirements. There are four key aspects to consider: the existence and strength of composable security proofs, noise robustness properties, the generality of the protocol (i.e., for what types of computations it works), and the overhead required to achieve security guarantees. An overview of these properties is provided in Table~\ref{tab:security}, with a discussion followed below. While we cover the status of all these aspects for each client architecture, we note that this reflects only the current state of the protocols: further development may reduce overhead, introduce or strengthen security and noise-robustness guarantees, and improve generality.

\begin{table*}[t]
\caption{Security and protocol properties per client architecture.
  Filled circles~(\protect\symP) indicate the property holds;
  open circles~(\protect\symA) indicate it does not; blue-grey circles~(\protect\symC) indicate some caveat, see main text for details.
  Architectures that can access both BQP-only and universal protocols are listed on separate lines, as these come with different security trade-offs.
  Citations connected by~`+' must be composed to obtain the full security guarantee; citations separated by~`/' are alternatives. Overhead is given in terms of round repetitions $r$; minimal number of qubits to make the graph state corresponding to the computation $m$; number of dummy qubits $d$; number of trap qubits $t$; number of qubits in the gadget construction for WCP clients $k$; and numbers of qubits $h$, $g$ and $l$ specific to the protocols of Refs.~\cite{hayashi2015verifiable, takeuchi2018verification}\protect\newline
  $^\dagger$Stand-alone (non-composable) result with only inverse-polynomial security\protect\newline
  $^\ddagger$Security and verifiability claimed without complete formal proof.}
\label{tab:security}
\centering
\includegraphics{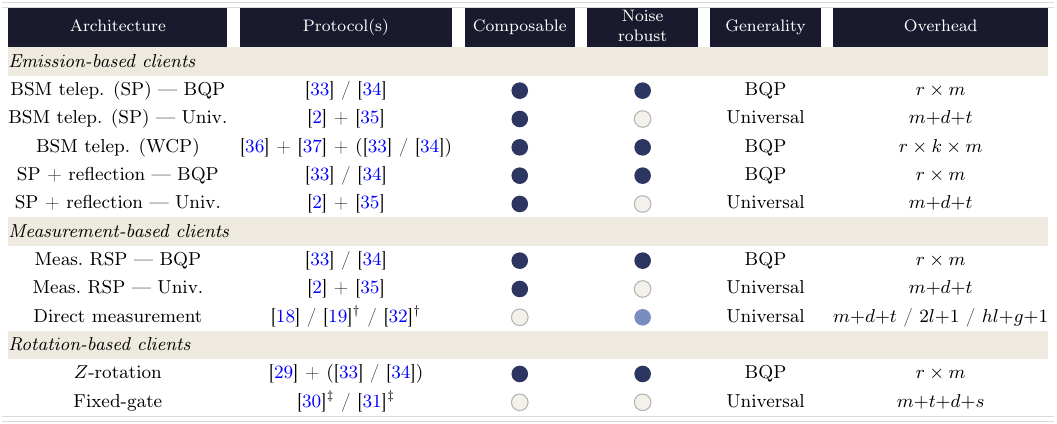}
\end{table*}

\textbf{Security proofs and reductions. } The distinction between prepare-and-send and receive-and-measure protocols, as introduced at the start of Section~\ref{sec:taxonomy}, is commonly used in security proofs. With this, there are generally two `flavours' of security proofs. Single-photon emission-based clients and measurement-based RSP clients fall under the prepare-and-send security flavour and thereby share security proofs \cite{leichtle2021verifying, kapourniotis2024unifying, fitzsimons2017unconditionally, dunjko2014composable}. The direct measurement client has the other flavour, though does not share this with other client types \cite{morimae2014verification, hayashi2015verifiable, takeuchi2018verification}. The security of the WCP-based client and the $Z$-rotation client are both based on reducing the setup to a prepare-and-send setup, allowing them to be plugged into existing prepare-and-send-type security proofs. For the fixed-gate clients, however, no such reduction proof exist and they thereby rely on their own stand-alone result.

The reduction for the WCP-based client is done in two steps, which are centred around dealing with the probabilistic mutli-photon events. First, Garnier et al. construct a BatchRSP resource from WCPs~\cite{garnier2024composably}, which guarantees that in a batch of $k$ transmitted states, with some exponentially high probability at least one is a genuine single-photon state. Then, this is combined with Ref.~\cite{kapourniotis2023asymmetric}, which provides a "collaborative RSP" gadget that distils BatchRSP into single photon RSP using a series of CNOT gates and measurements at the server, this can be plugged into existing prepare-and-send-type VBQC protocols that only require equatorial states ($|\pm_\theta\rangle$) (i.e., those that rely on stabilizer testing instead of trap-based testing).

\textbf{Composable security.} If security is proven in a composable way, this means it is proven to remain secure when used as a subroutine within larger quantum computations or when executed multiple times in sequence. This is a critical requirement for practical VBQC implementations, as these protocols are never deployed in isolation.

For each prepare-and-send architecture, a strong composable security proof is available, including the `extensions' to the $Z$-rotation client of Kashefi et al.~\cite{kashefi2024verification} and the WCP-based client~\cite{garnier2024composably, kapourniotis2023asymmetric}. The blindness of receive-and-measure protocols is based on the no-signalling principle, making it device-independent. However, the verifiability arguments in e.g.~\cite{morimae2014verification, hayashi2015verifiable, takeuchi2018verification} are not proven composable. Both Hayashi-Morimae~\cite{hayashi2015verifiable} and Takeuchi-Morimae~\cite{takeuchi2018verification} obtain only inverse-polynomial soundness.

The fixed-gate clients from Wu et al.~\cite{wu2023blind} and Li et al.~\cite{li2021blind} prove blindness in a non-composable way. In addition, Wu et al. claim verifiability but do not provide formal proofs. Moreover, there are potential vulnerabilities to correlated attacks. Unlike standard VBQC protocols that employ one-time pads to decorrelate successive measurements, the Wu et al.\ and Li et al. protocols lack this protection. A malicious server could potentially prepare entangled states instead of the prescribed single-qubit states and use auxiliary systems to exploit correlations across the multiple communication rounds. We note, however, that the $H/Z(\pi/4)$- and $H/X$-gate clients have the ability to perform bit flips (using $T^4=Z(\pi/4)^4=Z$ and $X=HZH$), so their security proofs might be adapted to mitigate this vulnerability.

\textbf{Noise robustness.} A critical security consideration in practical scenarios is noise robustness: whether the protocol aborts in the presence of honest errors. Without noise robustness, even protocols with fault tolerance become impractical, as they would constantly abort due to environmental decoherence, photon loss, or gate imperfections. The distinction between fault tolerance and noise robustness is crucial: a protocol may be fault-tolerant in principle (able to correct errors below some threshold) but not noise-robust (aborting frequently in realistic conditions), rendering it impractical for deployment.

Verification protocols that provide noise robustness are available for prepare-and-send type clients~\cite{leichtle2021verifying, kapourniotis2024unifying}, which therefore includes the WCP-based client and the $Z$-rotation client, which inherit noise robustness from being plugged into a noise-robust protocol. These protocols do not abort due to honest errors and can tolerate noise up to some threshold. In contrast, direct measurement architectures (receive-and-measure) and fixed-gate clients of Wu et al. and Li et al. lack noise-robust verification schemes in the current literature, making them not suitable for near-term implementation. Takeuchi and Morimae~\cite{takeuchi2018verification} explicitly discuss the impact of honest noise (their Sec.~VI) and show that their protocol still accepts slightly deviated states with high probability provided the deviation scales as $O(1/\mathrm{poly}(N))$, but they note that the protocol is not perfectly fault tolerant and that, for example, single-qubit phase flips already cause the verifier to reject with high probability.

\textbf{Protocol generality. } Not all protocols support the same classes of computations. Several of the protocols discussed above, specifically those in Refs.~\cite{leichtle2021verifying, kapourniotis2024unifying} and the first protocol of Ref.~\cite{kashefi2024verification} (Corollary 1),  and thus all noise robust protocols, are restricted to bounded-error quantum polynomial time (BQP) computations, meaning computations with classical input and output. This restriction is fundamental to their strategy: they repeat the computation over multiple rounds and take a majority vote to amplify correctness, a technique that requires the output to be classical.

The protocol in \cite{hayashi2015verifiable} only works on bipartite graphs, while Ref.~\cite{takeuchi2018verification} extends this to work on hypergraph states, though bipartite graphs (like the brickwork graph~\cite{broadbent2009universal}) are enough for universal computations. The prepare-and-send protocol in Ref.~\cite{dunjko2014composable} also works for any MBQC computation. The fixed-gate clients of Refs. \cite{wu2023blind, li2021blind} also claim universality, though without formal composable security proofs as discussed above. Ref. \cite{kashefi2024verification} additionally presents a protocol for arbitrary quantum computations (Corollary 2), but it requires the client to perform multi-qubit Clifford gates. This demands quantum memory and multi-qubit processing capabilities that are incompatible with the lightweight client architectures considered in this work, and we therefore exclude it from our comparison.

\textbf{Security overhead.} The security proofs for the different architectures realize their security in different ways, which may lead to different end-to-end runtimes of a verified blind calculation. While we discuss the performance of creating a single qubit of the computation graph separately in Sections~\ref{subsec:rate} and \ref{subsec:fidelity}, here we discuss performance considerations that arise from the corresponding security protocols. We first discuss a `base' amount of qubits, and then discuss how different protocols add on this. We discuss the specific overhead WCP clients introduce, and the different sizes of round-based and graph-size based overheads. 

As a base case, take the protocol's entangled resource state to be a universal graph large enough to cover the computation: a single graph of $m$ qubits. Protocols, on top of this minimal $m$ give an overhead in either graph size or number of rounds. Note that these concepts are not entirely separate: with a `measure as you go' approach, the graph state does not need to be created in full before the server can start making its measurements: If for a given qubit in the graph all of its neighbours are prepared or rotated, it can be entangled and measured. A graph size overhead therefore does not necessarily impose a requirement on the amount of qubits the server needs to hold simultaneously, and in the end also manifests as a time overhead, similar to the round overhead. The difference is that this graph-size-induced time overhead is likely super-linear, because each new qubit must be created within some cutoff time (set by the coherence time). This means generating the full graph will typically take multiple attempts, because it requires starting over whenever a qubit fails to be created within the cutoff. In that sense, a round overhead is more desirable compared to a graph size overhead, at least until the coherence time is long enough for the cutoff to not impact the completion of the graph significantly, i.e., in the limit of the cutoff time approaching infinity (or in the limit of no decoherence), the graph size overhead and round-based overhead have the same effect.

The only existing security proof for VBQC with WCP-based clients combines results from Garnier et al. and Kapourniotis et al.~\cite{garnier2024composably, kapourniotis2023asymmetric}, which requires a gadget construction to handle multi-photon events. For each node in the computation graph, the client must remotely prepare $k$ physical qubits (where $k$ depends on the desired security level and the mean photon number of the pulse), which the server then concatenates via CNOT gates and measurements. The optimal choice of $k$ represents a non-trivial optimization problem, balancing security guarantees against runtime and finding an optimal mean photon number. A lower mean photon number reduces the average number of multi-photon events and thus fewer qubits need to be concatenated, but also lowers the probability of emitting any photons and thereby successfully RSP'ing qubits. In any case, it requires multiple qubits to be remotely prepared for each node in the graph.

The noise robust protocols rely on classically combining the outcomes over multiple rounds \cite{leichtle2021verifying, kapourniotis2024unifying}. With this, they also separate their verification from their computation: some of the rounds are used entirely for computation, and some entirely for verification. Because of this, in each round the size of the graph can remain at the minimal value $m$. They require $r$ repetitions of a round with $m$ qubits per graph, coming to a total of $r\times m$ qubits, i.e., they have a \emph{round-based} overhead of $r$. Some of these $r$ rounds are used for verification, and some for computation. For the protocol of Leichtle~et~al., the amount of rounds depends on noise level in the system, the structure of the computation and desired level of security, this is studied in \cite{vandam2026optimizing}. For the protocol of Kapourniotis~et~al., this $r$ is strictly smaller, as it is made independent of the computation, and achieves a tighter security bound. When the WCP client is used in combination with these, the total amount of qubits comes to $r\times k \times m$. 

The receive-and-measure protocols of \cite{hayashi2015verifiable} and \cite{takeuchi2018verification} also have a \emph{round-based} overhead, where only one round is used for the computation, while all other rounds are used for verification. Hayashi-Morimae uses $2l+1$ copies of the graph state, where $l$ depends on the desired security level. These copies are partitioned into three groups: $l$ copies for testing one set of stabilizers ($X$-basis measurements on one bipartition, $Z$-basis on the other), $l$ copies of the complementary stabilizer test, and 1 copy for the computation. The factor of two comes from the bipartite structure of the graph state, which requires two complementary stabilizer tests. Takeuchi-Morimae uses $hl+g+1$ copies, where $l$ is again the number of tests per stabilizer, but we now need $h$ groups of this, given that it works on hypergraphs with $h$ stabilizer generators. The extra $g$ copies are discarded as part of the de Finetti reduction: starting from a permutation-invariant state, discarding these copies guarantees that the state on the remaining systems is approximately i.i.d..

Other protocols rely on a \emph{graph-size} overhead, where verification happens on a part of the graph. Specifically, the protocols in Refs.~\cite{fitzsimons2017unconditionally, morimae2014verification} require $t$ `trap' qubits and $d$ `dummy' qubits, where the dummy qubits are used to effectively decouple the trap qubits from the rest of the graph, and the traps to test the server through eigen-basis measurements. This leads to a total graph size of $m+d+t$ qubits. Both protocols described in Ref.~\cite{wu2023blind} (the $T$- and $T/X$-gate clients) and the protocol of Ref.~\cite{li2021blind} (the $H/Z(\pi/4)$-gate client) additionally require $s$ decoy qubits, so the required graph state has a total of $m+d+t+s$ qubits.

In general, we call the amount of qubits that are needed per graph for a given protocol $n$. E.g., for the Wu~et~al. protocol $n=m+d+t+s$, for the Hayashi-morimae protocol $n=2l+1$, and for the protocol of Leicthel~et~al. $n=m$. We'll use this $n$ to determine the rate per client type in the next section, which is then independent of the exact security protocol that is used. 

\subsection{Rate}\label{subsec:rate}

To assess the performance of each architecture, we first discuss the aspects that go into determining the rate at which each client architecture can perform VBQC. After, in Section \ref{subsec:fidelity}, we cover error terms. We discuss the rate and error terms separately for the sake of providing a clear overview, though they are not independent. Some protocols exhibit rate-fidelity trade-offs, such as the detection window for efficiency vs.\ dark counts. Extra errors may also lead to a round overhead in the protocol~\cite{leichtle2021verifying, kapourniotis2024unifying}. The security overhead considerations discussed in Section~\ref{subsec:security} also determine the overall performance. The performance of a given architecture thus depends on a combination of its physical performance and the overhead the corresponding security protocol induces. Here, we discuss only the considerations in physical performance. We first describe the photon losses that occur during the client-server interaction for each architecture, from which we deduce a probability of this interaction succeeding. We then discuss how long such an interaction takes. After, we combine these expressions for probability-per-attempt and time-per-attempt into an estimated time for finishing one round of the computation, taking into account a possible cutoff time at the server, limiting the time a qubit can spend in its memory. We underline the compounding penalty for rotation-based clients from multiple round trips and finally we discuss multiplexing strategies. We summarize these points in Table \ref{tab:rate}.

\begin{table*}[t]
\caption{Rate properties per client architecture: time per attempt ($t$),
  success probability per attempt ($p$), and end-to-end scaling per computation round for a graph of size $n$, subject to a cutoff of $c$ attempts.
  The prepare-and-send architectures, except for the measurement-based RSP architecture, have a different time-per-attempt depending on the location of heralding.
  End-to-end scaling for prepare-and-send architectures is shown for linear graph states and is derived in Appendix~\ref{app:rate_deriv} along with bounds for scaling in a brickwork graph.
  $^\dagger$ For full graph instead of per qubit, see main text.}
\label{tab:rate}
\centering
\includegraphics{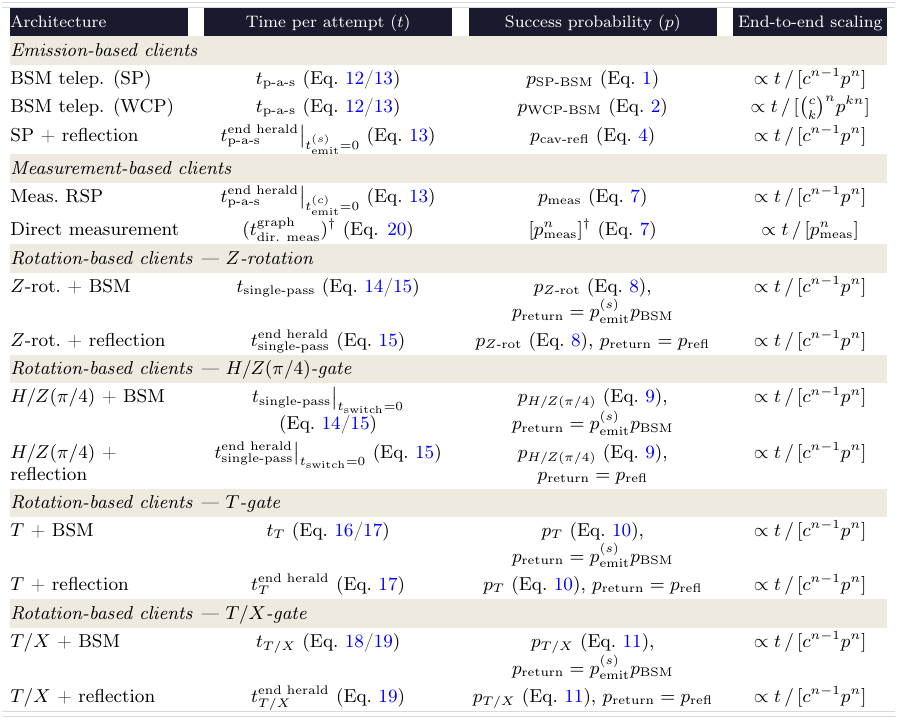}
\end{table*}

\textbf{Losses.} For all client architectures, there is some sort of interaction between the client and server through photonic qubits, for each node in the graph state. Each attempt at the interaction between the client and server is subject to several sources of loss, which we discuss below.

All architectures share a common contribution from transmission, which we will assume is done over fibre, as the distances are assumed to be in the order of tens of kms, or `metropolitan scale'. The probability of a photon surviving a channel of length $L$ is $\eta_\text{trans}(L) = 10^{-\alpha L/10}$, where $\alpha$ denotes the fibre loss in dB/km (typically 0.2~dB/km at telecom wavelengths). WCP-based clients can partially compensate for channel loss by increasing the mean photon number per pulse, though this comes at the cost of increased multi-photon events and a correspondingly larger gadget construction overhead (see Section~\ref{subsec:security}). Though there are differences between the architectures in how far the photon has to travel. Beyond transmission, each architecture introduces its own additional losses. We assume that the fixed-gate architectures use the static gate with looping method, and the other client architectures use dynamical controllers. Assuming that linear-optical manipulation losses are negligible, the dominant contributions for each architecture are as follows. 

For single-photon-BSM-based RSP, the success probability per attempt is
\begin{equation}
    p_\text{SP-BSM} = p_\text{emit}^{(s)} \cdot p_\text{emit}^{(c)} \cdot \eta_\text{trans}(L) \cdot p_\text{BSM} \,,
    \label{eq:p_sp_bsm_rsp}
\end{equation}
where $p_\text{emit}^{(s)}$ and $p_\text{emit}^{(c)}$ denote the end-to-end emission efficiencies of the server and client respectively (including coupling to fibre and any frequency conversion), and $p_\text{BSM}$ captures the Bell state measurement success probability. With linear optics, $p_\text{BSM} \leq 1/2$, but in practice this factor also absorbs detector inefficiencies and, crucially, losses arising from temporal filtering.

When the client replaces its single-photon source with a weak coherent pulse (WCP) of mean photon number $\mu = |\alpha|^2$, the budget equation acquires two modifications. First, the client emission efficiency $p_\text{emit}^{(c)}$ now refers to the coupling and frequency-conversion efficiency rather than a source heralding probability, and the mean photon number of the pulse $\mu$ enters as an independent parameter. Second, multi-photon contributions modify the success probability beyond the leading-order term. The result is derived in~\cite{vandam2025single}:
\begin{equation}
    p_\text{WCP-BSM} = p_\text{emit}^{(c)} \cdot \mu \cdot p_\text{emit}^{(s)} \cdot \eta_\text{trans}(L) \cdot p_\text{BSM} \cdot g\,,
    \label{eq:p_wcp_bsm_rsp}
\end{equation}
where $g$ is a multi-photon correction factor given by
\begin{equation}
\begin{split}
        g = &\frac{8\,e^{-\eta_c\mu/2}\bigl(1 - e^{-\eta_c\mu/4}\bigr)}{\eta_c\eta_s\mu} \, \times \\&\left[\frac{\eta_s}{2}\!\left(1+\frac{\eta_c\mu}{4}\right) + \bigl(1-e^{-\eta_c\mu/4}\bigr)(1-\eta_s)\right]\,,
\end{split}    
    \label{eq:g_multiphoton}
\end{equation}
with $\eta_s$ and $\eta_c$ the total path efficiencies of the server and client photons up to the BSM (detector efficiencies and temporal filtering are already accounted for in $p_\text{BSM}$). Note that in principle this can be expressed in terms of $p_\text{emit}^{(c)}$, $p_\text{emit}^{(s)}$ and $\eta_\text{trans}(L)$, but this is left ambiguous because the location of the BSM station (often assumed to be half-way between the client and server), and therefore what fraction of $\eta_\text{trans}(L)$ is taken as part of $\eta_c$ and $\eta_s$ is again a trade-off that can be optimized: putting the BSM closer to the server means we can increase $\mu$ to compensate for the extra losses on the client's side, but this does raise security overhead discussed in Section~\ref{subsec:security}. 

In the limit $\mu \to 0$, we have $g \to 1$, recovering the single-photon result. Increasing $\mu$ without introducing additional losses raises the success probability through both the explicit linear scaling and $g > 1$, but this comes at the cost of reduced fidelity of the remotely prepared state due to multi-photon noise; we quantify this trade-off in Section~\ref{subsec:fidelity}.

In a BSM, timing mismatches between the photons from the client and server can reduce the visibility, even if the photons are otherwise identical. One cause of timing mismatch, besides alignment and time synchronisation imperfections, is spontaneous emission of a different photon during the emission process, as in e.g. \cite{krutyanskiy2023entanglement}. The spontaneously emitted photon will not cause much trouble otherwise, as it will be so dissimilar to the targeted photons that the setup will be largely insensitive to it. It does, however, cause a delay in the targeted photon emission. Krutyanskiy et al. mitigate this by truncating the coincidence window at the BSM, only allowing detector clicks within a small time frame. This creates an explicit rate-fidelity trade-off: Truncating the detection window more means fewer distinguishability errors, but also results in a lower success rate. This rate-fidelity trade-off is encoded in $p_\text{BSM}$. 

For cavity-reflection-based interaction (encompassing SPRINT~\cite{rosenblum2017analysis} and CAPS-type gates~\cite{duan2004scalable, kikura2025passive}), the client emits a photon that travels to the server, where it reflects off the cavity containing the server's matter qubit. The reflected photon is detected to herald a successful gate operation:
\begin{equation}\label{eq:p_cav_refl}
p_\text{cav-refl} = p_\text{emit}^{(c)} \cdot \eta_\text{trans}(L) \cdot \eta_\text{coupl} \cdot p_\text{gate}(C_i) \cdot p_\text{detect}^{(s)}
\end{equation}
where $\eta_\text{coupl}$ is the fibre-to-cavity spatial coupling efficiency, $p_\text{detect}^{(s)}$ the servers detection efficiency, and $p_\text{gate}(C_i)$ is the gate success probability, which depends on the intrinsic cooperativity $C_i$ and approaches unity only in the limit $C_i \to \infty$.

For SPRINT~\cite{rosenblum2017analysis}, $p_\text{gate} = R(C_i)$, the reflection probability from Eq.~(7b) of~\cite{rosenblum2017analysis}, evaluated at optimal external coupling $\kappa_\text{ex}^\text{opt} = \kappa_i\sqrt{1+2C_i}$~\cite{rosenblum2017analysis}.

For CAPS-type gates~\cite{kikura2025passive}, two operating regimes exist, corresponding to a rate-fidelity trade-off. In the \emph{standard} regime, all reflected photons are accepted, giving
\begin{equation}
p_\text{CAPS} = 1 - \frac{\sqrt{1+2C_i}}{1 + C_i + \sqrt{1+2C_i}}
\end{equation}
with a residual conditional infidelity scaling as $(1+C_i)^{-1}$~\cite{kikura2025passive}. In the \emph{high-fidelity} regime, a calibrated attenuator is inserted in the cavity's off-resonant reflection path to equalise the two conditional reflection amplitudes, eliminating the amplitude mismatch that causes infidelity. This yields unit conditional fidelity at the cost of a lower success probability~\cite{kikura2025passive} and increased complexity:
\begin{equation}
p_\text{CAPS}^\text{opt} = 2\,p_\text{CAPS} - 1
\end{equation}
At $C_i = 100$, for example, these evaluate to $p_\text{CAPS} \approx 0.87$ and $p_\text{CAPS}^\text{opt} \approx 0.75$ respectively.

For measurement-based RSP, the server emits a memory-entangled photon that travels to the client for detection:
\begin{equation}
    p_\text{meas} = p_\text{emit}^{(s)} \cdot \eta_\text{trans}(L) \cdot p_\text{detect}^{(c)} \,,
    \label{eq:p_meas_rsp}
\end{equation}
where $p_\text{detect}^{(c)}$ is the client's detection efficiency. Since the client performs a projective measurement on a single incoming photon rather than interfering two photons, this architecture is insensitive to the temporal profile of the server's emission. Delayed emission events that would degrade BSM visibility are here simply detected and projected onto the correct basis, provided the photon arrives within the detector's acceptance window.

In the direct measurement approach, each qubit of one graph needs to reach the client, meaning the losses compound with the size of the graph $n$ as $p^n$, while for the other architectures the scaling is more favourable, assuming the coherence time is long enough. The per-qubit loss terms in the direct measurement approach are the same as for the measurement-based RSP approach, but the scaling is different: all qubits in an RSP-approach are independent, thus, you can try to remotely prepare each qubit as often as you want (apart from the problem of decoherence). This is revisited later in this section.  

For the rotation-based architectures, the photon must make one or more round-trips between client and server, and must be re-absorbed into the server's memory after each round-trip. The effective channel length therefore increases to $2L$, $4L$, and $6L$ for the single-pass, $T$-gate, and $T/X$-gate clients, respectively. The success probabilities are
\begin{align}
    p_{Z\text{-rot}} &= p_\text{emit}^{(s)} \cdot \eta_\text{trans}(2L) \cdot p_\text{return} \,, \label{eq:p_zrot} \\
    p_{H/Z(\pi/4)}&= p_\text{emit}^{(s)} \cdot \eta_\text{trans}(2L) \cdot p_\text{switch} \cdot p_\text{return}\,, \label{eq:p_HT}\\
    p_{T} &= \left[p_\text{emit}^{(s)}\right]^2 \cdot \left[p_\text{switch}^{(c)}\right]^2 \cdot \eta_\text{trans}(4L) \cdot \left[p_\text{return}\right]^2 \,, \label{eq:p_T} \\
    p_{T/X} &= \left[p_\text{emit}^{(s)}\right]^3 \cdot \left[p_\text{switch}^{(c)}\right]^3 \cdot \eta_\text{trans}(6L) \cdot \left[p_\text{return}\right]^3 \,, \label{eq:p_TX}
\end{align}
where $p_\text{switch}^{(c)}$ accounts for optical switching losses at the client (routing between the gate loop and the return path or delay line), and $p_\text{return}$ denotes the probability of successfully transferring the returning photon back onto the server's memory qubit. This transfer can be realised either through a BSM with a freshly emitted server photon, contributing $p_\text{return} = p_\text{emit}^{(s)} \cdot p_\text{BSM}$, or through a reflection-based SWAP or teleportation, in which case $p_\text{return} = p_\text{refl}$. The multiplicative structure of these expressions means that each per-round-trip loss factor enters with an exponent equal to the number of round-trips, compounding on top of the increased transmission loss.

Comparing across architectures, the measurement-based clients (both RSP and direct measurement) have the most favourable loss structure: they involve a single emission, a single traversal of the channel, and a single detection event, with no photon-photon interference requirements. The BSM-based emission clients add a second emission process and the BSM efficiency, including the temporal filtering overhead. The reflection-based client avoids the BSM but introduces the cavity-mediated SWAP probability. The rotation-based clients are the most loss-sensitive, as each loss channel is raised to a power equal to the number of round-trips, making their performance particularly dependent on the server--client distance and the quality of the photon-matter interface at the server.

\textbf{Attempt duration.} The losses discussed above determine how many attempts are needed on average for a successful state preparation. Here we discuss how long each attempt takes for the different architectures. We note that in these attempts, we do not take into account measuring the qubits. In a receive-and-measure approach, a measurement is assumed to be instantaneous, as it is a detection of photons. In other architectures, the time to read out the state of a qubit at the server might be significant compared to other timescales, but it only occurs after a success. Since we likely need many attempts to get a success, the readout time is expected to be insignificant compared to the time it takes to get a success. If this is not the case, because the efficiency is very high, this measurement time could be approximated in the time-per-attempt by simply adding a term $p\cdot t_\text{readout}$ that is an average readout time $t_\text{readout}$ per attempt. Similarly, we do not include the time it takes to perform the entangling operations, as this also only occurs after a success. This is not true for the direct measurement approach, as the qubits are entangled before they are sent to the client, meaning this is a cost that occurs for each attempt. 

In the previous section, we discussed the challenge of fast control of state manipulation. This time can be significant, and sometimes forces the server and client to slow down their emission rates to account for the client's basis adjustment. We denote client switching time by $t_\text{switch}$.

For all prepare-and-send type clients, the duration of one attempt is determined by three factors: the emission duration (including procedural steps like cooling), the communication time of the quantum signal and heralding signal (with $t_\text{com} = L/c_\mathrm{fibre}$ the one-way communication time, where $c_\mathrm{fibre}\approx \SI{2e8}{\meter\per\second}$ is the speed of light in fibre), and the switching time of the client's photon rotation $t_\text{switch}$. If both the client and the server emit, the slowest emitter is leading in the time-per-attempt. It is assumed the client can switch their state manipulation device directly after each emission, such that there is a full attempt time for the client to do so. Only if the client cannot adjust their state manipulator within such an attempt, the attempt rate has to slow down to accommodate $t_\text{switch}$.

The communication time (both the signal travel distance and the heralding communication time) differs between the rotation-based clients and the other prepare-and-send clients. The communication time further depends on the location of heralding: when using a BSM to teleport the client's state into the server, this BSM can be located anywhere between the client and the server. Most commonly, for an emitting type client, it will be located midway between the client and server. However, if it is located near the server, if a reflection-based approach is used, or if the heralding signal comes from the client, as in the measurement-based RSP client, the signals need to travel further. In these setups, extra communication time is added to each attempt, this is depicted in Figure \ref{fig:heralding}. 
\begin{figure}[htbp]
    \centering
    \begin{subfigure}{\columnwidth}
        \centering
        \includegraphics[width=\linewidth]{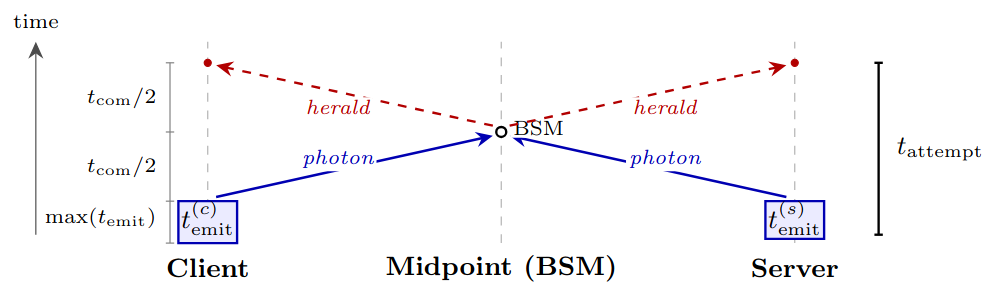}
        \caption{Midpoint heralding}
        \label{fig:midpoint_herald}
    \end{subfigure}
    
    \vspace{0.5cm} %
    
    \begin{subfigure}{\columnwidth}
        \centering
        \includegraphics[width=\linewidth]{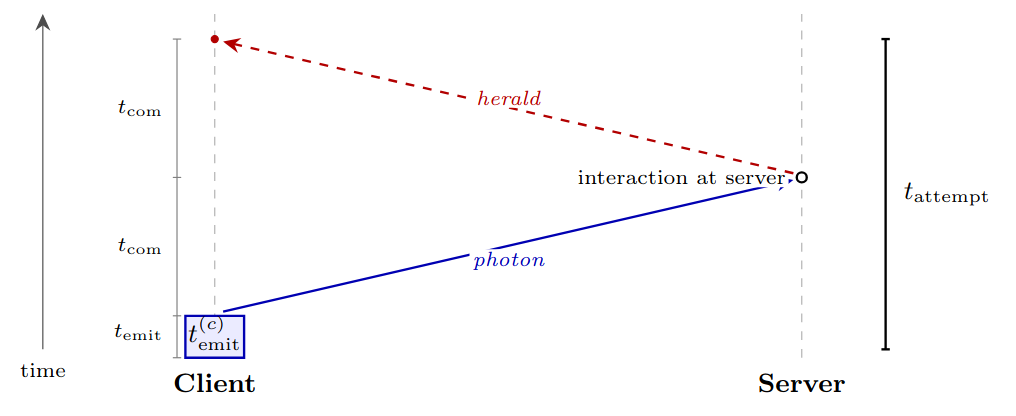}
        \caption{End-point heralding, depicted here for server-side heralding}
        \label{fig:end_herald}
    \end{subfigure}

    \caption{The time per attempt ($t_\text{attempt}$) depends on the location of the heralding. When heralding is done at the midpoint between the client and the server (a), less communication is required compared to the situation in (b), where heralding is done at an end-point.}
    \label{fig:heralding}
\end{figure}
For mid-point heralding, the total time of an attempt for prepare-and-send (p-a-s) type clients can be expressed as
\begin{equation}
    t_\text{p-a-s}^\text{mid herald} = \max\!\left[\max\!\left(t_\text{emit}^{(c)},\, t_\text{emit}^{(s)}\right) + t_\text{com},\, t_\text{switch}\right].
    \label{eq:t_pas_mid}
\end{equation}
Whereas for a server-side or client-side herald, it is given by
\begin{equation}
    t_\text{p-a-s}^\text{end herald} = \max\!\left[\max\!\left(t_\text{emit}^{(c)},\, t_\text{emit}^{(s)}\right) + 2t_\text{com},\, t_\text{switch}\right].
    \label{eq:t_pas_server}
\end{equation}
For the measurement-based RSP and single-pass manipulation clients there is no client emission time to take into account, so we can set $t_\text{emit}^{(c)}=0$. For a WCP-based client, the clients emission time is unlikely to be limiting, given high rates of WCP sources. Additionally, for the single-pass manipulation clients there is an additional communication leg, 
\begin{align}
    t_\text{single-pass}^\text{mid herald} =& \max\!\left[t_\text{emit}^{(s)} + 2t_\text{com},\, t_\text{switch}\right],\label{eq:t_single_mid}\\
    t_\text{single-pass}^\text{end herald} =& \max\!\left[t_\text{emit}^{(s)} + 3t_\text{com},\, t_\text{switch}\right].
    \label{eq:t_single_serv}
\end{align}
With $t_\text{switch}=0$ for the $H/Z(\pi/4)$-gate client.

The $T$- and $T/X$-gate clients require multiple round-trips. Here again, the time-per-round for these multi-pass clients depends on the location of the heralding. Heralding at the server takes an additional $t_\text{com}$ per round-trip compared to heralding at a midpoint station. We then have
\begin{align}
    t_{T}^\text{mid herald} &= 2t_\text{emit}^{(s)} + 4t_\text{com} \,, \label{eq:t_T_mid} \\
    t_{T}^\text{end herald} &= 2t_\text{emit}^{(s)} + 6t_\text{com} \,, \label{eq:t_T_serv} \\
    t_{T/X}^\text{mid herald} &= 3t_\text{emit}^{(s)} + 6t_\text{com} \,, \label{eq:t_TX_mid}\\
    t_{T/X}^\text{end herald} &= 3t_\text{emit}^{(s)} + 9t_\text{com} \,. \label{eq:t_TX_serv}
\end{align}

For the direct measurement approach, the concept of time per attempt becomes a bit different. For all other architectures, we have looked at the time and probability of remotely preparing or manipulating one qubit. There, one re-tries that one qubit until success, or until another qubit in the graph reaches its cutoff time. In the direct measurement approach, the photon that is sent to the client is already entangled with the rest of the register, losing it therefore means the whole graph needs to be built again. Because of this, the client also does not have to classically communicate a success/failure message for each qubit back to the server. Instead, the server can send over the full graph state, and the client only has to communicate whether or not the full graph state arrived. The time \emph{per qubit} only depends on either how fast the server can initialize and emit its qubits, or on how fast the client can switch its measurement basis. We then define the time of one \emph{attempt} for an $n$-qubit graph as one attempt of sending over the full graph, meaning we first initialize the graph $t_\text{init-graph}$, which includes entangling operations, then perform the \emph{per qubit} operations $n$ times and finish off with waiting for the qubits to arrive and receiving the heralding signal.
\begin{equation}\label{eq:tgraph_meas}
    t_\text{dir. meas}^\text{graph}= t_\text{init-graph} + n\cdot \max\!\left(t_\text{emit}^{(s)}, t_\text{switch}\right) + 2t_\text{com}\,.
\end{equation}
We note that the server does not need to prepare the full graph before it starts sending qubits to the client, instead, it can send any qubit to the client that is already fully entangled in the graph. Still, this equation holds under two assumptions: First, the server cannot perform emissions and graph-preparation operations in parallel. Second, the server cannot abort on a loss herald from the client. This second assumption is true when a loss herald from the client does not reach the server in time, i.e., the preparation of the graph is faster than the round-trip communication time. 

\textbf{Round-trip penalty for rotation-based clients.} The rotation-based client architectures suffer a compounding performance penalty from the multiple communication legs required per attempt, both through increased transmission loss and through longer attempt durations. To quantify this combined effect, we compare each architecture against the emission- and measurement-based clients as a baseline.

The expected time per successful state preparation scales as $t_\text{attempt}/p_\text{success}$. Considering transmission as the sole source of loss, the slowdown ratio for a rotation-based architecture $X$ relative to the emission- and measurement-based baseline is
\begin{equation}
    R_X = \frac{t_\text{emit} + \ell_t\, t_\text{com}}{t_\text{emit} + 2\, t_\text{com}} \cdot \frac{\eta_\text{trans}(L)}{\eta_\text{trans}(\ell_q L)} \,,
    \label{eq:slowdown}
\end{equation}
where $\ell_q$ is the number of quantum communication legs relevant for loss ($\ell_q = 2$, $4$, and $6$ for the single-pass manipulation clients, $T$-gate, and $T/X$-gate clients, respectively), and $\ell_t$ is the total number of communication legs relevant for attempt duration ($\ell_t = 3$, $6$, and $9$, respectively). In the regime where $t_\text{emit} \ll t_\text{com}$, this simplifies to
\begin{equation}
    R_X \approx \frac{\ell_t}{2} \cdot 10^{\alpha(\ell_q - 1)L/10} \,.
    \label{eq:slowdown_simplified}
\end{equation}
To illustrate the magnitude of this penalty, consider $L = 25$~km and $\alpha = 0.2$~dB/km. The one-way communication delay is $t_\text{com} = L/c_\mathrm{fibre} \approx 125~\mu$s, which typically dominates over $t_\text{emit}$. At this distance, the emission- and measurement-based clients have approximate single-attempt transmission probabilities of about 0.3, while the single-pass manipulations clients, $T$-gate, and $T/X$-gate clients achieve approximately 0.1, 0.01, and 0.001, respectively. Applying Eq.~\eqref{eq:slowdown_simplified}, this gives slowdown factors of approximately $5\times$, $95\times$, and $1423\times$ for the single-pass manipulation clients, $T$-gate, and $T/X$-gate architectures, respectively. We stress that this estimate accounts only for transmission loss; including the architecture-dependent loss terms from Equations~\eqref{eq:p_zrot}--\eqref{eq:p_TX}, where each additional loss channel is traversed multiple times, will further widen these gaps.

\textbf{Total round duration. } Up until now we discussed the probability of one client-server interaction succeeding, and thereby we know how many attempts on average it takes to get one successful interaction. We also discussed how long one such interaction takes, and, in Section \ref{subsec:security}, how many qubits are needed for one round of the protocol. Here, we will work on putting those things together into one metric: the expected time to finish one round of the protocol. To disentangle the specific security proof from the client architecture, we will work with a general number of qubits per graph $n$, which can then be filled-in with the number of qubits needed for a specific protocol. The time to successfully create the whole graph state, and thereby complete one round of the protocol, is a random variable $T$. Since the duration $t$ of a single attempt depends on the architecture, we introduce the random variable $K$ which gives the number of attempts needed to complete one round. The expected time to create the graph state can be obtained as $\EE(T)=\EE(K)t$.

For the direct measurement approach, a limiting factor is that the qubits are entangled before transmission. This means that if any of the qubits is lost, the whole graph needs to be recreated and sent anew. One attempt in this architecture therefore corresponds to sending the full graph state of $n$ qubits. As the probability of one qubit being detected at the client is given by $p_\text{meas}$ (Equation \ref{eq:p_meas_rsp}), the probability of the whole graph arriving is $p_\text{meas}^n$. The client and server thus will take on average $\EE(K_\mathrm{meas})=1/p_\text{meas}^n$ attempts to send the whole graph, where each attempt takes $t_\text{dir. meas}^\text{graph}$ (Equation \ref{eq:tgraph_meas}). The expected time to complete one round in the direct measurement approach is then
\begin{equation}
    \EE(T_\text{meas}) = \frac{t_\text{dir. meas}^\text{graph}}{p_\text{meas}^n} \,.
\end{equation}

For all other architectures the scaling with the success probability of the single-qubit interaction can be better. We assume that the client and server prepare the graph state using the \textit{measure-as-you-go strategy}, in which a qubit in the graph state is measured as soon as all its neighbours have been generated and the relevant $CZ$-gates have been performed. The qubits are prepared sequentially according to a predetermined flow that specifies the order in which the qubits must be generated. Each attempt now corresponds to generating a single qubit of the graph state, and succeeds with probability $p$. To limit decoherence, a cutoff is imposed on the number of attempts a qubit may wait in memory before it is measured. We consider \emph{qubitwise cutoffs} $c$ in which a qubit must be discarded if it cannot be measured within $c$ attempts after being first generated. If a qubit in the graph violates this cutoff condition, then the whole graph has to be discarded and the generation has to start from scratch.

The cutoff therefore induces a trade-off between the decoherence accumulated in the graph state and the time it takes to create it. In \cref{app:rate_deriv} we study the expected number of attempts $\EE(K)$ to generate two specific graphs: a simple linear graph, which can be used for single-qubit computations~\cite{danos2007measurement}; and a brickwork graph, which is a universal resource for MBQC~\cite{broadbent2009universal}. The derivations use Wald's equation applied to a renewal process with a stopping condition (Theorem 3.3.2 in Ref. \cite{ross1995stochastic}). The results are discussed below. 

For a linear graph of $n$ qubits we denote the number of attempts to create the full graph state in the measure-as-you-go paradigm subject to a qubitwise cutoff $c$ by $K_\mathrm{linear}(c)$. Its expectation values is (\cref{eq: expected number of attempts linear graph})
\begin{equation}\label{eq: main: expected number of attempts linear graph}
    \EE\big(K_\mathrm{linear}(c)\big) = \frac{\frac{1}{p} +\frac{1}{p}\sum_{j=1}^{n-1}(1-(1-p)^c)^{j}}{(1-(1-p)^c)^{n-1}}.
\end{equation}
If the coherence time is small compared to the expected time for one successful qubit, then the cutoff must satisfy $c\ll 1/p$ to ensure a high quality graph state. In this regime, the expected time to complete one round scales as
\begin{equation}
    \EE\big(T_\mathrm{linear}(c)\big) \sim \frac{t}{c^{n-1}p^{n}}.
\end{equation}
For a cutoff $c=1$ this yields the same scaling with $p$ as the direct measurement approach, but the scaling can be improved by increasing the cutoff. In fact, if the coherence time is large compared to the expected time for one successful qubit, then the cutoff can also be large. In the limit of $c\to \infty$ this leads to the expected time to create the graph equal to $n$ times the expected time to prepare one qubit:
\begin{equation}
    \lim_{c\to \infty} \EE\big(T_{\text{linear}}(c)\big) = \frac{nt}{p}.
\end{equation}
The $1/p$ scaling in this regime is much better than the $1/p^n$ scaling of the direct measurement approach.

For a brickwork graph with $n_\mathrm{r}$ rows and $n_\mathrm{c}$ columns, for a total of $n=n_\mathrm{r}n_\mathrm{c}$ qubits, we consider a measure-as-you-
go flow in which the qubits are generated from
top to bottom in each column, and the columns are generated
from left to right. The qubitwise cutoff $c$ then imposes the constraint that a qubit in row $i$ and column $j$ must be generated within $c$ attempts of generating the qubit in row $i$ and column $j-1$. If this does not happen, the latter qubit cannot be measured within $c$ attempts after being created, and violates the cutoff condition.

It is difficult to obtain the scaling of the expected time $\EE\big(K_\mathrm{brickwork}(c)\big)$ directly because the qubitwise cutoffs lead to many interdependent conditions on the number of attempts allowed for each qubit (c.f. \cref{eq: condition succ B}). However, the expected number of attempts subject to qubitwise cutoffs can be bounded by considering the coarser \emph{columnwise cutoffs}, in which every column of the graph state must be generated within $c$ attempts.
Intuitively, a columnwise cutoff $c$ is a looser condition than a qubitwise cutoff $c$. The columnwise cutoff $c$ only places the constraint that the first qubit of a column is generated within $c$ attempts of the first qubit of the previous column, while the qubitwise cutoff places this constraint on all qubits in the column. On the other hand, a columnwise cutoff of $c/2$ makes sure that two consecutive columns are generated within $c$ attempts, so that all qubitwise cutoffs between these columns are definitely satisfied. The precise formulation of these intuitions can be found in \cref{lemma: succesful path inclusion lemma}, and leads to the bounds of \cref{cor: Bounds on expected time to create full graph}. 

Assuming $c\gg n_\mathrm{r}$, i.e. that the cutoff is much larger than the number of rows, the scaling of the expected time with qubitwise cutoffs to lowest order in $p$ is
bounded as (c.f. \cref{cor: Bounds on expected time to create full graph} and \cref{eq: lowest order p scaling columnwise cutoff brickwork})
\begin{equation}\label{eq: brickwork scaling column}
    \frac{t}{\binom{c}{n_\mathrm{r}}^{n_\mathrm{c}}p^{n}}\lesssim \EE\Big(T_\mathrm{brickwork}(c)\Big) \lesssim \frac{t}{\binom{c/2}{n_\mathrm{r}}^{n_\mathrm{c}}p^{n}}.
\end{equation}
Again, the scaling with $p$ can be much better than for the direct measurement approach, depending on the cutoff. 
We show the scaling with $p\in [10^{-3},10^{-1}]$ of the upper and lower bounds of \cref{cor: Bounds on expected time to create full graph} on the expected number of attempts for a brickwork graph state with $n_\mathrm{r}=2$, $n_\mathrm{c}=10$ and qubitwise cutoff $c$ in \cref{fig:nr2nc10c1000}. These numbers correspond for example to an attempt duration $t\sim \SI{100}{\micro\second}$ and coherence time of $\sim \SI{1}{\second}$, which is $10^4\,\mathrm{attempts}$, where the cutoff on the number of attempts is chosen an order of magnitude below the coherence time of $10^4$ attempts. In this example, the upper and lower bound are tight for $p\gtrsim 10^{-2}$. This is the regime where the graph succeeds with very little to no resets (i.e., we rarely violate the cutoff condition) as the expected time $n_\mathrm{r}/p$ to prepare one column of qubits is close to or below the cutoff $c$. In other words, the cutoff condition can be satisfied 
often if $n_\mathrm{r}/p\ll c$, that is, when $p\gg n_\mathrm{r}/c$. In this case, $n_\mathrm{r}/c=2\cdot 10^{-3}$, and we see from the blow up of the expected number of attempts that for $p\lesssim 10^{-2}$ the cutoff condition is violated often and many resets are triggered. Because of the many resets, the generation time blows up. Similar behaviour is seen for other sizes of graph states where the lower and upper bound are tight when there are few resets, but blow up and are separated by a multiple orders of magnitude when resets become dominant, c.f. Figure \ref{fig:examples bounds}. However, the fact that both blow up show that graph state creation subject to qubitwise cutoffs quickly becomes infeasible when $p$ is not significantly larger than $n_\mathrm{r}/c$.

Finally, aside from its use in bounding the qubitwise cutoff for a brickwork graph state, the columnwise cutoff condition is a sensible condition for cutoffs in linear graph states of size $n$ of which each node consists of $k$ qubits, constructed with the BatchRSP resource of the WCP client \cite{garnier2024composably}: it requires that all qubits in the BatchRSP are generated within a cutoff $c$, so that the decoherence in each BatchRSP is limited separately. For a resource of size $k\ll c$ the scaling of the lower bound in Equation \ref{eq: brickwork scaling column}, which corresponds to the scaling for columnwise cutoffs $c$ in a brickwork graph, yields, after renaming the variables,
\begin{equation}
    \EE\big(T_\mathrm{linear-WCP}(c)\big) \sim  \frac{t}{\binom{c}{k}^{n}p^{kn}}.
\end{equation}

\begin{figure}
    \centering
    \includegraphics[width=\linewidth]{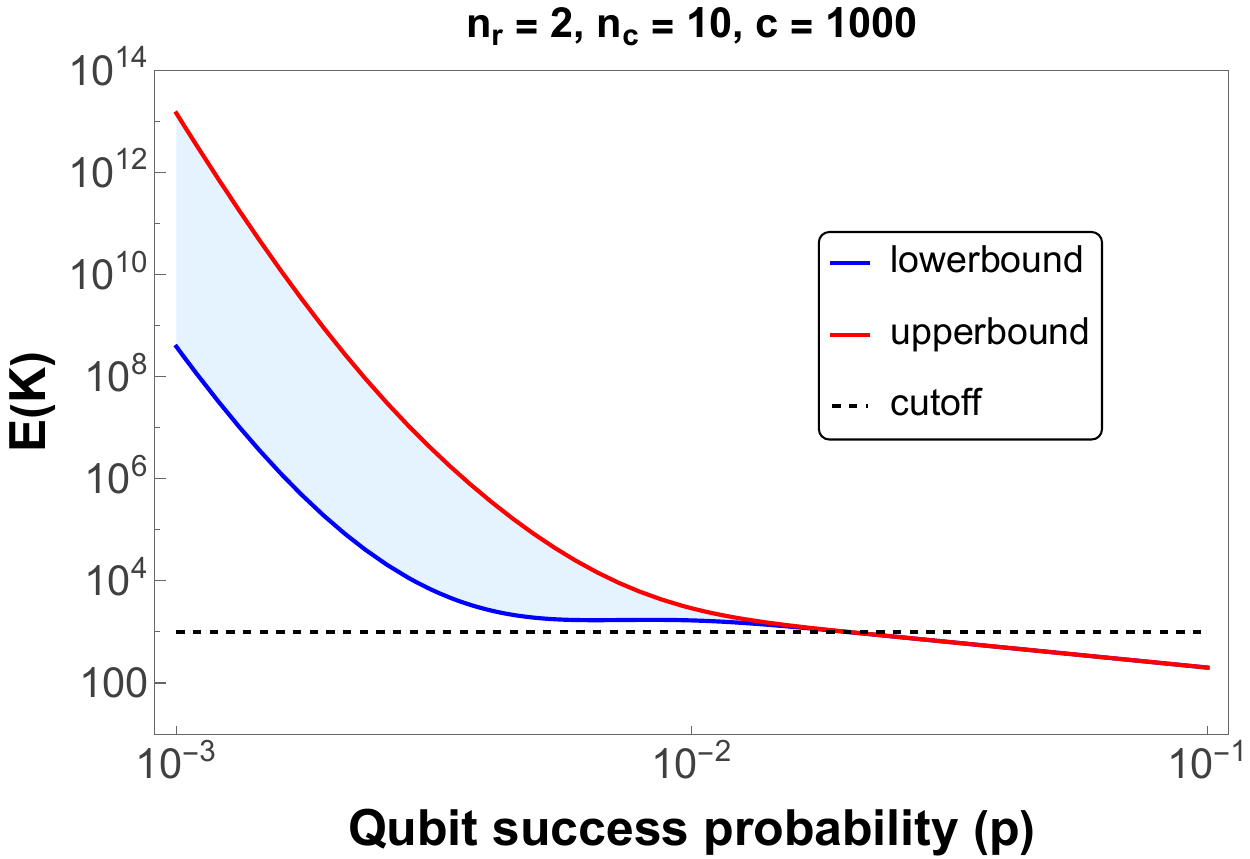}
    \caption{Upper and lower bounds to expected number of attempts for generating brickwork graph state with cutoffs. The upper and lower bounds on the expected number of attempts to generate a brickwork graph state with qubitwise cutoffs are evaluated numerically for a brickwork graph of dimensions $n_\mathrm{r}=2$ and $n_\mathrm{c}=10$ with cutoff $c=1000$ attempts in the regime $p\in [10^{-3},10^{-1}]$. The cutoff is shown as a horizontal dashed line. For $p\gtrsim 10^{-2}$ the expected number of attempts is less than or on the order of the cutoff. In this regime the upper- and lower bound are tight. For $p\lesssim 10^{-2}$ the upper bound blows up. The lower bound initially stays around the cutoff when decreasing $p$ below $10^{-2}$, but then blows up too. This bounds the range of the expected number of attempts within some orders of magnitude.}
    \label{fig:nr2nc10c1000}
\end{figure}

\textbf{Multiplexing.} For any practically useful application of blind quantum computing, multiplexing is crucial. Consider for example FeMoco, the canonical chemistry benchmark~\cite{reiher2017elucidating}. In the circuit model, this uses $n_\mathrm{r} = 100$ logical qubits and $10^6$ gate layers (trotterized simulation). In MBQC format, we need a brickwork graph state of size $n = n_\mathrm{r} \times n_\mathrm{c}$, with $n_\mathrm{r}$ the number of qubits in the circuit model and $n_\mathrm{c} = d \times g$ the number of columns, where $d$ is the circuit depth and $g$ is the gate decomposition overhead (5--15). For FeMoco this comes down to at least about $10^9$ qubits. They do not all have to exist simultaneously, but they must all be created or manipulated remotely by the client, making this the minimal number of photons that must travel between client and server, plus a heralding signal to return. Even if transmission were perfect, we never have to infer a cutoff, and all signals travel at the speed of light in vacuum $c_\mathrm{vac}$, this would still require $2L/c_\mathrm{vac} \times n$ time, amounting to roughly 46 hours at $L=25$~km. This is an absolute minimum, and in practice transmission is far from unity, and without $p\gg n_\mathrm{r}/c$ the expected time will blow up due to cutoff-induced resets. On top of this comes security overhead, and unless everything is perfectly noise free, the computation would have to be repeated to achieve statistical accuracy (this is built into the security overhead for noise-robust models). The only solution to overcome this impracticality is multiplexing.

The suitability of each client architecture for multiplexing varies considerably, driven primarily by three factors: the server qubit occupation time per attempt, the cost of duplicating client hardware, and the directionality of the quantum channel. We refer to~\cite{propp2025quantum} for a detailed treatment of both classical and quantum multiplexing strategies for quantum network protocols, including remote state preparation; here we discuss the qualitative differences between the client architectures.

For all RSP-based architectures (both emission- and measurement-based), each qubit preparation is independent, and the server qubit is occupied only for approximately $2t_\text{com}$ per attempt. This independence allows straightforward spatial multiplexing: multiple server qubits can attempt RSP in parallel without requiring coherence between them. Among these, WCP-based clients are particularly well-suited for multiplexing: a coherent state can be split over $S$ spatial modes using a beam splitter array at negligible cost, and the preparations require only a unidirectional quantum channel, simplifying wavelength- or time-division multiplexing over shared fibre infrastructure. Measurement-based RSP shares the advantage of unidirectional channels and independent preparations, though client duplication requires additional detectors. For single-photon emission clients, the duplication cost depends on the physical platform: in systems where multiple emitters can share an existing infrastructure (e.g., adding ions to a trap), the marginal cost of additional client emitters can be low, whereas platforms requiring independent source setups face the matching requirements for indistinguishability of the photons which we will discuss in Section~\ref{subsec:cost}. In all cases, the protocol structure for emission- and measurement-based clients is equally parallelisable.

The rotation-based architectures face a compounding multiplexing penalty. The server qubit occupation time per attempt increases to $3t_\text{com}$, $6t_\text{com}$, and $9t_\text{com}$ for the single-pass manipulation, $T$-gate, and $T/X$-gate clients respectively, while the success probability per attempt simultaneously decreases (Equations~\ref{eq:p_zrot}--\ref{eq:p_TX}). Since the number of server qubits required to sustain a given preparation rate scales as $t_\text{attempt}/p_\text{success}$, the server-side multiplexing resources grow rapidly. Furthermore, the photon must traverse the channel bidirectionally, complicating wavelength- or time-division multiplexing due to crosstalk between outgoing and returning photons across multiplexed channels.

The direct measurement approach presents a distinct set of considerations. Because the server must coherently emit the qubits of a graph state, spatial multiplexing requires the server to maintain entanglement across all simultaneously in-flight qubits, coupling the degree of parallelism directly to the server's quantum memory capacity and connectivity. This contrasts with the RSP-based architectures, where no such inter-qubit coherence is needed. On the other hand, the direct measurement client does not require heralding per qubit, as it only matters whether or not the full graph was transmitted successfully.

\subsection{Noise and errors}\label{subsec:fidelity}

After discussing considerations and scaling in rate, we now discuss the other side of the coin: errors. As stated before, rate and errors are often not independent, and we will underline whenever error terms are tightly connected to rate considerations. When there is a rate-fidelity trade-off, the optimal point is context dependent. Below, we'll go over all noise and error terms, and discuss how each of them affects the different architectures, which is summarized in \ref{tab:errors}. 

\begin{table*}[t]
\caption{Dominant error sources per client architecture.
  Filled circles~(\protect\symP) indicate the error source is present;
  open circles~(\protect\symA) indicate it is not.
  Blue-grey circles~(\protect\symC) indicate the error is present but
  with a relevant caveat, reduction or note (see text).
  Pills~(\protect\pillP{$n$}) indicate the error compounds $n$ times due
  to multiple round-trips or detection events; blue-grey pills~(\protect\pillC{$n$}) indicate
  compounding with a caveat. The decoherence is listed when the per-photon creation probability ($p$) is much smaller than the number of rows in the graph ($n_\mathrm{r}$) divided by the cutoff ($c$), i.e., for $p\ll n_\mathrm{r}/c$, for $p\gg n_\mathrm{r}/c$ the decoherence depends on the rate and flow of the graph (see Section~\ref{subsec:fidelity}). For the direct measurement client, the decoherence depends on the local qubit initialization or preparation time $t_\mathrm{init}$.}
\label{tab:errors}
\centering
\includegraphics{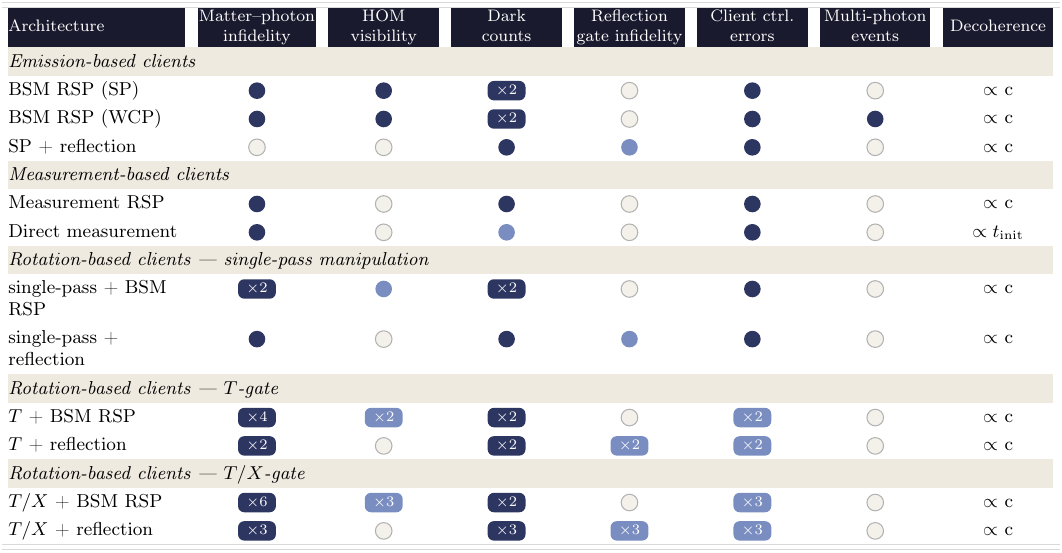}
\end{table*}

\textbf{Matter-photon infidelity. } Many architectures rely on the server emitting photons entangled with a matter-based memory qubit. In fact, this is true for all architectures that do not rely on cavity reflection techniques. In the emission-based and measurement-based RSP, infidelity in the matter-photon entangled state directly relates to infidelity in the remotely prepared state, and in the direct-measurement approach, it determines the fidelity of the graph state the client performs their measurement on. Similarly, for rotation-based clients, the server supplies the photons to the client, so infidelities here become infidelities in the remotely prepared state, which occurs multiple times per qubit: either once per round-trip if a reflection-based approach is used, or twice per round-trip for a BSM approach. 

\textbf{Hong-Ou-Mandel visibility. } The quality of the BSM depends on the indistinguishability of the photons that interfere in it. Considering distinguishability as the sole imperfection, the teleported-state fidelity is determined by the visibility $V$ as $F=(1+V)/2$. where $V=1$ for perfectly indistinguishable photons. For the rotation-based clients that use the BSM approach to transfer the state back onto the server, the photons will have the same source: they both originate from the server. While this does not guarantee perfect visibility, it greatly reduces errors compared to BSMs between dissimilar photon sources.

As a concrete example of the visibility effect with two different photon sources, assume a trapped-ion server interfacing with a quantum dot client emitter. The dominant source of distinguishability is the linewidth mismatch between the two emitters. For photons with Lorentzian spectra of linewidths $\Gamma_1$ and $\Gamma_2$, the spectral overlap, and thus the HOM visibility, is given by Ref.~\cite{legero2003time}
\begin{equation}\label{eq:visibility}
    V = \frac{4\Gamma_1\Gamma_2}{(\Gamma_1 + \Gamma_2)^2}.
\end{equation}
For a $^{174}$Yb$^+$ ion interacting via its 935\,nm transition, the atomic dipole decay rate is $\Gamma_\text{ion} \approx 2\pi \times 4.2$\,MHz, while the InAs quantum dot radiative linewidth is $\Gamma_\text{QD} \approx 2\pi \times 250$\,MHz, yielding a mismatch of approximately $60\times$~\cite{meyer2015direct}. Substituting into Equation~\ref{eq:visibility} gives $V \approx 0.065$, and therefore a prepared-state fidelity of only $F \approx 53\%$, barely above the classical limit of $50\%$. Mitigation strategies such as Purcell enhancement~\cite{purcell1946resonance} and spectral filtering can partially recover visibility, but these add experimental overhead, and residual mismatch remains a
significant source of infidelity.

Timing mismatches also decrease visibility, which can be mitigated by temporally filtering the photons, for example by imposing a strict time window in which successes are allowed at the BSM. This gives another rate-fidelity trade-off as reducing this window lowers rate, but improves overlap and therefore fidelity of the teleported state. 

\textbf{Dark counts. } Detector dark counts can cause false heralding. These can occur either at the BSM, further reducing the quality of the SWAP operation, or at the client in the measurement-based architectures. For symmetric threshold detectors (e.g., for polarisation-encoded photons, the probability for having a dark count in H is the same as for V) and a single-excitation dual-rail qubit, this noise is effectively depolarizing.

\textbf{Cavity-reflection gate infidelity.}
All cavity-reflection-based schemes share a common source of gate infidelity: amplitude mismatch between the two conditional reflection amplitudes. When the atom is decoupled from the cavity the photon reflects with amplitude $r_0 \approx +1$; when coupled, it reflects with amplitude $r_1$ that approaches $-1$ only as to cooperativity $C_i \to \infty$. At finite cooperativity, $|r_0| \neq |r_1|$, leaving a residual conditional infidelity. How each scheme handles this mismatch determines its error structure.

In the standard CAPS regime, all reflected photons are accepted and the residual infidelity scales as $(1+C_i)^{-1}$~\cite{kikura2025passive}. In the high-fidelity CAPS regime, a calibrated attenuator in the off-resonant reflection path equalises the two amplitudes, yielding unit conditional fidelity at the cost of a reduced success probability $p_\text{CAPS}^\text{opt} < p_\text{CAPS}$~\cite{kikura2025passive}. The gate infidelity is therefore absorbed into the rate rather than appearing as a separate error term.

For SPRINT, the optimal coupling condition $\kappa_\text{ex}^\text{opt} = \kappa_i\sqrt{1+2C_i}$ enforces complete forward destructive interference, so finite cooperativity and intrinsic cavity loss reduce the efficiency but not the conditional fidelity. Residual infidelity instead arises from imperfect circular polarization of the evanescent WGM field (parameterized by $r_\sigma$ and $r_\pi$), unwanted backscattering between counter-propagating modes at rate $h$, and interference between multiple excited-state pathways. When acting in isolation, $r_\sigma$ or $r_\pi$ caps the optimized fidelity at $F = 1/(1+2r_j^2)$ with $j \in \{\sigma,\pi\}$ (Eqs. 14, 18 of Ref.~\cite{rosenblum2017analysis}), while $h$ degrades $F$ quadratically (Eq. 23). The multi-excited-state effect can be fully compensated by a small cavity detuning. These imperfections are platform-dependent: for instance, a static magnetic field reduces $r_\sigma$ and $h$ and thus improves fidelity.

The gate infidelity entry for all cavity-reflection rows in Table~\ref{tab:errors} is marked \symC: infidelity is present in all variants but its magnitude and origin differ, and in the high-fidelity CAPS case it is better treated as a rate penalty instead.

Whether atoms left outside the qubit subspace by unwanted transitions contribute to infidelity or only to rate loss depends on whether such leakage events are heralded by the server's readout mechanism.

\textbf{Client control errors. } In all client architectures, the client uses linear optics to control photonic quantum states. Imperfections in this control hardware will effect all client architectures similarly, except for the fixed-gate clients. Here, the control hardware is not dynamically reconfigurable, and it can therefore be aligned more carefully, leading to fewer errors. There is, however, also a compounding effect: not only are there multiple round-trips past the client for each node in the graph, but for each round-trip, the static gates are applied multiple times. Any imperfections in the alignment, which may be smaller than for the reconfigurable controllers, will therefore be multiplied.

For the reconfigurable controllers, there is often a trade-off between the controller speed, as discussed in the last section, and the precision of its control. This is not a fundamental trade-off and more of an engineering challenge, though a trade-off that is often seen regardless. 

\textbf{Multi-photon events. } We assume all single photon emitters are true single photon sources. This is because probabilistic multi-photon events cause security problems. In \cite{garnier2024composably}, this security concern is patched in the context of WCP sources. With WCP-based clients, the multi-photon emissions do not only cause a security-induced protocol overhead, as discussed in Section \ref{subsec:security}, but also errors. At a BSM, the multi photon events can cause false heralding similar to dark counts. 

\textbf{Decoherence. } Decoherence of the server memory qubits is present for all architectures, but in varying degrees. For most architectures, the qubits in memory will decohere for as long as it takes for their neighbours in the graph state to be remotely prepared if $p\gg n_\mathrm{r}/c$, recall that $n_\mathrm{r}$ is the number of rows, and $c$ is the cutoff. If instead $p\ll n_\mathrm{r}/c$, the cutoff will often be triggered, and the average time spent in memory will approach the cutoff time itself. How much decoherence you `allow' to happen can be set directly by choosing $c$. This rate-fidelity trade-off occurs for all architectures except for the direct measurement client.

In the direct measurement approach the creation of the qubits happen locally, which is assumed to happen deterministically and take $t_\mathrm{init}$. Qubits will not idle in memory for multiple attempts, but instead are re-initiated after every attempt of sending the graph. Because of this, the qubits will undergo relatively little decoherence in this approach. While this sounds like a positive aspect compared to the other approaches, we note that one can achieve the same in the other architectures by choosing to set the cutoff to $c=1$, though this is often not an optimal choice given the detrimental effect this has on the rate.

\textbf{Compounding errors. } We briefly mentioned that in the fixed-gate clients, the client control error compounds, because the gates are traversed multiple times. In addition to this, the whole sequence of emission, applying gates, and transferring the state back onto the matter qubit, is also traversed multiple times for the multi-pass manipulation clients: twice for the $T$-gate client and three times for the $T/X$-gate client. Because of this, all the errors that occur along the way compound. This is also true for the WCP-based client, where multiple qubits need to be concatenated into one for each qubit in the graph, compounding the errors of all qubits $k$ in the gadget.

\textbf{Overview. } Now that we have discussed all the errors that occur for each architecture and their corresponding protocols, let us collect them into an overview. We collect all errors in Table \ref{tab:errors}.

\subsection{Hardware cost and complexity}\label{subsec:cost}

The advantage of a client architecture can be something different than how it performs or what security guarantees it provides. Specifically, the client device is supposed to be a simple, cheap device relative to the quantum server. In addition, some architectures put requirements on the server, and can thus not be used in combination with any setup. We discuss these aspects, as well as general design considerations for the client devices, below.

\textbf{Photon source considerations for emission-based clients.} For BSM-based remote state preparation, the interference visibility, and hence the fidelity of the prepared state, depends critically on the indistinguishability of the client and server photons. Achieving high spectral, temporal, and spatial overlap typically requires that both photons originate from similar physical systems, pushing the client towards hardware comparable to the server and thereby defeating the purpose of delegating a computation in the first place. WCP sources are substantially cheaper than true single-photon sources, and they have the additional upside that their properties can to a greater extent be shaped through external modulation to better match the server's emission profile.

For cavity reflection, the constraints shift from photon-photon indistinguishability to photon-matter coupling. The client's photon must be resonant with the server's atomic transition and have a bandwidth within the cavity linewidth, while maintaining high optical coherence.

\textbf{Server capability considerations.} The hardware requirements at the receiving node differ substantially between the two cavity-reflection schemes. CAPS~\cite{kikura2025passive} demands only that the server provide a qubit with one cavity-coupled and one cavity-uncoupled state, a structure available in essentially any matter qubit with a spin- or state-selective optical transition, including neutral atoms, trapped ions, NV and SiV centers, and charged quantum dots. SPRINT~\cite{rosenblum2017analysis} is more restrictive: it requires a three-level $\Lambda$-system in which the two legs couple to *different* propagation modes of the cavity, so that counter-propagating photons address different atomic transitions. In practice this has been realized using chiral light–matter coupling in whispering-gallery-mode resonators with TM modes, restricting SPRINT-compatible nodes to platforms where such directional coupling can be engineered, primarily atoms (or, in principle, other emitters) evanescently coupled to WGM microresonators or nanophotonic waveguides supporting spin-momentum locking. This narrows the set of viable hardware platforms compared to CAPS.

For rotation-based clients, the server also acts as a single-photon source (sending qubit state $|0\rangle$ for the $H/Z(\pi/4)$ and $T/X$ clients and $|+\rangle$ for the $Z$-rotation client and the $T$-gate client), in addition to the role it takes in other prepare-and-send-type protocols. Specifically, because the rotation-based clients do not have a memory, the server needs to be able to emit at a sufficiently high rate. For a rotation-based client that uses BSM-based teleportation to transfer the state back into the server memory, the server needs to emit a memory-entangled photon towards the BSM exactly one communication time ($t_\text{com}$) after emitting the photon towards the client, thus we need $t_\text{emit}^{(s)}\leq t_\text{com}$.

In BSM-based teleportation approaches, the server relies on generating matter–photon entanglement. By contrast, in direct measurement approach, the server must coherently prepare and transmit a photonic graph state to the client by mapping matter qubits onto photonic qubits (e.g., via SWAP operations). This requires the ability to emit sequences of photons with a stable phase relation and to preserve entanglement across multiple emissions, which can be a difficult engineering challenge.

\textbf{Fast control of state manipulation.} In prepare-and-send protocols, the client must prepare states $\ket{+_\theta}$ with $\theta \in \{k\pi/4\}^{7}_{k=0}$, selecting a fresh random angle for each attempt to maintain blindness. The $Z$-rotation client of Ref.~\cite{kashefi2024verification} has this same requirement. This angular control can be implemented using waveplates or liquid crystal retarders. This is often a trade-off in choosing the hardware: more precise waveplate setups often have lower switching speeds and vice versa. Lower switching speeds means that the attempts are slower, leading to a lower rate and more decoherence, while lower precision leads to a lower fidelity of the remotely prepared state. This trade-off is not fundamental: more precise and fast state manipulation can also be implemented with for example EOMs \cite{drmota2024verifiable}, though this increases the clients hardware cost and complexity.

The impact of such a trade-off is amplified in round-based verification protocols such as that of Leichtle et al.~\cite{leichtle2021verifying}, where the total number of rounds required to reach a target security level scales with the per-round error rate. Reduced state preparation fidelity therefore translates directly into longer end-to-end runtimes, potentially negating gains from faster modulation. Optimising this trade-off requires minimising total protocol runtime as a function of both switching speed and state preparation fidelity; this problem is addressed in~\cite{vandam2026optimizing}.

In the direct measurement approach, blindness is protected by the no-signalling principle, and the client thus does not have to change its measurement basis for each attempt. Instead, it directly measures each qubit in the basis (depending on $\phi$) that defines the calculation, thus it must change its measurement basis for each qubit in the graph. However, for this setup, all photons of a given graph need to reach the client, i.e., there is no re-try per qubit. We discuss this further in the next section. Therefore, the basis still needs to change for every photon that is sent and the same requirements apply. The requirements on the switching speed might even be increased since we do not need to classically communicate success/failure for each qubit, only for each graph, meaning that the qubits can be sent closer together in time, as long as the servers emission rate allows. 

The architectures that do not need fast variable rotators are the fixed-gate clients of Wu et al. and Li et al., which use one or two fixed gate elements through which photons pass a variable number of times, with a fast optical switch required to control the loop count. In addition to the gate loop, the client would need a delay line it can loop through such that the total time spent at the client is independent of the number of gates that are applied, such as to not leak information to the server. Since the rotation angle is fixed, this eliminates the speed-precision trade-off for angular control, albeit with some requirement on the speed of the switch. This might provide an advantage particularly when fabricated onto a chip with integrated photonics.

The requirement on how fast the client should be able to change its basis depends on the time-per-attempt, which is discussed in Section \ref{subsec:rate}. For example, for a measurement-RSP client, this comes down to the emission time of the server plus the round-trip communication time between the client and server, combined usually sub-ms (for 50 km, the one-way communication time is 0.25 ms). The server can slow down its emission rate to accommodate for slower clients, though this of course comes at the cost of reduced rate.

We note that if the static gates with controlled loops are easier to implement in a given situation, this approach can also be used for the other client types. This is because we do not need to be able to rotate to arbitrary states, just to the 8 different states on the equatorial plane of the Bloch sphere ($|\pm_\theta\rangle,\,\theta\in\{\frac{k\pi}{4}\}_{0= k}^7$), and sometimes $|0/1\rangle$, which can be implemented with looping through static gates, such as $H$ and $T$. The other way around is also true: the fixed-gate client architectures can be implemented with variable controllers too, though this would go against the goal of the protocols to give as little capabilities to the client as possible.

\textbf{Client architecture complexity comparison.} It is important for the client device to be as simple as possible, such that the number of users that can interact with the network can be increased. The most simple, low-cost implementation would be a rotation-based client. This client type only has to manipulate a photonic quantum state, something that emission-based and measurement-based clients also have to do in addition to being able to measure or emit photons in specific bases. The fixed-gate clients are possibly simpler than the $Z$-rotation client, depending on the exact implementation and the $T$-gate client has the simplest architecture.

Beyond architectural complexity, the cost of the components themselves influences the practical accessibility of each client type. As discussed before, all architectures require photonic state manipulation, which can be done by inexpensive elements such as waveplates, though limiting the rate of execution, or more expensive elements such as EOMs. Then, adding onto this, at the low end, weak coherent pulse sources (attenuated lasers) are inexpensive, commercially available, and require minimal infrastructure. Single-photon detectors occupy a middle ground: while high-performance detectors such as superconducting nanowire single-photon detectors (SNSPDs) can be costly and require cryogenic operation, the detection task in BQC does not demand extreme timing resolution or efficiency (though lower efficiency is more detrimental in the direct measurement case, as we will see in the next section), making more affordable options like avalanche photodiodes (APDs) viable for many implementations. Single-photon sources, by contrast, sit at the expensive end of the spectrum. The cost here is driven not only by the source itself but by the matching requirements discussed earlier in this section.

\textbf{Summary.} Rotation-based clients have the simplest architecture, and each of them can be implemented with similar architectures: either with variable photonic state controllers, or fixed controllers and looping. The complexity of the client increases when introducing a WCP, measurement device or single photon emitter. The single photon emitter, crucially, has to emit photons that interact well with the photons coming from the server, introducing either visibility errors and/or technical overhead due to mitigation techniques. The cavity-reflection can only be used when the server actually has a cavity, with exact requirements depending on the technique. 

\section{Discussion}
We have introduced a client architecture taxonomy and analyzed each option across security levels and overhead, losses, rate and multiplexing, error and failure modes, and design considerations. Overviews are provided in Tables \ref{tab:security}, \ref{tab:rate}, and \ref{tab:errors}. Here, we draw out conclusions that emerge from viewing these dimensions together.

Noise robustness of the security proof acts as a practical requirement for near-term deployment, and immediately narrows the field when strong security guarantees are required. The direct measurement and fixed-gate clients both lack noise-robust verification schemes; their hardware simplicity therefore does not currently translate into deployment readiness. In addition, the rate scaling of the direct measurement approach is detrimental for large graphs. If lowering security guarantees is acceptable, for example by not requiring information-theoretic security, classical solutions exist \cite{mahadev2020classical, gheorghiu2019computationally, mahadev2018classical}.

Among the architectures with noise-robust verification, no single option dominates across all dimensions. The single-photon emission client that uses BSMs has a design tension: it requires high HOM visibility, meaning photons from the client and server should be indistinguishable, pushing the client toward server-like hardware and undermining the delegation premise. WCP sources partially escape the matching requirement, but at the cost of multi-photon events and a gadget construction to handle them from a security perspective, which carries its own costs in compounding errors and time-overhead. Rotation-based clients have a very simple architecture, but the round-trips are detrimental for rate, increasing losses and time per attempt while limiting multiplexing options and compounding errors. This disadvantage grows with client-server distance.

Two architectures stand out as particularly well-suited for near-term deployment: the single-photon emitter with cavity-reflection-based interaction at the server, and the measurement-based RSP client. Both have relatively few failure modes and avoid the photon-photon indistinguishability requirements of the BSM approach. The choice between them largely reduces to which hardware is more accessible in a given setting: cavity reflection requires a high-quality cavity and photon-matter interface at the server, and a client emitter whose photons are spectrally matched to the server's transition; measurement-based RSP shifts complexity to the client's measurement apparatus while removing the need for a client-side emitter altogether.

Ultimately, the right choice of client architecture depends on which constraints dominate. If strong security guarantees are essential, noise-robust verification is required and the field narrows accordingly; otherwise, other clients become an option. Among the noise-robust options, the trade-off depends on the available hardware — whether a high-quality photon-matter interface, a spectrally compatible single-photon emitter, or a capable measurement setup is most readily available — as well as on the client-server distance. If minimizing client-side hardware cost is the overriding priority, rotation-based clients remain attractive despite their performance disadvantages, particularly at short distances. As hardware matures and security frameworks develop, the balance between these considerations will shift, and the conclusions drawn here should be revisited accordingly.

\section*{Acknowledgements}
The authors would like to thank Sergio Loarte and Conor Bradley for useful discussions on reflection-based schemes, Harrold Ollivier and Maxime Garnier for useful discussions on the security of VBQC protocols, and the labs of Tracy Northup, Ben Lanyon, Tim Taminiau and Ronald Hanson for many useful discussions on practical implementability and server-side considerations. The authors additionally thank Soubhadra Maiti for useful comments on the manuscript.\\
This project has received funding from the European Union's Horizon Europe research and innovation programme under grant agreement No.\ 101102140.

\bibliographystyle{unsrt}
\bibliography{bib}

\appendix
\include{Appendix}
\end{document}

%% file: client_taxonomy.tex
\begin{tikzpicture}

\tikzset{
  foldtri/.style={
    regular polygon, regular polygon sides=3,
    fill=rorange!60, draw=rorange!80!black,
    inner sep=2.2pt, shape border rotate=-90,
    font=\tiny
  },
  foldtri em/.style={
    regular polygon, regular polygon sides=3,
    fill=emgreen!60, draw=emgreen!80!black,
    inner sep=2.2pt, shape border rotate=-90,
    font=\tiny
  },
}

\node[mainbox=emgreen, anchor=west] (emission) at (0, 0) {Emission-based};

\node[subbox=emgreen, anchor=west] (wcp)
    at (\hind, -0.9) {WCP (BSM)};

\node[subbox=emgreen, anchor=west] (sp)
    at (\hind, -1.9) {Single-photon};
\node[foldtri em] (spfold) at ($(sp.east)+(0.4,0)$) {};

\def\etrx{0.45*\hind}
\draw[garr] (emission.south -| \etrx, 0)  |- (wcp.west);
\draw[garr] (emission.south -| \etrx, 0)  |- (sp.west);

\node[mainbox=meblue, anchor=west]
    (measurement) at (0, -3.4) {Measurement-based};

\node[subbox=meblue, anchor=west] (measrsp)
    at (\hind, -4.3) {Measurement RSP};

\node[subbox=meblue, anchor=west] (direct)
    at (\hind, -5.3) {Direct measurement};

\draw[barr] (measurement.south -| \etrx, 0)  |- (measrsp.west);
\draw[barr] (measurement.south -| \etrx, 0)  |- (direct.west);

\node[dashbox, inner sep=3pt,
      fit=(sp)(spfold)(measrsp)] (psbox) {};
\node[font=\footnotesize\itshape, text=gray!75!black, anchor=west, align=left]
    at ($(psbox.east)+(0.15,0)$) {Prepare-and-send};

\node[dashbox, inner sep=3pt, fit=(direct)] (rmbox) {};
\node[font=\footnotesize\itshape, text=gray!75!black, anchor=west, align=left]
    at ($(rmbox.east)+(0.15,0)$) {Receive-and-measure};

\node[dashbox, inner sep=3pt, fit=(wcp)] (wcpbox) {};
\node[font=\footnotesize\itshape, text=gray!75!black, anchor=west, align=left]
    at ($(wcpbox.east)+(0.15,0)$) {Reduces to\\prepare-and-send};

\node[mainbox=rorange, anchor=west]
    (rotation) at (\Rx, 0) {Rotation-based};

\node[subbox=rorange, anchor=west] (singlepass)
    at ({\Rx+\hind}, -0.9) {Single-pass manipulation};

\node[subbox=rorange, anchor=west] (zrot)
    at ({\Rx+2*\hind}, -1.9) {Z-rotations};
\node[foldtri] (zfold) at ($(zrot.east)+(0.4,0)$) {};

\node[subbox=rorange, anchor=west] (htgate)
    at ({\Rx+2*\hind}, -2.9) {$H/Z(\pi/4)$-gate};
\node[foldtri] (htfold) at ($(htgate.east)+(0.4,0)$) {};

\node[subbox=rorange, anchor=west] (multipass)
    at ({\Rx+\hind}, -4.1) {Multi-pass manipulation};

\node[subbox=rorange, anchor=west] (tgate)
    at ({\Rx+2*\hind}, -5.0) {$T$-gate};
\node[foldtri] (tfold) at ($(tgate.east)+(0.4,0)$) {};

\node[subbox=rorange, anchor=west] (txgate)
    at ({\Rx+2*\hind}, -6.0) {$T/X$-gate};
\node[foldtri] (txfold) at ($(txgate.east)+(0.4,0)$) {};

\def\rtrx{\Rx+0.45*\hind}
\draw[oarr] (rotation.south -| \rtrx, 0)  |- (singlepass.west);
\draw[oarr] (rotation.south -| \rtrx, 0)  |- (multipass.west);

\def\sptrxR{\Rx+\hind+0.45*\hind}
\draw[oarr] (singlepass.south -| \sptrxR, 0)  |- (zrot.west);
\draw[oarr] (singlepass.south -| \sptrxR, 0)  |- (htgate.west);

\draw[oarr] (multipass.south -| \sptrxR, 0)  |- (tgate.west);
\draw[oarr] (multipass.south -| \sptrxR, 0)  |- (txgate.west);

\node[dashbox, inner sep=3pt,
      fit=(singlepass)(zrot)(zfold)(htgate)(htfold)] (spbox) {};
\node[font=\footnotesize\itshape, text=gray!70!black,
      anchor=west, align=left]
    at ($(spbox.east)+(0.15,0)$) {Reduces to\\prepare-and-send};

\node[draw=rorange!50!gray, dotted, line width=1.0pt,
      rounded corners=5pt, inner sep=7pt,
      fit=(htgate)(htfold)(tgate)(tfold)(txgate)(txfold)] (fixedbox) {};
\node[font=\footnotesize\itshape, text=rorange!50!gray!90!black,
      anchor=west, align=left]
    at ($(fixedbox.east)+(0.15,0)$) {Fixed-gate};

\colorlet{leggray}{gray!55}

\coordinate (legorigin) at (0, -7.0);

\fill[gray!5, draw=gray!60, rounded corners=3pt, line width=0.6pt]
    ($(legorigin)+(-0.4, 0.45)$) rectangle ($(legorigin)+(5.2, -3.3)$);
\node[font=\footnotesize\bfseries, text=gray!70!black, anchor=south west]
    at ($(legorigin)+(-0.4, 0.45)$) {Legend};

\node[foldtri neutral] (legtri) at (legorigin) {};
\node[anchor=west, font=\small] (leglabel)
    at ($(legtri.east)+(0.15,0)$) {$=$ photon--matter interface:};

\node[subbox=leggray, anchor=west, font=\small] (legcav)
    at ($(legorigin)+(0.5, -0.8)$) {Cavity reflection};
\node[subbox=leggray, font=\scriptsize, inner ysep=1.5pt, anchor=west] (legsprint)
    at ($(legorigin)+(1.7, -1.5)$) {SPRINT};
\node[subbox=leggray, font=\scriptsize, inner ysep=1.5pt, anchor=west] (legcaps)
    at ($(legorigin)+(1.7, -2.1)$) {CAPS};
\node[subbox=leggray, anchor=west, font=\small] (legbsm)
    at ($(legorigin)+(0.5, -2.8)$) {BSM RSP};

\coordinate (legtrunk) at ($(legorigin)+(0.2, 0)$);
\draw[arr, color=leggray!80!black, sharp corners]
    (legtrunk |- legtri.south) |- (legcav.west);
\draw[arr, color=leggray!80!black, sharp corners]
    (legtrunk |- legtri.south) |- (legbsm.west);

\coordinate (legcavtrunk) at ($(legorigin)+(1.45, 0)$);
\draw[arr, color=leggray!80!black, sharp corners]
    (legcavtrunk |- legcav.south) |- (legsprint.west);
\draw[arr, color=leggray!80!black, sharp corners]
    (legcavtrunk |- legcav.south) |- (legcaps.west);

\end{tikzpicture}

%% file: Appendix.tex
\onecolumngrid
\section{Expected delivery time of graph states with cutoffs}\label{app:rate_deriv}
In this appendix we study the expected time that a client needs to successfully create a graph state at a server. We assume that the client and server follow an attempt-based protocol where in each attempt the client attempts to prepare a single qubit at the server. Due to photon loss this succeeds with probability $p<1$. The order in which the qubits are prepared follows a predetermined flow. The \emph{measure-as-you-go} paradigm is used in which the server measures a qubit as soon as all its neighbours in the graph have been generated and the corresponding \textsc{cz}-gates have been applied. Crucially, since each qubit only has a limited coherence time there is a cutoff on the number of attempts it may be stored in memory. If the qubit cannot be measured before the cutoff, then it is discarded and the entire graph has to be reconstructed from scratch. We refer to this as a \emph{qubitwise cutoff} because the cutoff condition constrains the number of attempts any single qubit may be stored in memory. The cutoffs give the graph state generation process the structure of a \textit{renewal process} \cite[Chapter 3]{ross1995stochastic}. We use the renewal process structure and probability generating functions to compute the expected number of attempts the client needs to successfully prepare the whole graph state at the server within the constraints posed by the qubitwise cutoffs. For a linear graph this yields an analytic expression for the expected number of attempts, while for a brickwork graph we obtain upper- and lowerbounds in closed-form. The problem analysed in this appendix is related to that of the window problem studied in Ref. \cite{davies2023tools}.

\subsection{Renewal process formulation}
A renewal process is a counting process in which the interarrival times of events are independent and identically distributed with an arbitrary distribution \cite[Chapter 3]{ross1995stochastic}. For the graph state generation problem the events are the times at which graph states are successfully delivered, or at which a cutoff condition is violated. In both cases a new graph state will have to be created from scratch again. The process of generating the graph thus proceeds in cycles, where each cycle starts with no qubits in the graph, and ends when the graph is created or has to be reset. This is the renewal structure. We additionally split each cycle into an \textit{idle period }$I$ and \textit{busy period} $B$, where the idle period corresponds to the time until the first qubit is generated, and the busy period corresponds to the time when the server has to keep qubits in memory. More precisely, the busy period starts after the first qubit has been generated and it ends either when a qubit violates the cutoff condition, or when the whole graph has been generated. When a cutoff is violated we say the busy period has failed, while if whole graph is generated we say that the busy period has succeeded.

The number $K$ of single-qubit attempts until the first successful graph state generation is then distributed according to
\begin{equation}\label{eq: definition of K}
    K = \sum_{i=1}^N X_i + Y,
\end{equation}
where $X_i$ is the duration of the $i$-th failed cycle, $N$ is the total number of failed cycles, and $Y$ is the duration of the first successful cycle. In each cycle, the duration of the idle period is an independent geometric random variable $I\sim \Geo(p)$. The duration $B$ of the busy period depends on the outcome $O\in \{s,f\}$, where $O=s$ for a successful busy period and $O=f$ for a failed busy period. For the failed cycles we thus have $X_i = I_i + (B_i|O_i=f)$ and for the successful cycle we have $Y = I + (B|O=s)$. The following lemma gives the expectation value of $K$.
\begin{lemma}
    The expectation value of $K$ is given by
    \begin{equation}\label{eq: expectation K abstract}
    \EE(K) = \frac{1/p+\EE(B)}{p_{B}},
\end{equation}
where $p_{B}:=\PP(O=s)$ is the probability that the graph state is successfully constructed during the busy period, and $\EE(B)$ is the expected duration of the busy period, irrespective of whether it succeeded or failed. 
\end{lemma}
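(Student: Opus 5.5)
We prove the formula for $\EE(K)$ by exploiting the renewal structure, either directly via conditioning on the number of failed cycles $N$, or equivalently via Wald's equation as referenced in the main text (Theorem 3.3.2 of Ref.~\cite{ross1995stochastic}). The cycles are i.i.d.: each consists of an idle period $I\sim\Geo(p)$ followed by a busy period $B$ with outcome $O$, and the process restarts from scratch after each failed cycle. Hence $N$ is geometric: $\PP(N=j)=(1-p_B)^j p_B$, so $\EE(N)=(1-p_B)/p_B$. The failed-cycle durations $X_i$ are i.i.d. with the law of $I+(B\mid O=f)$, and they are independent of $N$ once we condition on the outcome sequence.

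\textbf{Step 1.} Compute $\EE\big(\sum_{i=1}^N X_i\big)$. Conditioning on $N=j$ means the first $j$ cycles failed and the $(j+1)$-th succeeded. By independence across cycles, each of the first $j$ cycles has the conditional law $I+(B\mid O=f)$. Therefore
\[
\EE\Big(\sum_{i=1}^N X_i\Big)=\EE(N)\,\EE(X)=\frac{1-p_B}{p_B}\Big(\frac1p+\EE(B\mid O=f)\Big).
\]
Here we use that $I$ is independent of $O$, so $\EE(I\mid O=f)=1/p$.

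\textbf{Step 2.} The successful cycle contributes $\EE(Y)=1/p+\EE(B\mid O=s)$.

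\textbf{Step 3.} Sum the two contributions and use the law of total expectation,
\[
\EE(B)=p_B\,\EE(B\mid O=s)+(1-p_B)\,\EE(B\mid O=f).
\]
Multiplying the sum by $p_B$ gives
\[
(1-p_B)\Big(\tfrac1p+\EE(B\mid f)\Big)+p_B\Big(\tfrac1p+\EE(B\mid s)\Big)=\tfrac1p+\EE(B),
\]
which yields the stated formula for $\EE(K)$.

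As an alternative route, one can view $K$ as the total duration of the cycles up to and including the first success, with $N+1$ a stopping time for the i.i.d. cycle sequence. Wald's equation then gives
\[
\EE(K)=\EE(N+1)\,\EE(I+B)=\frac{1/p+\EE(B)}{p_B}
\]
directly, and it avoids conditioning altogether.

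The main point requiring care is the independence structure. We must justify that cycles are genuinely i.i.d., which holds because every cycle restarts from an empty graph with fresh Bernoulli$(p)$ attempts. We must also justify that $I$ is independent of $B$ and $O$, which holds because the busy period only involves attempts after the first success. Finally, $\EE(B)$ must be finite. This follows because a busy period either fails within a bounded window (the cutoff $c$ per qubit) or succeeds after at most $n$ successes, each occurring within $c$ attempts, so $B\le nc$. With these facts, Wald's equation applies. We would state this finiteness explicitly, since the identity breaks down if $p_B=0$ or $\EE(B)=\infty$.
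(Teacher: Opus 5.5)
Your proof is correct and follows the paper's argument essentially step for step: you decompose $K$ into the $N$ failed cycles plus the successful one, use $\EE(N)=1/p_B-1$, $\EE(X)=1/p+\EE(B\mid O=f)$ and $\EE(Y)=1/p+\EE(B\mid O=s)$, and recombine via the law of total expectation for $\EE(B)$. Three things you add go slightly beyond the paper: conditioning on $N=j$ in Step 1 justifies more carefully the step the paper attributes directly to Wald's equation; applying Wald to the unconditioned i.i.d.\ cycles with stopping time $N+1$ is a cleaner alternative; and your remarks on $B$ being bounded and on needing $p_B>0$ make the hypotheses explicit.
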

\begin{proof}
    Taking the expectation value in \cref{eq: definition of K}, it follows that
    \begin{equation}
        \EE(K) = \EE\Big(\sum_{i=1}^N X_i \Big) + \EE(Y).
    \end{equation}
    By Wald's equation (Theorem 3.3.2 in Ref. \cite{ross1995stochastic}), it follows that
    \begin{equation}\label{eq: wald}
        \EE(K) = \EE(N)\EE(X) + \EE(Y).
    \end{equation}
    With $p_B$ the success probability of the busy period, the expected number of failed busy periods until the first success is
    \begin{equation}
        \EE(N) = \frac{1}{p_B}-1,
    \end{equation}
    where $p_B= \PP(O=s)$.
    Using that $\EE(I)=1/p$, the duration of a failed cycle is
    \begin{equation}
        \EE(X) = \frac{1}{p} + \EE(B|O=f),
    \end{equation}
     and the duration of a successful cycle is
     \begin{equation}
         \EE(Y) = \frac{1}{p} + \EE(B|O=s).
     \end{equation}
     Hence, substituting into \cref{eq: wald} and using that $\EE(B) = (1-p_B)\EE(B|O=f)+p_B\EE(B|O=s)$,
     \begin{align}
         \EE(K) = \left(\frac{1}{p_B}-1\right)\left(\frac{1}{p}+\EE(B|O=f)\right)+\left(\frac{1}{p}+\EE(B|O=s)\right)
         =\frac{1/p+\EE(B)}{p_B},
     \end{align}
     which proves the result.
\end{proof}

\subsection{Linear graph}
Consider a linear graph of $n$ qubits, indexed $i=1,\dots, n$. The flow is to generate the qubits in order of increasing index. Let $c$ denote the qubitwise cutoff on the maximum number of attempts any qubit may be stored in memory. If the next qubit in the flow has not been generated within $c$ attempts, the graph is discarded and must be constructed from scratch again. We apply \cref{eq: expectation K abstract} to compute the expected number of attempts until the first successful graph is created. This requires the computation of the success probability of the busy period and its expected duration.

During the busy period, a qubit is held in memory while attempts at the next qubit are being made.
The attempts to create qubit $j+1$ when the flow is currently at qubit $j\in \{1,\dots,n-1\}$ are described by a pair of random variables $(K_j,O_j)$, where $K_j\in \{1,\dots, c\}$ gives the number of attempts to create qubit $j+1$ with qubit $j$ in memory and $O_j\in \{s,f\}$ indicates the outcome success $(s)$ if the qubit $j+1$ is generated within $c$ attempts, or otherwise failure $(f)$. These random variables have the joint distribution
\begin{equation}\label{eq: distribution K and O linear graph}
    \PP(k, s) = p(1-p)^{k-1} \quad \mathrm{for}\quad k\in \{1,\dots, c\}\qquad \mathrm{and} \qquad \PP(c,f)=(1-p)^c,
\end{equation}
so that the outcome is a success if and only there is a successful generation within $c$ attempts. 
The total busy period is successful if and only if $O_j=s$ for each $j=1,\dots,n-1$. Hence,
\begin{equation}\label{eq: success linear graph}
    p_B = \PP(O_1=s)\cdots \PP(O_{n-1}=s) = (1-(1-p)^c)^{n-1}.
\end{equation}

To compute the expected duration $\EE(B)$ of the busy period we will derive its \textit{probability generating function}
\begin{equation}
    G_B(x):= \EE\big(x^B\big).
\end{equation}

\begin{proposition}\label{prop: generating fucntion linear graph}
    The probability generating function for the duration $B$ of the busy period for a linear graph with $n$ qubits and qubitwise cutoff $c$ is given by
    \begin{equation}\label{eq: generating function linear graph}
        G_B(x) = S_1(x)\cdots S_{n-1}(x)+
        \sum_{j^*=1}^{n-1}S_1(x)\cdots S_{j^*-1}(x)F_{j^*}(x),
    \end{equation}    
    where 
    \begin{equation}\label{eq: definition defective generating function linear graph}
    S_j(x):= \EE\big(x^{K_j}\indicator(O_j=s)\big) \qquad \mathrm{and}\qquad F_j(x):= \EE\big(x^{K_j}\indicator(O_j=f)\big).
    \end{equation}
    are the defective generating functions for $K_j$ for $O_j=s$ or $O_j=f$, respectively.
\end{proposition}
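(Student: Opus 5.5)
The plan is to decompose the busy period according to where it ends, and to use the independence of the successive stages $(K_j,O_j)$. By construction, the busy period proceeds through stages $j=1,2,\dots$. In stage $j$ qubit $j$ sits in memory while attempts are made at qubit $j+1$. Each stage uses fresh attempts, so the pairs $(K_1,O_1),(K_2,O_2),\dots$ are independent, each with the joint law in \cref{eq: distribution K and O linear graph}. The stopping rule is simple. The busy period ends at the first stage $j^*$ with $O_{j^*}=f$, in which case $B=K_1+\dots+K_{j^*}$. If $O_1=\dots=O_{n-1}=s$, it ends successfully with $B=K_1+\dots+K_{n-1}$.

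Concretely, I will write the indicator partition
\begin{equation*}
1=\indicator(O_1=s,\dots,O_{n-1}=s)+\sum_{j^*=1}^{n-1}\indicator(O_1=s,\dots,O_{j^*-1}=s,\,O_{j^*}=f).
\end{equation*}
These events are disjoint and exhaustive. I will multiply $x^B$ by this partition and take expectations term by term.

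On the success event, $x^B=\prod_{j=1}^{n-1}x^{K_j}\indicator(O_j=s)$. By independence, its expectation factorizes into $S_1(x)\cdots S_{n-1}(x)$. On the event that the first failure occurs at stage $j^*$, we have
\begin{equation*}
x^B=\Big(\prod_{j<j^*}x^{K_j}\indicator(O_j=s)\Big)\,x^{K_{j^*}}\indicator(O_{j^*}=f),
\end{equation*}
and the later stages do not appear at all. Independence then gives $S_1(x)\cdots S_{j^*-1}(x)F_{j^*}(x)$. Summing over $j^*$ yields \cref{eq: generating function linear graph}.

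The only point needing care is justifying the independence and the stopping structure. The stages after a failure are never run, so I would formally define all $n-1$ pairs $(K_j,O_j)$ on a common product space as independent variables. This realizes $B$ as a function of them that depends only on the first $j^*$ stages. With that construction, the factorization is immediate.
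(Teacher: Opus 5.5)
Your proposal is correct and follows essentially the same route as the paper: split the busy period over the disjoint events ``all $n-1$ stages succeed'' and ``first failure at stage $j^*$'', then factorize $\EE(x^B)$ into the defective generating functions using independence of the pairs $(K_j,O_j)$. Your explicit partition of unity and product-space construction just make rigorous the step the paper states informally, where it writes $B$ as an indicator-weighted sum of partial sums.
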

\begin{proof}
    By definition, the probability generating function for the duration $B$ of the duration of the busy period for a linear graph with $n$ qubits and qubitwise cutoff $c$ is $G_B(x)=\EE(x^{B})$. Let $(K_j,O_j)\overset{\mathrm{iid}}{\sim}\TruncGeo(p,c)$ denote the numbers of attempts spent at qubits $j=1,\dots,n-1$ in the busy period, then
    \begin{equation}
        B \sim \Big(\sum_{j=1}^{n-1}K_j\Big) \indicator(O_1=s)\cdots \indicator(O_{n-1}=s) + \sum_{j^*=1}^{n-1}\Big(\sum_{j=1}^{j^*}K_j\Big)\indicator(O_1=s)\cdots \indicator(O_{j^*-1}=s)\indicator(O_{j^*}=f) 
    \end{equation}
    since the busy period ends when there is success at each qubit, or at the first failure at $j^*$. Since all $(K_j,O_j)$ are independent it follows that
    \begin{equation}
        \EE\big(x^B\big) = \prod_{j=1}^{n-1}\EE\big(x^{K_j}\indicator(O_j=s)\big) + \sum_{j^*=1}^{n-1}\Big(\prod_{j=1}^{j^*-1}\EE\big(x^{K_j}\indicator(O_j=s)\big)\Big)\EE\big(x^{K_{j^*}}\indicator(O_{j^*}=f)\big).
    \end{equation}
    This yields \cref{eq: generating function linear graph} after substituting the definitions of the defective generating functions given in \cref{eq: definition defective generating function linear graph}.
\end{proof}
Using the distribution of $(K_j,O_j)$ in \cref{eq: distribution K and O linear graph} the defective generation functions in \cref{eq: definition defective generating function linear graph} for successful and failed generation at qubit $j$ are given in closed form by
\begin{equation}
    S_j(x) = \EE\big(x^{K_j}\indicator(O_j=s)\big) = \sum_{k=1}^c p(1-p)^{k-1} x^k = \frac{px (1-(1-p)^cx^c)}{1-(1-p)x}
\end{equation}
and
\begin{equation}
     F_j(x) = \EE\big(x^{K_j}\indicator(O_j=f)\big) = (1-p)^{c} x^c.
\end{equation}
The expected duration of the busy period can be obtained from the probability generating function as
\begin{equation}\label{eq: expected busy period linear graph}
    \EE(B) = \dv{x}G_B(1) = \frac{1}{p}\sum_{j^*=1}^{n-1}(1-(1-p)^c)^{j^*}.
\end{equation}
We conclude using \cref{eq: expectation K abstract}, \cref{eq: success linear graph}, and \cref{eq: expected busy period linear graph} that the expected number of attempts to generate a linear graph state with $n$ qubits subject to qubitwise cutoff $c$ is
\begin{equation}\label{eq: expected number of attempts linear graph}
    \EE(K) = \frac{\frac{1}{p} +\frac{1}{p}\sum_{j^*=1}^{n-1}(1-(1-p)^c)^{j^*}}{(1-(1-p)^c)^{n-1}}.
\end{equation}

In the regime of $cp\ll 1$ the expected duration of the busy period is $\EE(B) \approx  c,$
and the expected number of attempts to generate the whole graph state is approximately
\begin{equation}
    \EE(K) \approx \frac{(1/p)+c}{(cp)^{n-1}}\approx \frac{1}{c^{n-1}p^{n}}.
\end{equation}
For small $p$, the cutoff thus gives a multiplicative boost to the success probability of generating the next qubit in the graph during the busy period. 
If the cutoff goes to infinity the expected duration of the busy period is $\lim_{c\to \infty}\EE(B)=\frac{n-1}{p}$. Hence, in the limit $c\to \infty$, \cref{eq: expected number of attempts linear graph} correctly recovers $\lim_{c\to\infty}\EE(K) = n/p$ which corresponds to the expectation of the sum of $n$ geometric random variables with parameter $p$, each of which gives the number of attempts to generate one of the qubits.

\subsection{Brickwork graph}

A graph state of particular interest in blind quantum computing is a brickwork graph. We consider a brickwork graph with $n_\mathrm{r}$ rows and $n_\mathrm{c}$ columns for a total of $n=n_\mathrm{r}n_\mathrm{c}$ qubits. In the measure-as-you-go flow the qubits are generated from top to bottom in each column, and the columns are generated from left to right. In other words, labelling the qubits by tuples $(i,j)$ with $i=1,\dots, n_\mathrm{r}$ and $j=1,\dots, n_\mathrm{c}$ they are generated in a lexicographic order, where column number takes precedence over row number. We say that $(i,j)\leq (i',j')$ if $(i,j)$ precedes $(i',j')$ in the flow.
We let $K_{i,j}$ denote the number of attempts while at qubit $(i,j)$ in the flow. For qubits in the last column $j=n_\mathrm{c}$, the \textsc{cz}-gate and measurement occur directly after the qubit is generated, while for $j<n_\mathrm{c}$, qubit $(i,j)$ can only be measured after $(i,j+1)$ is generated and the \textsc{cz}-gate is performed. The qubitwise cutoff $c$ imposes the constraint that the graph is successfully generated if and only if
\begin{equation}\label{eq: brickwork B constraint}
    \sum_{i'= i+1}^{n_\mathrm{r}}K_{i',j-1} + \sum_{i'=1}^{i} K_{i', j} \leq c \quad \forall (i,j)\leq (n_\mathrm{r}, n_\mathrm{c}),
\end{equation}
where $K_{n_\mathrm{r},n_\mathrm{c}}=0$ because at the last qubit the graph is finished.
If the cutoff condition is violated at some qubit $(i^*,j^*)<(n_\mathrm{r}, n_\mathrm{c})$, then the process of building the graph has to start all over again. 

As for the linear graph, we model this as a renewal process with idle period $I\sim \Geo(p)$ corresponding to no qubits in the graph, and busy period $B$ when there are qubits in the graph. The expected number of attempts to construct the full graph can be found using \cref{eq: expectation K abstract} and requires the computation of the success probability $p_B$ of the busy period and its expected duration $\EE(B)$. To compute these quantities, we define a \emph{path in the busy period}, or simply, a \emph{path}, to be a vector $\kvec=(k_{i,j})_{(i,j)<(n_\mathrm{r},n_\mathrm{c})} \in \NN^{n-1}$, where $k_{i,j}$ is the number of attempts at qubit $(i,j)$ until success. We think of each $k_{i,j}$ as the outcome of a geometric distribution, without regards for any cutoff conditions. More precisely, we define the probability that a path $\kvec\in \NN^{n-1}$ occurs by 
\begin{equation}\label{eq: success of path}
     \PP(\kvec):=p^{n-1}(1-p)^{\big(\sum_{(i,j)<(n_\mathrm{r},n_\mathrm{c})}k_{i,j}\big)-(n-1)},
\end{equation}
where the total number of attempts in the path is $\sum_{(i,j)<(n_\mathrm{r},n_\mathrm{c})}k_{i,j}$, of which $n-1$ are successful so as to reach the final qubit (the first qubit was already generated to start the busy period). 

We let $\mathcal{P}_{(i^*,j^*)}(B)$ denote the set of all paths that terminate at qubit $(i^*,j^*)$. Paths that terminate at $(i^*,j^*)=(n_\mathrm{r},n_\mathrm{c})$ successfully create the whole graph, while paths that terminate at $(i^*,j^*)<(n_\mathrm{r},n_\mathrm{c})$ violate the cutoff condition for the first time at $(i^*,j^*)$. More precisely, $\kvec \in \mathcal{P}_{(i^*, j^*)}(B)$ for $(i^*,j^*)<(n_\mathrm{r}, n_\mathrm{c})$ if and only if
\begin{equation}\label{eq: condition fail B}
    \sum_{i'=i+1}^{n_\mathrm{r}}k_{i',j-1} + \sum_{i'=1}^{i}k_{i',j} \leq c \quad \forall (i,j)<(i^*,j^*) \quad \mathrm{and}\quad \sum_{i'=i^*+1}^{n_\mathrm{r}}k_{i',j^*-1} + \sum_{i'=1}^{i^*}k_{i',j^*} > c,
\end{equation}
so that the cutoff condition is satisfied at all qubits up to and including $(i^*-1,j^*)$, but is violated at $(i^*,j^*)$. The set of successful paths is $\mathcal{P}_{(n_\mathrm{r}, n_\mathrm{c})}(B)$. We have $\kvec \in \mathcal{P}_{(n_\mathrm{r}, n_\mathrm{c})}(B)$ if and only if
\begin{equation}\label{eq: condition succ B}
    \sum_{i'=i+1}^{n_\mathrm{r}}k_{i',j-1} + \sum_{i'=1}^{i}k_{i',j} \leq c \quad \forall (i,j)<(n_\mathrm{r},n_\mathrm{c}),
\end{equation}
so that the cutoff condition is never violated.

The success probability of the busy period is obtained by summing over all paths that successfully create the whole graph, and weighing them by their probabilities,
\begin{equation}
    p_{B} = \sum_{\kvec\in \mathcal{P}_{(n_\mathrm{r},n_\mathrm{c})}(B)} \PP(\kvec).
\end{equation}
The expected duration $\EE(B)$ of the busy period requires also taking into account paths where the cutoff condition is violated at some point. It can be expressed as
\begin{equation}
    \EE(B) = \sum_{(i^*,j^*)\leq  (n_\mathrm{r},n_\mathrm{c})} \sum_{\kvec \in \mathcal{P}_{(i^*,j^*)}(B)}\PP(\kvec) B(\kvec;c),
\end{equation}
where 
\begin{equation}\label{eq: duration B}
    B(\kvec;c) = 
    \begin{cases}
        \sum_{(i,j)\leq (i^*,j^*-1)} k_{i,j} + c & \mathrm{if\,}(i^*,j^*)<(n_\mathrm{r},n_\mathrm{c}) \\
        \sum_{(i,j)\leq (n_\mathrm{r},n_\mathrm{c})} k_{i,j} & \mathrm{if\,}(i^*,j^*)=(n_\mathrm{r},n_\mathrm{c})
    \end{cases}
\end{equation}
is the duration of the busy period $B$ for a path $\kvec\in \mathcal{P}_{(i^*,j^*)}(B)$. The qubit $(i^*,j^*)$ at which the path terminates can be determined uniquely from $\kvec$ and $c$ so that \cref{eq: duration B} is well-defined. 

Evaluating $p_{B}$ and $\EE(B)$ directly is hard because there are many paths to sum over. A priori an infinite amount of them in the above formulation. In principle, the sums can be made finite by summing the tail probabilities of the failed path. But even then, there are a lot of paths to sum over. Enumerating all paths is also complicated because the cutoff conditions are not independent, as they were in the linear graph. What we will do instead is to derive lowerbounds and upperbounds to $p_{B}$ and $\EE(B)$ which lead to bounds $\EE(K)$ through \cref{eq: expectation K abstract}.

\subsection{Bounds for brickwork graph}
The idea for obtaining lowerbounds to $p_B$ and $\EE(B)$ is to consider more stringent cutoff conditions, and to obtain upperbounds by relaxing the cutoff conditions, in such a way that there are less dependencies between cutoff conditions on different qubits. The reduction in dependencies is achieved by using \emph{columnwise cutoff} in which the cutoffs are on the number of attempts per column. This actually gives the problem the same structure as the linear graph, but then with columns instead of single qubits to be generated.

Define another renewal process with busy period $B'$ where the graph is reset if a column takes longer than $c$ attempts to be generated. We again consider paths $\mathcal{P}_{(i^*,j^*)}(B')$ that terminate at $(i^*,j^*)$ but now with respect to this columnwise cutoff condition. We have that $\kvec \in \mathcal{P}_{(n_\mathrm{r},n_\mathrm{c})}(B')$ if and only if
\begin{equation}\label{eq: condition succ B'}
    \sum_{i'=1}^{n_\mathrm{r}}k_{i',j}\leq c \quad\forall j\leq n_\mathrm{c},
\end{equation}
where we define $k_{n_\mathrm{r}, n_\mathrm{c}}=0$.
These are the paths for which the graph state is successfully created.
For $(i^*,j^*)<(n_\mathrm{r},n_\mathrm{c})$ the generation process fails, and we have that $\kvec \in \mathcal{P}_{(i^*,j^*)}(B')$ if and only if
\begin{equation}\label{eq: condition fail B'}
    \sum_{i'=1}^{n_\mathrm{r}}k_{i',j}\leq c \quad\forall j<j^*,\quad \sum_{i'=1}^{i^*-1}k_{i',j^*} \leq c,\quad \mathrm{and}\quad \sum_{i'=1}^{i^*}k_{i',j^*}>c.
\end{equation}

The success probability of busy period $B'$ with columnwise cutoffs is given by
\begin{equation}
    p_{B'} = \sum_{\kvec\in \mathcal{P}_{(n_\mathrm{r},n_\mathrm{c})}(B')} \PP(\kvec),
\end{equation}
where $\PP(\kvec)$ is given as before by \cref{eq: success of path}. The dependence on the cutoff condition is only in which path are summed over, not in the probability of a path.

We use the columnwise cutoff condition in $B'$ with cutoffs $c/2$ and $c$ to bound the success probability $p_{B}$ for qubitwise cutoffs $c$. To distinguish the different values of the cutoffs, we will denote the busy period of columnwise cutoffs $c$ as $B'(c)$ and the busy period of qubitwise cutoffs $c$ as $B(c)$. 
The following lemma shows that any path that is successful in busy period $B'(c/2)$ is also successful in busy period $B(c)$, and that any path that is successful in busy period $B(c)$ is also successful in busy period $B'(c)$.
\begin{lemma}\label{lemma: succesful path inclusion lemma}
The paths for successful busy periods of $B'(c/2)$, $B(c)$, and $B'(c)$ satisfy
\begin{equation}
    \mathcal{P}_{(n_\mathrm{r},n_\mathrm{c})}(B'(c/2)) \subseteq \mathcal{P}_{(n_\mathrm{r},n_\mathrm{c})}(B(c)) \subseteq \mathcal{P}_{(n_\mathrm{r},n_\mathrm{c})}(B'(c)) .
\end{equation}
\end{lemma}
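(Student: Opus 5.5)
Both inclusions follow by comparing the defining inequalities directly: \cref{eq: condition succ B} for qubitwise cutoffs and \cref{eq: condition succ B'} for columnwise cutoffs. Throughout, fix a path $\kvec\in\NN^{n-1}$ with the convention $k_{n_\mathrm{r},n_\mathrm{c}}=0$. For a qubit $(i,j)$ write the qubitwise window sum as
\[
W_{i,j}(\kvec)=\sum_{i'=i+1}^{n_\mathrm{r}}k_{i',j-1}+\sum_{i'=1}^{i}k_{i',j},
\]
and the column sum as $C_j(\kvec)=\sum_{i'=1}^{n_\mathrm{r}}k_{i',j}$. All entries are nonnegative, so every partial sum within a column is bounded by the full column sum. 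This monotonicity is the only tool needed.

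\textbf{First inclusion} $\mathcal{P}_{(n_\mathrm{r},n_\mathrm{c})}(B'(c/2))\subseteq\mathcal{P}_{(n_\mathrm{r},n_\mathrm{c})}(B(c))$. Assume $C_j(\kvec)\le c/2$ for all $j\le n_\mathrm{c}$. Then for every $(i,j)<(n_\mathrm{r},n_\mathrm{c})$,
\[
W_{i,j}(\kvec)\le C_{j-1}(\kvec)+C_j(\kvec)\le c/2+c/2=c,
\]
because the first sum in $W_{i,j}$ is a tail of column $j-1$ and the second is a head of column $j$. For $j=1$ the term from column $0$ is empty, so the bound only improves. Hence $\kvec$ satisfies \cref{eq: condition succ B}.

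\textbf{Second inclusion} $\mathcal{P}_{(n_\mathrm{r},n_\mathrm{c})}(B(c))\subseteq\mathcal{P}_{(n_\mathrm{r},n_\mathrm{c})}(B'(c))$. Assume $W_{i,j}(\kvec)\le c$ for all $(i,j)<(n_\mathrm{r},n_\mathrm{c})$. To recover $C_j(\kvec)\le c$, choose $i=n_\mathrm{r}$ in column $j$. The first sum is then empty, so $W_{n_\mathrm{r},j}(\kvec)=C_j(\kvec)$. This choice is admissible for every $j<n_\mathrm{c}$, since then $(n_\mathrm{r},j)<(n_\mathrm{r},n_\mathrm{c})$.

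The last column needs separate care, and this is the main obstacle, mostly a matter of bookkeeping. Take $j=n_\mathrm{c}$ and $i=n_\mathrm{r}-1$, which is strictly below $(n_\mathrm{r},n_\mathrm{c})$. Then
\[
W_{n_\mathrm{r}-1,n_\mathrm{c}}(\kvec)=k_{n_\mathrm{r},n_\mathrm{c}-1}+\sum_{i'=1}^{n_\mathrm{r}-1}k_{i',n_\mathrm{c}}\le c.
\]
The right-hand side dominates $\sum_{i'=1}^{n_\mathrm{r}-1}k_{i',n_\mathrm{c}}$, which equals $C_{n_\mathrm{c}}(\kvec)$ because $k_{n_\mathrm{r},n_\mathrm{c}}=0$. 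This assumes $n_\mathrm{r}\ge 2$. If $n_\mathrm{r}=1$, then $C_{n_\mathrm{c}}(\kvec)=0$ trivially.

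I would also check that the index range in \cref{eq: condition succ B} ("$\forall (i,j)<(n_\mathrm{r},n_\mathrm{c})$") is consistent with \cref{eq: brickwork B constraint}, which uses $\le$. Under the convention $k_{n_\mathrm{r},n_\mathrm{c}}=0$, the constraint at $(n_\mathrm{r},n_\mathrm{c})$ itself is implied by the one at $(n_\mathrm{r}-1,n_\mathrm{c})$, so either reading gives the same set of paths. With both inclusions established, the lemma follows.
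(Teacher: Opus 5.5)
Your proof is correct and takes the same route as the paper. For the first inclusion you bound each qubitwise window by the sum of two column totals, and for the second you specialise the qubitwise condition to $i=n_\mathrm{r}$ to recover the column totals; your separate handling of the last column, via the window at $(n_\mathrm{r}-1,n_\mathrm{c})$ and the convention $k_{n_\mathrm{r},n_\mathrm{c}}=0$, is a worthwhile refinement, since the paper applies $i=n_\mathrm{r}$ for all $j\le n_\mathrm{c}$ even though \cref{eq: condition succ B} only ranges over $(i,j)<(n_\mathrm{r},n_\mathrm{c})$.
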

\begin{proof}
We start by showing $\mathcal{P}_{(n_\mathrm{r},n_\mathrm{c})}(B'(c/2)) \subseteq \mathcal{P}_{(n_\mathrm{r},n_\mathrm{c})}(B(c))$. Let $\kvec \in \mathcal{P}_{(n_\mathrm{r},n_\mathrm{c})}(B'(c/2))$. Then,
\begin{equation}
    \sum_{i=1}^{n_\mathrm{r}}k_{i,j} \leq c/2 \quad \forall j\leq n_\mathrm{c}.
\end{equation}
Hence, for any $(i,j)\leq (n_\mathrm{r},n_\mathrm{c})$, it follows by the above that
\begin{equation}
    \sum_{i'=i+1}^{n_\mathrm{r}} k_{i',j-1} + \sum_{i'=1}^{i}k_{i',j}
    \leq c/2 + c/2
    \leq c.
\end{equation}
This means that $\kvec \in \mathcal{P}_{(n_\mathrm{r},n_\mathrm{c})}(B(c))$. We conclude that $\mathcal{P}_{(n_\mathrm{r},n_\mathrm{c})}(B'(c/2)) \subseteq \mathcal{P}_{(n_\mathrm{r},n_\mathrm{c})}(B(c))$.

Now for the second inclusion, let $\kvec \in \mathcal{P}_{(n_\mathrm{r},n_\mathrm{c})}(B(c))$. Then,
\begin{equation}
    \sum_{i'=i+1}^{n_\mathrm{r}} k_{i',j-1} + \sum_{i'=1}^{i}k_{i',j}\leq c\quad \forall (i,j)<(n_\mathrm{r},n_\mathrm{c}).
\end{equation}
With $i=n_\mathrm{r}$, the above yields
\begin{equation}
    \sum_{i=1}^{n_\mathrm{r}} k_{i,j}\leq c, 
\end{equation}
for any $j\leq n_\mathrm{c}$,
so that $\kvec \in \mathcal{P}_{(n_\mathrm{r},n_\mathrm{c})}(B'(c))$. This proves the second inclusion.
\end{proof}
\begin{corollary}[Success probability bounds]\label{cor: success probability bounds}
    The success probabilities of the busy periods $B'(c/2)$, $B(c)$, and $B'(c)$ satisfy
    \begin{equation}
        p_{B'(c/2)}\leq p_{B(c)} \leq p_{B'(c)}.
    \end{equation}
\end{corollary}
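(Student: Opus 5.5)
The corollary follows from \cref{lemma: succesful path inclusion lemma} together with the fact that, in all three busy periods, the probability weight of a path is the same quantity $\PP(\kvec)$ from \cref{eq: success of path}. I will first write each success probability as a sum of path weights over its set of successful paths:
\begin{equation*}
p_{B'(c/2)}=\sum_{\kvec\in \mathcal{P}_{(n_\mathrm{r},n_\mathrm{c})}(B'(c/2))}\PP(\kvec),\quad p_{B(c)}=\sum_{\kvec\in \mathcal{P}_{(n_\mathrm{r},n_\mathrm{c})}(B(c))}\PP(\kvec),\quad p_{B'(c)}=\sum_{\kvec\in \mathcal{P}_{(n_\mathrm{r},n_\mathrm{c})}(B'(c))}\PP(\kvec).
\end{equation*}
As noted after the definition of $p_{B'}$, the cutoff enters only through the index set of the sum, not through the weights.

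The key step is checking that these are genuinely sums of nonnegative terms over subsets of one common space $\NN^{n-1}$, with a common measure. Here I must justify that the success event of each busy period is exactly the event that the underlying i.i.d. geometric attempt counts $(k_{i,j})$ land in the corresponding successful-path set. For a successful path, the process never triggers a reset, so every coordinate $k_{i,j}$ is realised in full and the path has probability $\PP(\kvec)$. This holds for qubitwise and columnwise cutoffs alike, since the attempts at each qubit are the same geometric trials; only the stopping rule differs.

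Given this, monotonicity of a sum of nonnegative terms under inclusion of index sets, applied to the two inclusions of \cref{lemma: succesful path inclusion lemma}, yields $p_{B'(c/2)}\leq p_{B(c)}\leq p_{B'(c)}$.

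I expect the main obstacle to be this coupling justification rather than the inequality itself. One point is that $c/2$ need not be an integer, but the conditions are sums of integers bounded by $c/2$, so this causes no issue. A second point is the convention $k_{n_\mathrm{r},n_\mathrm{c}}=0$: the last qubit carries no cutoff constraint in either model, so the path spaces agree. With these in place the corollary is immediate.
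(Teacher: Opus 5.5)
Your proposal is correct and follows the paper's proof essentially verbatim: each success probability is written as a sum of the cutoff-independent path weights $\PP(\kvec)$ over its set of successful paths, and the inequalities follow from the inclusions of \cref{lemma: succesful path inclusion lemma}. Your remarks on the common path space $\NN^{n-1}$ and on non-integer $c/2$ spell out what the paper states in one line, namely that $\PP(\kvec)$ is positive and does not depend on the cutoff rule.
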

\begin{proof}
    The success probability in each case is of the form $
        p_B=\sum_{\kvec \in \mathcal{P}_{(n_\mathrm{r},n_\mathrm{c})}(B)}\PP(\kvec),$
    where $ \mathcal{P}_{(n_\mathrm{r},n_\mathrm{c})}(B)$ is the set of successful paths for a busy period $B$. Since the probability $\PP(\kvec)$ of any path is positive and does not depend on cutoff rule in the busy period, the result follows immediately from the inclusions of the successful paths in Lemma \ref{lemma: succesful path inclusion lemma}.
\end{proof}

We now aim to derive a similar result for the expected durations of the busy periods of $B'(c/2)$, $B(c)$, and $B'(c)$. 
The expected duration of the busy period $B'(c)$ is found as
\begin{equation}
    \EE(B'(c)) = \sum_{(i^*,j^*)\leq (n_\mathrm{r},n_\mathrm{c})} \sum_{\kvec \in \mathcal{P}_{(i^*,j^*)}(B'(c))}\PP(\kvec) B'(\kvec;c),
\end{equation}
where 
\begin{equation}\label{eq: duration B'}
    B'(\kvec;c) =
    \begin{cases}
        \sum_{j<j^*}\sum_{i=1}^{n_\mathrm{r}} K_{i,j} + c  & \mathrm{if\,}(i^*,j^*)<(n_\mathrm{r},n_\mathrm{c}) \\
        \sum_{(i,j)\leq (n_\mathrm{r},n_\mathrm{c})}K_{i,j}& \mathrm{if\,}(i^*,j^*)=(n_\mathrm{r},n_\mathrm{c}) 
    \end{cases}. 
\end{equation}
Again, $(i^*,j^*)$ can be determined from $\kvec$ since there is a unique qubit where the path terminates.

The idea is now that the duration of a path $\kvec\in \NN^{n-1}$ subject to the cutoff condition in $B(c)$ is always longer than the duration of that same path subject to the cutoff condition in $B'(c/2)$ since the cutoff of the latter is more stringent. Similar, paths in $B'(c)$ have longer durations than paths in $B(c)$ because again the latter has a more stringent cutoff. We make this precise below. 
\begin{proposition}[Expected duration busy period bounds]\label{prop: busy period bounds}
    The expected durations of the busy periods of $B'(c/2)$, $B(c)$, and $B'(c)$ satisfy
    \begin{equation}
        \EE(B'(c/2))\leq \EE(B(c)) \leq \EE(B'(c)).
    \end{equation}
\end{proposition}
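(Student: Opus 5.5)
The plan is a pathwise coupling. All three busy periods are driven by the same underlying path $\kvec\in\NN^{n-1}$ of geometric attempt counts, with the cutoff-independent weight $\PP(\kvec)$ of \cref{eq: success of path}. The busy period for a given cutoff rule is a deterministic function of $\kvec$: $B'(\kvec;c/2)$, $B(\kvec;c)$ and $B'(\kvec;c)$ from \cref{eq: duration B,eq: duration B'}. It therefore suffices to prove, for every fixed $\kvec$,
\begin{equation*}
    B'(\kvec;c/2)\leq B(\kvec;c)\leq B'(\kvec;c).
\end{equation*}
Summing these against $\PP(\kvec)$ then gives the proposition, exactly as Corollary~\ref{cor: success probability bounds} followed from Lemma~\ref{lemma: succesful path inclusion lemma}. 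To make "a fixed path" meaningful across rules, I will take $\kvec$ to be an infinite i.i.d.\ geometric sequence. Each rule reads off only a prefix, up to its own termination point. Equivalently, one can expand to paths of full length $n-1$ and marginalise.

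The key step is a pathwise duration formula. Under any rule, $B(\kvec)$ equals the attempt index of the global clock at which the process stops. This is the time of the last success if the path succeeds. If the path fails, it is the time at which some waiting window of length $c$ (or $c/2$) expires. I would describe each rule as a family of deadlines imposed on success times. Write $\tau_{i,j}$ for the cumulative time at which qubit $(i,j)$ is generated.
\begin{itemize}
\item Under $B(c)$, qubit $(i,j)$ must appear by $\tau_{i,j-1}+c$.
\item Under $B'(c)$, column $j$ must be completed by the completion time of column $j-1$ plus $c$.
\end{itemize}
The process stops at the earliest violated deadline, or at the final success. This gives the stopping time
\begin{equation*}
    B(\kvec)=\min\big(\tau_{\mathrm{final}},\ \min\{\text{deadlines violated by }\kvec\}\big),
\end{equation*}
and the formulas in \cref{eq: duration B,eq: duration B'} should be rewritten in this form first.

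The comparison then proceeds in two parts.

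\textbf{Upper inequality.} Every deadline of $B'(c)$ is implied by one of $B(c)$, by the same argument as the second inclusion in Lemma~\ref{lemma: succesful path inclusion lemma} (take $i=n_\mathrm{r}$). The $B(c)$ deadline for qubit $(n_\mathrm{r},j)$ is $\tau_{n_\mathrm{r},j-1}+c$, and the $B'(c)$ deadline for completing column $j$ is the same time. So $B(c)$ imposes a superset of the constraints, each at a time no later than the corresponding $B'(c)$ constraint. Its stopping time, the minimum over its violated deadlines, is therefore at most that of $B'(c)$.

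\textbf{Lower inequality.} I would show that whenever $B(c)$ is still running at time $t$, or has not yet violated anything, the constraints of $B'(c/2)$ still allow $B'(c/2)$ to stop no later than $t$. This is the first inclusion of Lemma~\ref{lemma: succesful path inclusion lemma}, applied to every prefix: a prefix satisfying all columnwise $c/2$ deadlines satisfies all qubitwise $c$ deadlines, so $B'(c/2)$ violates no later than $B(c)$ does.

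I expect the main obstacle to be exactly this timing comparison of failures. The two rules fail at different qubits and charge different terminal waiting times. The $B$ rule charges $c$ after $\tau_{i^*,j^*-1}$, while the $B'$ rule charges $c/2$ after column $j^*-1$ completes. A naive "superset of constraints" argument does not immediately compare the stopping times. I would handle it by checking that, for each rule, the stopping time equals the first time $t$ such that the prefix of $\kvec$ observed by time $t$ already certifies a violation.

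With that formulation, the pathwise inclusion of Lemma~\ref{lemma: succesful path inclusion lemma} applies to truncated paths at every time $t$. It gives
\begin{equation*}
    \{B'(c/2)>t\}\subseteq\{B(c)>t\}\subseteq\{B'(c)>t\}.
\end{equation*}
Summing over $t$ via $\EE(B)=\sum_{t\ge0}\PP(B>t)$ yields the claim. If the deadline formulas in \cref{eq: duration B} do not match this stopping-time picture exactly, for instance at the last column where $K_{n_\mathrm{r},n_\mathrm{c}}=0$, I would treat those boundary cases separately by direct inspection.
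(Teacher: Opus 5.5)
Your proposal is correct. Its backbone is the same as the paper's: one path $\kvec$ with cutoff-independent weight $\PP(\kvec)$, the pathwise inequality $B'(\kvec;c/2)\le B(\kvec;c)\le B'(\kvec;c)$, and summation against $\PP(\kvec)$. The closing tail-sum step therefore adds nothing.

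The two arguments differ in how they verify the pathwise inequality. The paper compares termination indices: a path failing at $(i^*,j^*)$ under one rule terminates at some $(i^\dagger,j^\dagger)\ge(i^*,j^*)$ under the other, or succeeds. It then works through the explicit duration formulas in a fail/fail and a fail/succeed case for each inequality.

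You instead read each duration as the earliest deadline among violated constraints, or the completion time if none is violated. This agrees with the paper's formulas. The deadlines $\sum_{(i',j')\le(i,j-1)}k_{i',j'}+c$ and $\sum_{j'<j}\sum_{i'}k_{i',j'}+c$ are nondecreasing along the flow, so the lex-first violation is also the earliest.

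With that reading, the upper inequality is immediate, and your boundary worries are unnecessary. In the paper's own notation, the qubitwise constraint indexed $(n_\mathrm{r},j)$ has literally the same sum $\sum_{i'=1}^{n_\mathrm{r}}k_{i',j}$ and the same deadline as the column-$j$ constraint.

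The lower inequality needs more than you give it. Applying the inclusion lemma to truncated paths is not a corollary of that lemma. This is the one place where deadlines, not constraint sets, must be compared, so write it out. Suppose qubitwise constraint $(i,j)$ is violated.
\begin{itemize}
\item If column $j-1$ is violated at $c/2$, its deadline is clearly earlier than the qubitwise one.
\item Otherwise column $j-1$ holds, and then $\sum_{i'\le i}k_{i',j}>c/2$. So column $j$ is violated, with deadline $\sum_{j'<j}\sum_{i'}k_{i',j'}+c/2$. This is at most the qubitwise deadline precisely because $\sum_{i'>i}k_{i',j-1}\le c/2$.
\end{itemize}
Your sentence about $B(c)$ ``still running at time $t$'' is muddled; the contrapositive above is the statement you need.

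Your route gives a uniform criterion: if every violation of one rule forces a violation of another rule with an earlier-or-equal deadline, the first rule stops no later. That replaces the paper's four-case bookkeeping and would carry over to other cutoff rules. The paper's route avoids having to re-derive the durations as stopping times first.
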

\begin{proof}
    We start with the first inequality. If a path $\kvec\in \mathcal{P}_{(n_\mathrm{r},n_\mathrm{c})}(B'(c/2))$ so that it succeeds in $B'(c/2)$, then by Lemma \ref{lemma: succesful path inclusion lemma} it also succeeds in $B(c)$. The duration of $\kvec$ in $B'(c/2)$ is the same as in $B(c)$ since in both cases the duration is the sum of all times in the path (c.f. \cref{eq: duration B} and \cref{eq: duration B'}). Now let $\kvec \in \mathcal{P}_{(i^*,j^*)}(B'(c/2))$ with $(i^*,j^*)<(n_\mathrm{r},n_\mathrm{c})$ be a failed path in $B'(c/2)$ so that
    \begin{equation}
    \sum_{i'=1}^{n_\mathrm{r}}k_{i',j}\leq c/2 \quad\forall j<j^*,\quad \sum_{i'=1}^{i^*-1}k_{i',j^*} \leq c/2,\quad \mathrm{and}\quad \sum_{i'=1}^{i^*}k_{i',j^*}>c/2.
    \end{equation}
    It follows that for all $(i^\dagger,j^\dagger)<(i^*,j^*)$,
    \begin{equation}
        \sum_{i'=i^\dagger+1}^{n_\mathrm{r}}k_{i',j^\dagger-1} + \sum_{i'=1}^{i^\dagger}k_{i',j^\dagger}\leq c/2 + c/2 = c.
    \end{equation}
    By \cref{eq: condition fail B} and \cref{eq: condition succ B} it follows that $\kvec\in \mathcal{P}_{(i^\dagger, j^\dagger)}(B(c))$ for some $(i^\dagger,j^\dagger)\geq (i^*,j^*)$. If $(i^\dagger,j^\dagger)<(n_\mathrm{r},n_\mathrm{c})$, so that the path also fails in $B(c)$, then
    \begin{align}
        B'(\kvec;c/2) &= \sum_{j<j^*}\sum_{i=1}^{n_\mathrm{r}}k_{i,j} + c/2 \\
        &= \sum_{(i,j)
        \leq (i^*,j^*-1)} k_{i,j} + \sum_{i=i^*+1}^{n_\mathrm{r}}k_{i,j^*-1} + c/2 \\
        &\leq \sum_{(i,j)\leq(i^\dagger,j^\dagger-1)}k_{i,j} + c/2 + c/2 \\
        &\leq  B(\kvec;c),
    \end{align}
    where the first equation follows from \cref{eq: duration B'}, in the next step we split the sum in column $j^*-1$ anticipating the expression for the duration in $B(c)$, then for the first inequality we used \cref{eq: condition fail B'} to see that the times in column $j^*$ sum to less than $c/2$ and in the second inequality we used that $(i^*,j^*)\leq (i^\dagger, j^\dagger)<(n_\mathrm{r},n_\mathrm{c})$ and \cref{eq: duration B}. If, on the other hand, the path ends up succeeding in $B(c)$ so that $(i^\dagger,j^\dagger)=(n_\mathrm{r},n_\mathrm{c})$, then it follows from \cref{eq: condition fail B'} that
    \begin{align}
        B'(\kvec;c/2) &= \sum_{j<j^*}\sum_{i=1}^{n_\mathrm{r}}k_{i,j} + c/2 < \sum_{j<j^*}\sum_{i=1}^{n_\mathrm{r}}k_{i,j} + \sum_{i=1}^{i^*}k_{i,j^*} = \sum_{(i,j)\leq (i^*,j^*)}k_{i,j^*}\leq B(\kvec;c).
    \end{align}
    In both cases, the duration of the path $\kvec$ in $B(c)$ is thus seen to be longer than its duration in $B'(c/2)$. Hence, since the paths are weighed by the same probabilities irrespective of the cutoff condition, we conclude that $\EE(B'(c/2))\leq \EE(B(c))$.

    We now apply similar reasoning to establish the second inequality. First, if a path $\kvec$ is successful in $B(c)$, then by Lemma \ref{lemma: succesful path inclusion lemma} it is successful in $B'(c)$, and in both cases it has the same duration. If a path $\kvec\in \mathcal{P}_{(i^*,j^*)}(B(c))$ terminates at $(i^*,j^*)<(n_\mathrm{r},n_\mathrm{c})$, then it terminates at some later $(i^\dagger,j^\dagger)$ in $B'(c)$. Indeed, if it would terminate earlier this would mean by \cref{eq: condition fail B'} that $\sum_{i=1}^{i^\dagger}k_{i,j^\dagger}>c$ for some $(i^\dagger,j^\dagger)<(i^*,j^*)$. But that contradicts the fact that the cutoff condition \cref{eq: condition fail B} in $B(c)$ was not met for $(i^\dagger,j^\dagger)<(i^*,j^*)$. We conclude that $(i^\dagger,j^\dagger)\geq (i^*,j^*)$. If the path $\kvec$ ends up failing in $B'(c)$, i.e. if it is in $\mathcal{P}_{(i^\dagger,j^\dagger)}(B'(c))$ for some $(i^*, j^*)\leq (i^\dagger, j^\dagger)<(n_\mathrm{r}, n_\mathrm{c})$, then
    \begin{align}
        B(\kvec;c) & = \sum_{(i,j)\leq (i^*,j^*-1)} k_{i,j} + c \\
        &\leq \sum_{j<j^*}\sum_{i=1}^{n_\mathrm{r}}k_{i,j} + c \\
        &\leq \sum_{j<j^\dagger}\sum_{i=1}^{n_\mathrm{r}}k_{i,j} + c \\
        &=B'(\kvec;c),
    \end{align}
    in the first inequality we add the remaining attempts in column $j^*-1$, then in the second inequality we use that $j^*\leq j^\dagger$.
    If on the other hand the path ends up succeeding in $B'(c)$, so that $(i^\dagger,j^\dagger)=(n_\mathrm{r}, n_\mathrm{c})$, then
    \begin{align*}
        B(\kvec;c) &=\sum_{(i,j)\leq (i^*,j^*-1)} k_{i,j} + c \\
        &< \sum_{(i,j)\leq (i^*,j^*-1)} k_{i,j} + \sum_{i=i^*+1}^{n_\mathrm{r}} k_{i,j^*-1} + \sum_{i=1}^{i^*} k_{i,j^*}\\
        &\leq \sum_{(i,j)\leq (n_\mathrm{r},n_\mathrm{c})} k_{i,j}\\
        &= B'(\kvec;c),
    \end{align*}
    where in the first inequality we used \cref{eq: condition fail B} of the failed cutoff condition in $B(c)$. We thus conclude that the duration of any path $\kvec$ is longer in $B'(c)$ than in $B(c)$, so that $\EE(B(c))\leq \EE(B'(c))$. This completes the proof of both inequalities.
\end{proof}

\begin{corollary}[Bounds on expected time to create full graph]\label{cor: Bounds on expected time to create full graph}
    The expected time create the full graph with qubitwise cutoffs is bounded by quantities for columnwise cutoffs as
    \begin{equation}\label{eq: bounds on expected time equation statement}
        \frac{1/p+\EE(B'(c/2))}{p_{B'(c)}} \leq \EE(K(c)) \leq \frac{1/p+\EE(B'(c))}{p_{B'(c/2)}}.
    \end{equation}
\end{corollary}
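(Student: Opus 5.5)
The plan is to combine the renewal identity of the first lemma of this appendix, \cref{eq: expectation K abstract}, with the two monotonicity results already established. For the qubitwise cutoff, \cref{eq: expectation K abstract} gives
\begin{equation*}
    \EE(K(c)) = \frac{1/p+\EE(B(c))}{p_{B(c)}}.
\end{equation*}
This identity relies only on the renewal structure: an idle period $I\sim\Geo(p)$ followed by a busy period that succeeds with probability $p_{B(c)}$, with cycles i.i.d. That structure is the same for the brickwork graph as for the linear graph, so the identity applies here without change.

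The bounds then follow by separately bounding the numerator and the denominator, both of which are positive. For the upper bound, \cref{prop: busy period bounds} gives $\EE(B(c))\leq \EE(B'(c))$, so the numerator is at most $1/p+\EE(B'(c))$. \Cref{cor: success probability bounds} gives $p_{B(c)}\geq p_{B'(c/2)}$, and since $p_{B'(c/2)}>0$ for $p>0$ (the all-ones path $k_{i,j}=1$ succeeds whenever $n_\mathrm{r}\leq c/2$), the reciprocal is at most $1/p_{B'(c/2)}$. For the lower bound, the numerator is at least $1/p+\EE(B'(c/2))$ by the first inequality of \cref{prop: busy period bounds}, and $1/p_{B(c)}\geq 1/p_{B'(c)}$ by the second inequality of \cref{cor: success probability bounds}. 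Multiplying these nonnegative bounds yields \cref{eq: bounds on expected time equation statement}.

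There is no real obstacle, since the substantive work is in \cref{lemma: succesful path inclusion lemma} and \cref{prop: busy period bounds}. The only points to check are the following.
\begin{itemize}
    \item The quantities $\EE(B)$ and $p_B$ appearing in \cref{eq: expectation K abstract} for the qubitwise process are exactly the path sums defined above, with path probabilities $\PP(\kvec)$ that do not depend on the cutoff rule.
    \item The quantities $\EE(B')$ and $p_{B'}$ for the columnwise processes are used only as numerical bounds, and are never inserted into an identity for $K$.
    \item All quantities are finite and positive, so that dividing and multiplying the inequalities preserves their direction. In particular, if $c/2$ is not an integer, the columnwise condition should be read with the real threshold $c/2$; this is harmless because only the inclusions of \cref{lemma: succesful path inclusion lemma} are used.
\end{itemize}
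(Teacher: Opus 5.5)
Your proposal is correct and follows the paper's proof: substitute the renewal identity \cref{eq: expectation K abstract} for $\EE(K(c))$, then bound the numerator with \cref{prop: busy period bounds} and the denominator with \cref{cor: success probability bounds}. Your additional checks are harmless extras that the paper leaves implicit: positivity of $p_{B'(c/2)}$ (which needs $n_\mathrm{r}\leq c/2$, otherwise the upper bound is vacuous), finiteness of all quantities, and a real-valued threshold $c/2$.
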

\begin{proof}
    By \cref{eq: expectation K abstract} the expected number of attempts with qubitwise cutoffs is
    \begin{equation}
        \EE(K(c)) = \frac{1/p+\EE(B(c))}{p_{B(c)}}.
    \end{equation}
    The bounds on $\EE(K(c))$ follow directly from the bounds on $p_{B(c)}$ of \cref{cor: success probability bounds} and the bounds on $\EE(B(c))$ from \cref{prop: busy period bounds}.
\end{proof}

\subsection{Evaluation of the bounds for brickwork graphs}
The columnwise cutoffs were introduced to obtain bounds that these would be easier to evaluate than the exact result for qubitwise cutoffs because the columns are independent.
This gives the brickwork with columnwise cutoffs the same structure as a linear graph with qubitwise cutoffs. To make this precise, introduce the random variables $K_j'\in \{1,\dots, c\}$ for the number of attempts in column $j$ and $O_j'\in \{s,f\}$ for the outcome of column $j$. For $j<n_\mathrm{c}$ these follow the joint distribution, as explained directly after,
\begin{align}
    \PP(K_j'=k, O_j'=s) &= \binom{k-1}{n_\mathrm{r}-1}p^{n_\mathrm{r}}(1-p)^{k-n_\mathrm{r}}\qquad \mathrm{for}\quad k\in \{n_\mathrm{r},\dots,c\}\\
    \PP(K_j'=c, O_j'=f) &= \sum_{i^*=1}^{n_\mathrm{r}}\binom{c-1}{i^*-1}p^{i^*-1} (1-p)^{c-(i^*-1)}.
\end{align}
In the case $O_j'=s$ there are $n_\mathrm{r}$ successful generations in column $j$ in a total of $k\leq c$ attempts. There are a total of $n_\mathrm{r}$ successes and $k-n_\mathrm{r}$ failures. The $k$-th attempt in the column is successful. The first $k-1$ attempts must contain the other $n_\mathrm{r}-1$ successes. On the other hand, in the case of $O_j'=f$ the column fails so there must have been a total of $c$ attempts with strictly less than $n_\mathrm{r}$ successes. If the column fails at row $i^*$, then there were $i^*-1$ successes. The $c$-th attempt must be a failure, so these $i^*-1$ successes must come from the first $c-1$ attempts.
The distribution of $(K'_{n_\mathrm{c}},O'_{n_\mathrm{c}})$ for the last column $j=n_\mathrm{c}$ is similar. However, in this column only $n_\mathrm{r}-1$ successes are required since no further generation is needed at qubit $(n_\mathrm{r}, n_\mathrm{c})$. Hence, the joint distribution $(K'_{n_\mathrm{c}},O'_{n_\mathrm{c}})$ is obtained by using $n_\mathrm{r}-1$ instead of $n_\mathrm{r}$ in the above, so that
\begin{align}
    \PP(K'_{n_\mathrm{c}}=k, O'_{n_\mathrm{c}}=s) &= \binom{k-1}{n_\mathrm{r}-2}p^{n_\mathrm{r}-1}(1-p)^{k-(n_\mathrm{r}-1)}\qquad \mathrm{for}\quad k\in \{n_\mathrm{r}-1,\dots,c\}\\
    \PP(K'_{n_\mathrm{c}}=c, O'_{n_\mathrm{c}}=f) &= \sum_{i^*=1}^{n_\mathrm{r}-1}\binom{c-1}{i^*-1}p^{i^*-1} (1-p)^{c-(i^*-1)}.
\end{align}
With this, the probability generating function for the number of attempts to create the full brickwork graph with columnwise cutoffs is of the form \cref{eq: generating function linear graph}.  
\begin{proposition}\label{prop: generating fucntion brickwork graph}
    The probability generating function for the duration of the busy period $B'$ for the brickwork graph with $n_\mathrm{r}>1$ rows and columnwise cutoffs is given by
    \begin{equation}\label{eq: generating function brickwork}
        G_{B'}(x) = S_1'(x)\dots S'_{n_\mathrm{c}}(x)+  \sum_{j^*=1}^{n_\mathrm{c}}S'_1(x)\dots S'_{j^*-1}(x)F'_{j^*}(x),
    \end{equation}
    where 
    \begin{equation}\label{eq: definition defective generating function brickwork graph}
    S'_j(x):= \EE\big(x^{K'_j}\indicator(O'_j=s)\big) \qquad \mathrm{and}\qquad F'_j(x):= \EE\big(x^{K'_j}\indicator(O'_j=f)\big).
    \end{equation}
    are the defective generating functions for $K'_j$ for $O'_j=s$ or $O'_j=f$, respectively.
\end{proposition}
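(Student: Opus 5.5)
The plan mirrors the proof of \cref{prop: generating fucntion linear graph}, with columns taking the role of single qubits. The first step is to show that under the columnwise cutoff the busy period $B'$ splits into independent column stages. For each column $j$ I introduce the pair $(K'_j,O'_j)$. Here $K'_j$ counts the attempts spent in column $j$, and $O'_j\in\{s,f\}$ records whether the column completes within $c$ attempts.

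Under \cref{eq: condition succ B'} and \cref{eq: condition fail B'}, the cutoff test for column $j$ depends only on the attempts $k_{1,j},\dots,k_{n_\mathrm{r},j}$ made inside that column. Every attempt succeeds independently with probability $p$, and the counter resets at the start of each column. Hence the pairs $(K'_j,O'_j)$ for $j=1,\dots,n_\mathrm{c}$ are mutually independent, by the memorylessness of the Bernoulli attempts. Their laws are the ones given just before the proposition: a truncated negative binomial with $n_\mathrm{r}$ successes for $j<n_\mathrm{c}$, and with $n_\mathrm{r}-1$ successes for $j=n_\mathrm{c}$. In the last column only $n_\mathrm{r}-1$ successes are needed, which is why $n_\mathrm{r}>1$ is assumed. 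I will justify this independence carefully by writing $\PP(\kvec)$ from \cref{eq: success of path} as a product over columns. The path-termination conditions in \cref{eq: condition fail B'} factor the same way column by column.

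The second step is the pathwise decomposition of the busy period, exactly as in the linear case:
\begin{equation*}
B' = \Big(\sum_{j=1}^{n_\mathrm{c}}K'_j\Big)\prod_{j=1}^{n_\mathrm{c}}\indicator(O'_j=s) + \sum_{j^*=1}^{n_\mathrm{c}}\Big(\sum_{j=1}^{j^*}K'_j\Big)\prod_{j=1}^{j^*-1}\indicator(O'_j=s)\,\indicator(O'_{j^*}=f).
\end{equation*}
This identity holds because the events ``all columns succeed'' and ``the first failure occurs at column $j^*$'' for $j^*=1,\dots,n_\mathrm{c}$ are disjoint and together exhaust all outcomes. I also check consistency with \cref{eq: duration B'}: on a failed column, $K'_{j^*}=c$, which matches the $+c$ term in that formula.

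The third step takes $\EE(x^{B'})$ of this decomposition. Because the indicators partition the sample space, $x^{B'}$ splits into the same sum of terms. Each term is a product of functions of independent pairs, so its expectation factors into the product of the defective generating functions $S'_j$ and $F'_{j^*}$. This gives \cref{eq: generating function brickwork}.

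The only delicate point is the first step, namely the bookkeeping of the busy period. The busy period begins after qubit $(1,1)$ is generated, so I must decide which column that first success is charged to. I will use the convention implicit in the stated distributions: a failure is decided by the count within column $j$ alone. Once that convention is fixed, the independence argument and the factorisation are routine.
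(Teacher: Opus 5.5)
Your proposal is correct and is essentially the paper's own argument. The paper says only that the proof of the linear-graph proposition carries over with columns in place of qubits, the one change being that failure can now occur in column $n_\mathrm{c}$, so $K'_{n_\mathrm{c}}$ appears in $B'$ and the sum over $j^*$ runs up to $n_\mathrm{c}$. That is exactly your decomposition, and your product-form argument for independence and your partition argument for $x^{B'}$ spell out what the paper leaves implicit.

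One small caution: the identity uses $S'_j$ and $F'_j$ only as abstract defective generating functions, so you do not need, and should not assert, the explicit column laws printed before the proposition. Your own column-by-column bookkeeping gives the failure atom $\PP(K'_j=c,O'_j=f)=\sum_{m=0}^{n_\mathrm{r}-1}\binom{c}{m}p^m(1-p)^{c-m}$ for $j<n_\mathrm{c}$, since the $c$-th attempt may itself be a success. The paper's $\binom{c-1}{m}$ version omits that case, though this has no bearing on the proposition.
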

\begin{proof} 
The proof is identical in structure to that of \cref{prop: generating fucntion linear graph}. The only difference is that here the graph generation can fail in column $n_\mathrm{c}$ so that $B'$ contains a term $K'_{n_\mathrm{c}}$ and the sum over $j^*$ in \cref{eq: generating function brickwork} runs through to $j^*=n_\mathrm{c}$.
\end{proof}
Closed-form expressions for the defective generating functions $S'_j(x)$ and $F'_j(x)$ of the brickwork graph with columnwise cutoffs can be found explicitly from the joint distributions for $(K'_j,O'_j)$. We remark that these generating functions are in fact similar to those found in the analysis of sequential repeater chains with cutoffs \cite{kamin2023exact}. 

The success probability of the busy period with columnwise cutoffs is
\begin{equation}
    p_{B'} = S'_1(1)\cdots S'_{n_\mathrm{c}}(1).
\end{equation}
To lowest order in $p$ we have 
\begin{equation}
    S'_j(1) = \sum_{k=n_\mathrm{r}}^c \binom{k-1}{n_\mathrm{r}-1}p^{n_\mathrm{r}}(1-p)^{k-n_\mathrm{r}} = \sum_{k=n_\mathrm{r}}^c \binom{k-1}{n_\mathrm{r}-1}p^{n_\mathrm{r}} + O(p^{n_\mathrm{r}+1}) = \binom{c}{n_\mathrm{r}}p^{n_\mathrm{r}} + O(p^{n_\mathrm{r}+1}),
\end{equation}
where the last equality uses the hockey-stick identity for the binomial coefficient. 
Similarly, for $j=n_\mathrm{c}$ we find
\begin{equation}
    S'_{n_\mathrm{c}}(1) = \binom{c}{n_\mathrm{r}-1}p^{n_\mathrm{r}-1} + O(p^{n_\mathrm{r}}).
\end{equation}
Hence, the lowest order $p$-scaling of the success probability for the busy period with columnwise cutoff $c$ is
\begin{equation}
    p_{B'(c)} = \binom{c}{n_\mathrm{r}}^{n_\mathrm{c}-1}\binom{c}{n_\mathrm{r}-1} p^{n-1}+ O(p^n).
\end{equation}
The duration of the busy period is upperbounded independently of $p$ by $B'(c)\leq n_\mathrm{c}c$ since every column can take at most $c$ attempts. The scaling of expected number of attempts to create the brickwork graph with columnwise cutoff $c$ is thus
\begin{equation}\label{eq: lowest order p scaling columnwise cutoff brickwork}
    \EE\Big(K'(c)\Big)\lesssim \frac{1/p + n_\mathrm{c}c}{\binom{c}{n_\mathrm{r}}^{n_\mathrm{c}-1}\binom{c}{n_\mathrm{r}-1}p^{n-1}}.
\end{equation}
For $1/p\gg n_\mathrm{c}c$ and $c\gg n_\mathrm{r}$, this yields the scalings in \cref{eq: brickwork scaling column}.

The bounds in \cref{cor: Bounds on expected time to create full graph} can also be evaluated numerically. For the \textit{Mathematica} notebook that does this we refer to our repository at \url{https://gitlab.tudelft.nl/wehner-research/bqc_graph_state_creation}. In \cref{fig:examples bounds} we show examples of the bounds the expected number of attempts to generate brickwork graph states. In the parameter regime shown, the upper and lowerbounds obtained from the columnwise cutoffs are tight for large values of $p$, but differ by multiple orders of magnitude for smaller values of $p$.

\begin{figure}
  \centering
  \begin{subfigure}{0.32\textwidth}
    \centering
    \includegraphics[width=\linewidth]{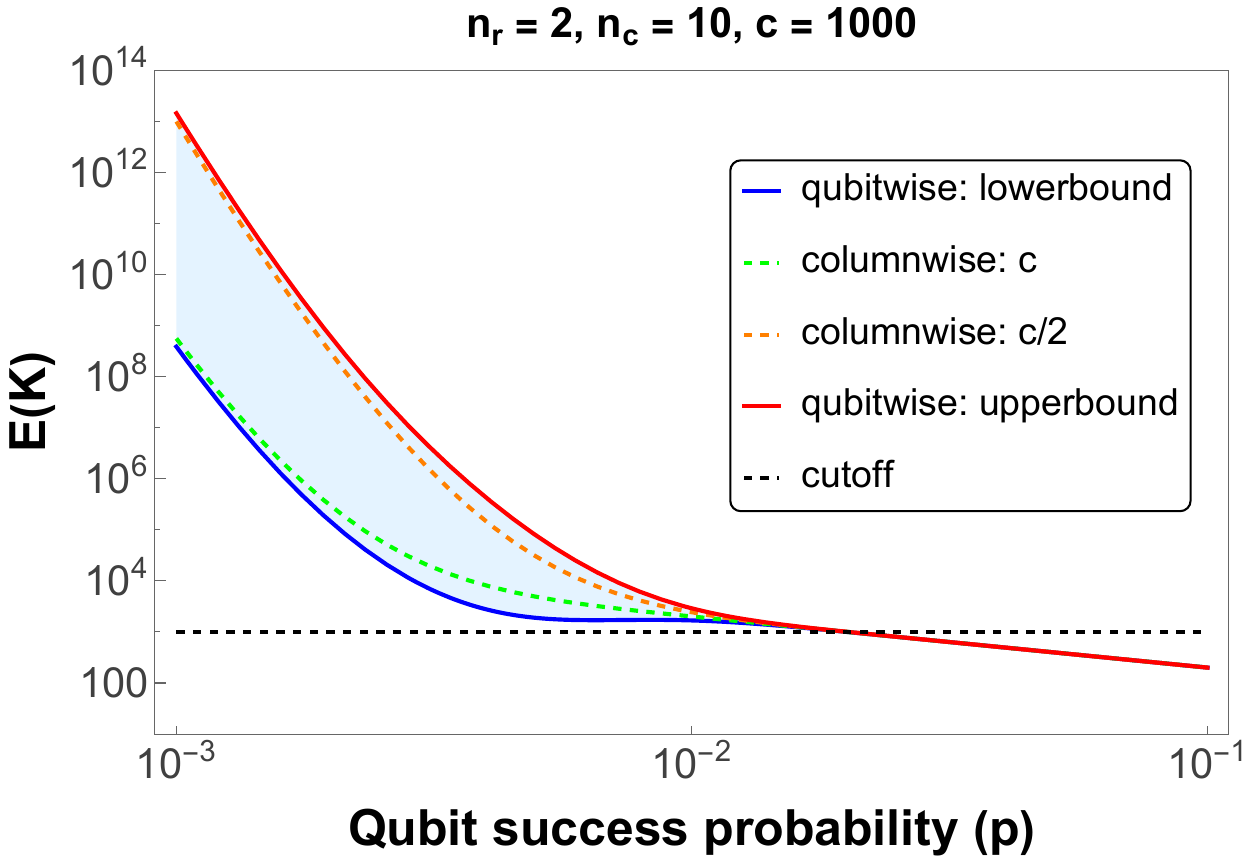}
    \caption{}
    \label{fig:nr2nc10c1000_incl_column}
  \end{subfigure}
  \hfill
  \begin{subfigure}{0.32\textwidth}
    \centering
    \includegraphics[width=\linewidth]{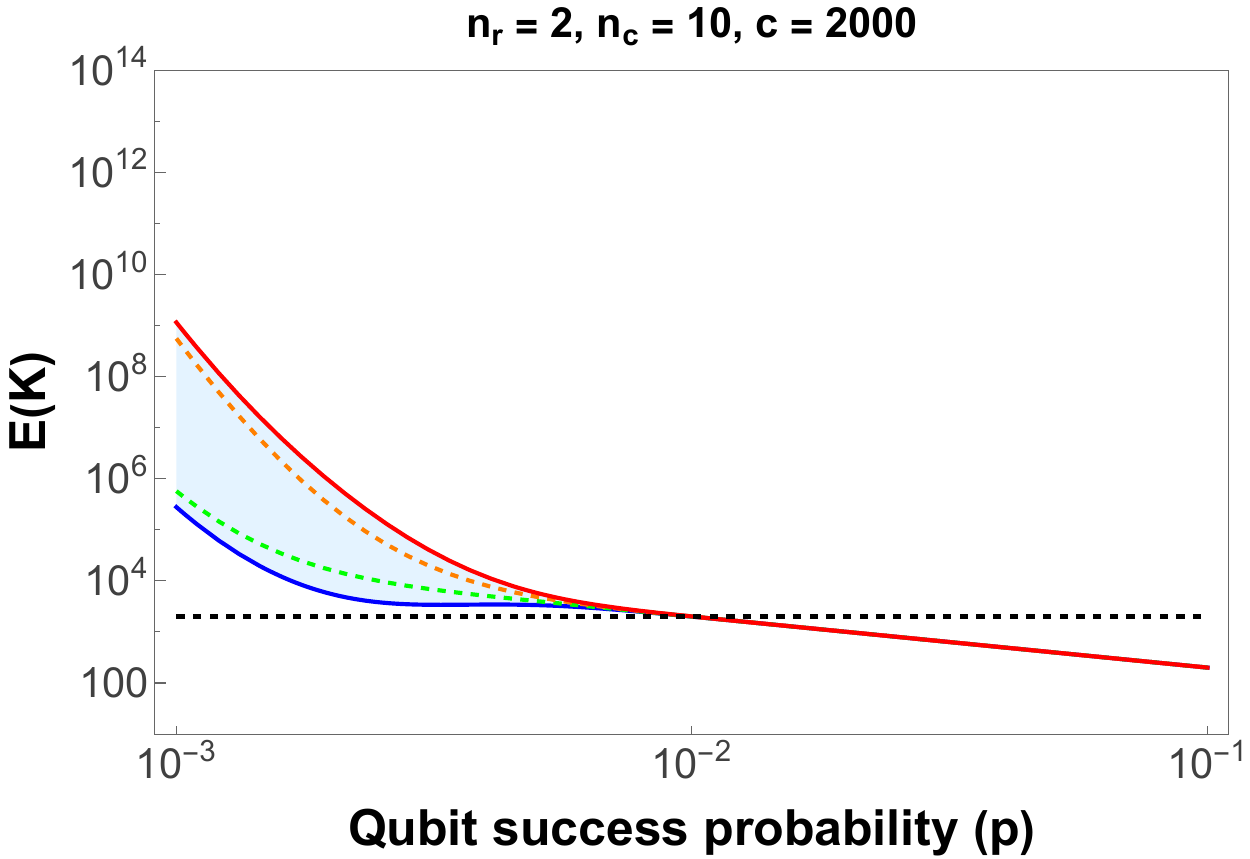}
    \caption{}
    \label{fig:nr2nc10c2000}
  \end{subfigure}
  \hfill
  \begin{subfigure}{0.32\textwidth}
    \centering
    \includegraphics[width=\linewidth]{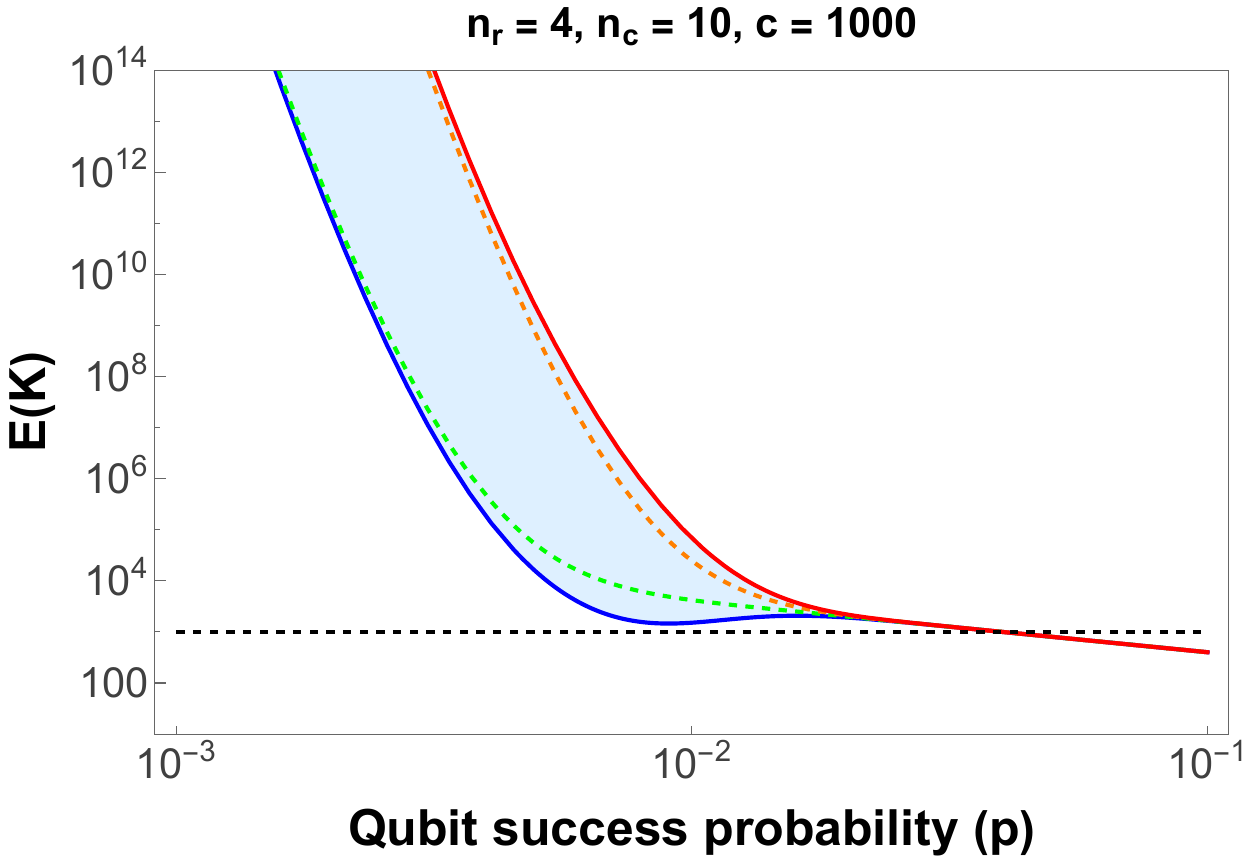}
    \caption{}
    \label{fig:nr4nc10c1000}
  \end{subfigure}
  \caption{Expected number of attempts for generating brickwork graph state with cutoffs. Legend of (a) applies to (b) and (c) as well. (a) We show the upper and lowerbounds to the expected number of attempts to generate the brickwork graph state with qubitwise cutoffs $n_\mathrm{r}=2$ and $n_\mathrm{c}=10$ with cutoff $c=1000$ in the regime $p\in [10^{-3},10^{-1}]$, as in \cref{fig:nr2nc10c1000}. The cutoff $c$ is shown as a horizontal dashed line. We also include the expected number of attempts for columnwise cutoffs with cutoff $c$ and $c/2$ explicitly. It can be seen that as $p$ decreases the upperbound becomes similar to a columnwise cutoff of $c/2$ and the lowerbound becomes similar to a columnwise cutoff of $c$. In (b) the cutoff is increased to $c=2000$. In (c) the number of rows in the graph state is increased to $n_\mathrm{r}=4$. In all cases, the expected time to generate the graph state blows up when $p$ is no longer large compared to $n_\mathrm{r}/c$. Comparing (a) to (b), it can be seen that when the cutoff is increased, the number of attempts decreases. On the other hand, comparing (a) to (c), it can be seen that when the number of rows is increased the expected number of attempts increases. We also notice that the lowerbound is not monotonic in $p$, which indicates that it can be improved further.}
  \label{fig:examples bounds}
\end{figure}

%% file: sourcefile_manuscript_adjusted.bbl
\begin{thebibliography}{10}

\bibitem{broadbent2009universal}
Anne Broadbent, Joseph Fitzsimons, and Elham Kashefi.
\newblock Universal blind quantum computation.
\newblock In {\em 2009 50th annual IEEE symposium on foundations of computer
  science}, pages 517--526. IEEE, 2009.

\bibitem{fitzsimons2017unconditionally}
Joseph~F Fitzsimons and Elham Kashefi.
\newblock Unconditionally verifiable blind quantum computation.
\newblock {\em Physical Review A}, 96(1):012303, 2017.

\bibitem{morimae2013secure}
Tomoyuki Morimae and Keisuke Fujii.
\newblock Secure entanglement distillation for double-server blind quantum
  computation.
\newblock {\em Physical Review Letters}, 111:020502, Jul 2013.

\bibitem{sheng2015deterministic}
Yu-Bo Sheng and Lan Zhou.
\newblock Deterministic entanglement distillation for secure double-server
  blind quantum computation.
\newblock {\em Scientific reports}, 5(1):7815, 2015.

\bibitem{li2014triple}
Qin Li, Wai~Hong Chan, Chunhui Wu, and Zhonghua Wen.
\newblock Triple-server blind quantum computation using entanglement swapping.
\newblock {\em Physical Review A}, 89(4):040302, 2014.

\bibitem{quan2023verifiable}
Junyu Quan, Qin Li, and Lvzhou Li.
\newblock Verifiable blind quantum computation with identity authentication for
  multi-type clients.
\newblock {\em IEEE Transactions on Information Forensics and Security}, 2023.

\bibitem{qline}
Beatrice Polacchi, Dominik Leichtle, Leonardo Limongi, Gonzalo Carvacho,
  Giorgio Milani, Nicol{\`o} Spagnolo, Marc Kaplan, Fabio Sciarrino, and Elham
  Kashefi.
\newblock Multi-client distributed blind quantum computation with the qline
  architecture.
\newblock {\em Nature Communications}, 14(1):7743, 2023.

\bibitem{bruzewicz2019trapped}
Colin~D. Bruzewicz, John Chiaverini, Robert McConnell, and Jeremy~M. Sage.
\newblock Trapped-ion quantum computing: Progress and challenges.
\newblock {\em Applied Physics Reviews}, 6(2):021314, 2019.

\bibitem{doherty2013nitrogen}
Marcus~W. Doherty, Neil~B. Manson, Paul Delaney, Fedor Jelezko, J{\"o}rg
  Wrachtrup, and Lloyd C.~L. Hollenberg.
\newblock The nitrogen-vacancy colour centre in diamond.
\newblock {\em Physics Reports}, 528(1):1--45, 2013.

\bibitem{henriet2020quantum}
Lo{\"\i}c Henriet, Lucas Beguin, Adrien Signoles, Thierry Lahaye, Antoine
  Browaeys, Georges-Olivier Reymond, and Christophe Jurczak.
\newblock Quantum computing with neutral atoms.
\newblock {\em Quantum}, 4:327, 2020.

\bibitem{mahadev2018classical}
Urmila Mahadev.
\newblock Classical verification of quantum computations.
\newblock In {\em 2018 IEEE 59th Annual Symposium on Foundations of Computer
  Science (FOCS)}, pages 259--267, 2018.

\bibitem{broadbent2015quantum}
Anne Broadbent and Stacey Jeffery.
\newblock Quantum homomorphic encryption for circuits of low {T}-gate
  complexity.
\newblock In {\em Advances in Cryptology--CRYPTO 2015}, pages 609--629.
  Springer, 2015.

\bibitem{ma2022qenclave}
Yao Ma, Elham Kashefi, Myrto Arapinis, Kaushik Chakraborty, and Marc Kaplan.
\newblock Qenclave-a practical solution for secure quantum cloud computing.
\newblock {\em npj Quantum Information}, 8(1):128, 2022.

\bibitem{raussendorf2001one}
Robert Raussendorf and Hans~J Briegel.
\newblock A one-way quantum computer.
\newblock {\em Physical Review Letters}, 86(22):5188, 2001.

\bibitem{hein2004multiparty}
Marc Hein, Jens Eisert, and Hans~J Briegel.
\newblock Multiparty entanglement in graph states.
\newblock {\em Physical Review A—Atomic, Molecular, and Optical Physics},
  69(6):062311, 2004.

\bibitem{childs2005secure}
Andrew~M Childs.
\newblock Secure assisted quantum computation.
\newblock {\em Quantum Information \& Computation}, 5(6):456--466, 2005.

\bibitem{gheorghiu2019verification}
Alexandru Gheorghiu, Theodoros Kapourniotis, and Elham Kashefi.
\newblock Verification of quantum computation: An overview of existing
  approaches.
\newblock {\em Theory of computing systems}, 63(4):715--808, 2019.

\bibitem{morimae2014verification}
Tomoyuki Morimae.
\newblock Verification for measurement-only blind quantum computing.
\newblock {\em Physical Review A}, 89(6):060302(R), 2014.

\bibitem{hayashi2015verifiable}
Masahito Hayashi and Tomoyuki Morimae.
\newblock Verifiable measurement-only blind quantum computing with stabilizer
  testing.
\newblock {\em Physical review letters}, 115(22):220502, 2015.

\bibitem{eberhard1989quantum}
Phillippe~H Eberhard and Ronald~R Ross.
\newblock Quantum field theory cannot provide faster-than-light communication.
\newblock {\em Foundations of Physics Letters}, 2(2):127--149, 1989.

\bibitem{vandam2025single}
Janice van Dam, Emil~R Hellebek, Tzula~B Propp, Junior~R Gonzales-Ureta,
  Anders~S S{\o}rensen, and Stephanie~DC Wehner.
\newblock Single-click protocols for remote state preparation using weak
  coherent pulses.
\newblock {\em arXiv preprint arXiv:2508.14857}, 2025.

\bibitem{barrett2005efficient}
Sean~D Barrett and Pieter Kok.
\newblock Efficient high-fidelity quantum computation using matter qubits and
  linear optics.
\newblock {\em Physical Review A—Atomic, Molecular, and Optical Physics},
  71(6):060310, 2005.

\bibitem{campbell2008measurement}
Earl~T Campbell and Simon~C Benjamin.
\newblock Measurement-based entanglement under conditions of extreme photon
  loss.
\newblock {\em Physical review letters}, 101(13):130502, 2008.

\bibitem{pinotsi2008single}
Dorothea Pinotsi and Atac Imamoglu.
\newblock Single photon absorption by a single quantum emitter.
\newblock {\em Physical review letters}, 100(9):093603, 2008.

\bibitem{rosenblum2017analysis}
Serge Rosenblum, Adrien Borne, and Barak Dayan.
\newblock Analysis of deterministic swapping of photonic and atomic states
  through single-photon raman interaction.
\newblock {\em Physical Review A}, 95(3):033814, 2017.

\bibitem{duan2004scalable}
L-M Duan and HJ~Kimble.
\newblock Scalable photonic quantum computation through cavity-assisted
  interactions.
\newblock {\em Physical review letters}, 92(12):127902, 2004.

\bibitem{kikura2025passive}
Seigo Kikura, Kazufumi Tanji, Akihisa Goban, and Shinichi Sunami.
\newblock Passive quantum interconnects: High-fidelity quantum networking at
  higher rates with less overhead.
\newblock {\em arXiv preprint arXiv:2507.01229}, 2025.

\bibitem{vandam2024hardware}
Janice van Dam, Guus Avis, Tz~B Propp, Francisco Ferreira~da Silva, Joshua~A
  Slater, Tracy~E Northup, and Stephanie Wehner.
\newblock Hardware requirements for trapped-ion-based verifiable blind quantum
  computing with a measurement-only client.
\newblock {\em Quantum Science and Technology}, 9(4):045031, 2024.

\bibitem{kashefi2024verification}
Elham Kashefi, Dominik Leichtle, Luka Music, and Harold Ollivier.
\newblock Verification of quantum computations without trusted preparations or
  measurements.
\newblock {\em arXiv preprint arXiv:2403.10464}, 2024.

\bibitem{li2021blind}
Qin Li, Chengdong Liu, Yu~Peng, Fang Yu, and Cai Zhang.
\newblock Blind quantum computation where a user only performs single-qubit
  gates.
\newblock {\em Optics \& Laser Technology}, 142:107190, 2021.

\bibitem{wu2023blind}
Guang-Yang Wu, Zhen Yang, Yu-Zhan Yan, Yuan-Mao Luo, Ming-Qiang Bai, and
  Zhi-Wen Mo.
\newblock Blind quantum computation with a client performing different
  single-qubit gates.
\newblock {\em Chinese Physics B}, 32(11):110302, 2023.

\bibitem{takeuchi2018verification}
Yuki Takeuchi and Tomoyuki Morimae.
\newblock Verification of many-qubit states.
\newblock {\em Phys. Rev. X}, 8:021060, 2018.

\bibitem{leichtle2021verifying}
Dominik Leichtle, Luka Music, Elham Kashefi, and Harold Ollivier.
\newblock Verifying bqp computations on noisy devices with minimal overhead.
\newblock {\em PRX Quantum}, 2(4):040302, 2021.

\bibitem{kapourniotis2024unifying}
Theodoros Kapourniotis, Elham Kashefi, Dominik Leichtle, Luka Music, and Harold
  Ollivier.
\newblock Unifying quantum verification and error-detection: theory and tools
  for optimisations.
\newblock {\em Quantum Science and Technology}, 9(3):035036, 2024.

\bibitem{dunjko2014composable}
Vedran Dunjko, Joseph~F Fitzsimons, Christopher Portmann, and Renato Renner.
\newblock Composable security of delegated quantum computation.
\newblock In {\em International Conference on the Theory and Application of
  Cryptology and Information Security}, pages 406--425. Springer, 2014.

\bibitem{garnier2024composably}
Maxime Garnier, Dominik Leichtle, Luka Music, and Harold Ollivier.
\newblock Composably secure delegated quantum computation with weak coherent
  pulses.
\newblock In {\em 2024 International Conference on Quantum Communications,
  Networking, and Computing (QCNC)}, pages 221--225. IEEE, 2024.

\bibitem{kapourniotis2023asymmetric}
Theodoros Kapourniotis, Elham Kashefi, Dominik Leichtle, Luka Music, and Harold
  Ollivier.
\newblock Asymmetric quantum secure multi-party computation with weak clients
  against dishonest majority.
\newblock {\em arXiv preprint arXiv:2303.08865}, 2023.

\bibitem{vandam2026optimizing}
Janice van Dam, Michal van Hooft, and Stephanie~D.C. Wehner.
\newblock Optimizing resource costs: A practical guide to achieving target
  security in verifiable blind quantum computing.
\newblock {\em arXiv preprint arXiv:2606.28139}, 2026.

\bibitem{krutyanskiy2023entanglement}
Viktor Krutyanskiy, Maria Galli, Vojtech Krcmarsky, Simon Baier, DA~Fioretto,
  Yunfei Pu, Azadeh Mazloom, Pavel Sekatski, Marco Canteri, Markus Teller,
  et~al.
\newblock Entanglement of trapped-ion qubits separated by 230 meters.
\newblock {\em Physical Review Letters}, 130(5):050803, 2023.

\bibitem{danos2007measurement}
Vincent Danos, Elham Kashefi, and Prakash Panangaden.
\newblock The measurement calculus.
\newblock {\em Journal of the ACM (JACM)}, 54(2):8--es, 2007.

\bibitem{ross1995stochastic}
Sheldon~M Ross.
\newblock {\em Stochastic processes}.
\newblock John Wiley \& Sons, 1995.

\bibitem{reiher2017elucidating}
Markus Reiher, Nathan Wiebe, Krysta~M Svore, Dave Wecker, and Matthias Troyer.
\newblock Elucidating reaction mechanisms on quantum computers.
\newblock {\em Proceedings of the national academy of sciences},
  114(29):7555--7560, 2017.

\bibitem{propp2025quantum}
Tzula~B Propp, Jeroen Grimbergen, Emil~R Hellebek, Junior~R Gonzales-Ureta,
  Janice van Dam, Joshua~A Slater, Anders~S S{\o}rensen, and Stephanie~DC
  Wehner.
\newblock Quantum strategies to overcome classical multiplexing limits.
\newblock {\em arXiv preprint arXiv:2510.06099}, 2025.

\bibitem{legero2003time}
Thomas Legero, Tatjana Wilk, Axel Kuhn, and Gerhard Rempe.
\newblock Time-resolved two-photon quantum interference.
\newblock {\em Applied Physics B}, 77(8):797--802, 2003.

\bibitem{meyer2015direct}
HM~Meyer, R~Stockill, M~Steiner, C~Le~Gall, Clemens Matthiesen, E~Clarke,
  A~Ludwig, J~Reichel, M~Atat{\"u}re, and M~K{\"o}hl.
\newblock Direct photonic coupling of a semiconductor quantum dot and a trapped
  ion.
\newblock {\em Physical review letters}, 114(12):123001, 2015.

\bibitem{purcell1946resonance}
Edward~M Purcell, Henry~Cutler Torrey, and Robert~V Pound.
\newblock Resonance absorption by nuclear magnetic moments in a solid.
\newblock {\em Physical review}, 69(1-2):37, 1946.

\bibitem{drmota2024verifiable}
Peter Drmota, David~P Nadlinger, Dougal Main, Bethan~C Nichol, Ellis~M Ainley,
  Dominik Leichtle, Atul Mantri, Elham Kashefi, Raghavendra Srinivas, Gabriel
  Araneda, Chris~J Ballance, and David~M Lucas.
\newblock Verifiable blind quantum computing with trapped ions and single
  photons.
\newblock {\em Physical Review Letters}, 132(15):150604, 2024.

\bibitem{mahadev2020classical}
Urmila Mahadev.
\newblock Classical homomorphic encryption for quantum circuits.
\newblock {\em SIAM Journal on Computing}, 52(6):FOCS18--189, 2020.

\bibitem{gheorghiu2019computationally}
Alexandru Gheorghiu and Thomas Vidick.
\newblock Computationally-secure and composable remote state preparation.
\newblock In {\em 2019 IEEE 60th Annual Symposium on Foundations of Computer
  Science (FOCS)}, pages 1024--1033. IEEE, 2019.

\bibitem{davies2023tools}
Bethany Davies, Thomas Beauchamp, Gayane Vardoyan, and Stephanie Wehner.
\newblock Tools for the analysis of quantum protocols requiring state
  generation within a time window.
\newblock {\em arXiv preprint arXiv:2304.12673}, 2023.

\bibitem{kamin2023exact}
Lars Kamin, Evgeny Shchukin, Frank Schmidt, and Peter Van~Loock.
\newblock Exact rate analysis for quantum repeaters with imperfect memories and
  entanglement swapping as soon as possible.
\newblock {\em Physical Review Research}, 5(2):023086, 2023.

\end{thebibliography}
